\newcolumntype{L}[1]{>{\raggedright\let\newline\\\arraybackslash\hspace{0pt}}m{#1}}
\newcolumntype{C}[1]{>{\centering\let\newline\\\arraybackslash\hspace{0pt}}m{#1}}
\newcolumntype{R}[1]{>{\raggedleft\let\newline\\\arraybackslash\hspace{0pt}}m{#1}}
\crefname{lemma}{lemma}{lemmas}
\crefname{proposition}{proposition}{propositions}
\crefname{definition}{definition}{definitions}
\crefname{theorem}{theorem}{theorems}
\crefname{conjecture}{conjecture}{conjectures}
\crefname{corollary}{corollary}{corollaries}
\crefname{example}{example}{examples}
\crefname{section}{section}{sections}
\crefname{appendix}{appendix}{appendices}
\crefname{figure}{fig.}{figs.}
\crefname{equation}{eq.}{eqs.}
\crefname{table}{table}{tables}
\crefname{item}{property}{properties}
\crefname{remark}{remark}{remarks}
\newtheorem{theorem}{Theorem}
\newtheorem{definition}[theorem]{Definition}
\newtheorem{corollary}[theorem]{Corollary}
\newtheorem{lemma}[theorem]{Lemma}
\newtheorem{remark}[theorem]{Remark}
\DeclareMathOperator{\poly}{poly}
\DeclareMathOperator{\spn}{span}
\DeclareMathOperator{\BigO}{O}
\DeclareMathOperator{\gs}{\mathcal{L}}
\newcommand{\op}{\mathbf}
\renewcommand{\H} {{\ensuremath{\mathcal H}}\xspace}
\newcommand{\1} {\ensuremath{\mathds 1}}
\newcommand{\lham}{\textsc{Local Hamiltonian}\xspace}
\newcommand{\sat}{\textsc{3-sat}\xspace}
\def\munderbar#1{\underline{\sbox\tw@{$#1$}\dp\tw@\z@\box\tw@}}
\newcommand{\field}[1] {\mathds{#1}}
\renewcommand{\P}{\textnormal{{P}}\xspace}
\newcommand{\NP}{\textnormal{{NP}}\xspace}
\newcommand{\BQP}{\textnormal{{BQP}}\xspace}
\newcommand{\BQPSPACE}{\textnormal{{BQPSPACE}}\xspace}
\newcommand{\PSPACE}{\textnormal{{PSPACE}}\xspace}
\newcommand{\PQPSPACE}{\textnormal{{PQPSPACE}}\xspace}
\newcommand{\BQEXP}{\textnormal{{BQEXP}}\xspace}
\newcommand{\BQEXPSPACE}{\textnormal{{BQEXPSPACE}}\xspace}
\newcommand{\QMA}{\textnormal{{QMA}}\xspace}
\newcommand{\QMAEXP}{\textnormal{{QMA\textsubscript{EXP}}}\xspace}
\newcommand{\StoqMA}{\textnormal{{StoqMA}}\xspace}
\newcommand{\yes}{\textnormal{{YES}}\xspace}
\newcommand{\no}{\textnormal{{NO}}\xspace}
\DeclareMathOperator{\lmin}{\lambda_\mathrm{min}}
\newcommand{\subalign}[1]{%
  \vcenter{%
    \Let@ \restore@math@cr \default@tag
    \baselineskip\fontdimen10 \scriptfont\tw@
    \advance\baselineskip\fontdimen12 \scriptfont\tw@
    \lineskip\thr@@\fontdimen8 \scriptfont\thr@@
    \lineskiplimit\lineskip
    \ialign{\hfil$\m@th\scriptstyle##$&$\m@th\scriptstyle{}##$\crcr
      #1\crcr
    }%
  }
}
\newcommand{\listoftodos}{}
\renewenvironment{abstract}
 {\small
    \list{}{
    \setlength{\leftmargin}{0cm}%
    \setlength{\rightmargin}{\leftmargin}%
  }%
  \item\relax}
 {\endlist}
 \renewcommand{\maketitle}{\bgroup\setlength{\parindent}{0pt}
 \begin{flushleft}
   \textbf{\huge\@title}\\[.5cm]
   \Large\@author
 \end{flushleft}\egroup
 }
\begin{document}
\thispagestyle{empty}
\listoftodos
\newpage

\title{Perturbation Gadgets:\\Arbitrary Energy Scales\\[4mm]from a Single Strong Interaction}
\author{\textbf{Johannes Bausch}$^1$\\$^1$\normalsize{DAMTP, CQIF, University of Cambridge, \texttt{jkrb2@cam.ac.uk}}}


\maketitle
\thispagestyle{empty}
\enlargethispage{2cm}

\begin{abstract}
\textbf{Abstract.} Fundamentally, it is believed that interactions between physical objects are two-body.
Perturbative gadgets are one way to break up an effective many-body coupling into pairwise interactions: a Hamiltonian with high interaction strength introduces a low-energy space in which the effective theory appears $k$-body and approximates a target Hamiltonian to within precision $\epsilon$.
One caveat of existing constructions is that the interaction strength generally scales exponentially in the locality of the terms to be approximated.

In this work we propose a many-body Hamiltonian construction which introduces only a single separate energy scale of order $\Theta(1/N^{2+\delta})$, for a small parameter $\delta>0$, and  for $N$ terms in the target Hamiltonian.
In its low-energy subspace, we can approximate any normalized target Hamiltonian $\op H_\mathrm{t}=\sum_{i=1}^N \op h_i$
with norm ratios $r=\max_{i,j\in\{1,\ldots,N\}}\|\op h_i\| / \| \op h_j \|=\BigO(\exp(\exp(\poly n)))$ to within \emph{relative} precision $\BigO(N^{-\delta})$.
This comes at the expense of increasing the locality by at most one, and adding an at most poly-sized ancilliary system for each coupling; interactions on the ancilliary system are geometrically local, and can be translationally-invariant.

In order to prove this claim, we borrow a technique from high energy physics---where matter fields obtain effective properties (such as mass) from interactions with an exchange particle---and a tiling Hamiltonian to drop all cross terms at higher expansion orders, which simplifies the analysis of a traditional Feynman-Dyson series expansion.

As an application, we discuss implications for \QMA-hardness of the \lham problem, and argue that ``almost'' translational invariance---defined as arbitrarily small relative variations of the strength of the local terms---is as good as non-translational-invariance in many of the constructions used throughout Hamiltonian complexity theory.
We furthermore show that the choice of geared limit of many-body systems, where e.g.\ width and height of a lattice are taken to infinity in a specific relation, can have different complexity-theoretic implications: even for translationally-invariant models, changing the geared limit can vary the hardness of finding the ground state energy with respect to a given promise gap from computationally trivial, to \QMAEXP-, or even \BQEXPSPACE-complete.
\end{abstract}
\newpage
\tableofcontents

\section{Introduction}
In nature, the way particles can interact is inherently limited.
Just like in a game of billiards, where under high-enough time resolution every ball-to-ball contact can be discriminated in principle, many-body systems are believed to be governed by two-body interactions.
When we relax the time resolution---and for instance only check the billiard table every half second---it appears as if multiple balls have interacted simultaneously, and one can derive an effective multi-body theory from these observations.

While many-body terms appear in real-world systems, e.g.\ in rare-gas liquids \cite{Jakse2000}, where describing thermodynamic properties accurately requires the introduction of a three-body term, to model polar molecules \cite{Bonnes2010} or phases of charged particles in suspension \cite{Wang2018}, their occurence is rare.
For the field of Hamiltonian complexity theory, which tries to link rigorous complexity-theoretic statements like ``how hard is it to estimate the ground state energy of a local Hamiltonian?'' to realistic systems---e.g.\ by requiring low local dimension, a realistic set of interactions, and nearest-neighbour couplings only---this is of course a conundrum: hardness constructions usually work by mapping a type of constraint satisfaction problem to the interactions of a many-body system.
If the interactions get more restricted, the constraints become easier to solve.

In order to circumnavigate this problem, reductions are typically proven in two steps: at first, one allows the freedom of choosing long-range interactions, which makes the task of embedding a hard problem into a local Hamiltonian significantly easier.
As a second step, one uses a technique called \emph{perturbation gadgets} to break down effective $k$-local terms to two-body couplings.

This breaking down of an effective high-locality interaction into two-body couplings is reminiscent of a renormalization group (RG) step, applied in inverse direction, as e.g.\ a block spin RG applied to the Ising model \cite{Kadanoff1966,Verstraete2005}.
In this example, a square grid of spins interacting via nearest-neighbor Ising couplings $J_0 \sigma_i\sigma_j$ at temperature $T_0$ is ``coarse-grained'' into $2\times 2$ blocks, which can then be described via a variant of the original dynamics, but with different parameters $J_1$ and $T_1$.
A single RG step---i.e.\ from four $2^{i-1} \times 2^{i-1}$ blocks to a block of $2^i \times 2^i$ spins---is thus qualitatively similar to a perturbation gadget, where individual spins are grouped  together to produce an effective interaction at a different length- and energy scale as the original couplings \cite{Cao2015}.
Yet a crucial difference is that for RG flow one intends the effective interactions to remain of the same kind, with potentially different parameters---where e.g.\ if the RG is iterated in the Ising example, $J_i$ and $T_i$  approach (potentially infinite) fixed points for $i\longrightarrow\infty$---a perturbation gadget is used to create \emph{more complex} types of interactions \cite{Cubitt2017}.

Effective theories usually introduce a separate energy scale $\Delta$, which has to increase with the system size in order to suppress the introduced errors.
This scaling is usually quite drastic: to break down a $k$-local interaction to $2$-body with an error $\epsilon$, $\Delta$ commonly has to scale like $\Omega(1/\epsilon^k)$, where $\epsilon=1/\poly n$ in the system size $n$.
Yet having a coupling constant which increases as the system grows is highly unphysical---in particular because the typical polynomial degree of $\epsilon^{-1}$ itself is huge, e.g.\ in the context of \QMA-hardness constructions, where $\epsilon$ scales inverse quadratically in the runtime of the computation, which itself can be an arbitrary polynomial in the system size $n$.

In a recent study \cite{Cao2017a}, the authors have analysed how the scaling of $\Delta$ can be improved by an effective numerical algorithm, which yields tighter bounds than suggested by perturbation theory alone. Yet while the bounds are improved by several orders of magnitude, the asymptotic scaling appears to remain unfavourable (see e.g.\ \cite[fig.\ 5]{Cao2017a}).

In this paper, we propose a novel method which allows the introduction of only a single scaling constant with vastly-reduced overhead as compared to the typical $\Delta$ required in a perturbative expansion.
The aim of this work is not to replace gadget constructions, but to augment them: it can be applied to any construction of a Hamiltonian $\op H$ with various energy scales up to relative strength that scales doubly-exponential in the size of the system, i.e.\ $\exp(\exp(\poly n))$.
However, as in the gadget case we cannot get away with no scaling constant at all.
For our construction, a strong interaction with weight $\BigO(N^{2+\delta})$ is necessary to simulate $\op H$ in an effective subspace up to \emph{relative} accuracy $\BigO(N^{-\delta})$, where $N$ is the number of local terms present in the target Hamiltonian.
We emphasize that this approximation is independent of the original scale $\Delta$ one wishes to obtain.
This comes at a cost: the effective Hamiltonian is normalized to $\BigO(1)$, and one has to introduce an ancilliary system for every interaction present in the original construction that features a scaling operator norm.
The ancilliary system  is a geometrically local and translationally-invariant nearest-neighbour spin chain which couples locally to the system at hand.
This means that we need to potentially increase the locality of the original construction by one---where we emphasize that this is \emph{only} necessary if the interaction with scaling norm are already $k$-local for a $k$-local Hamiltonian.\footnote{A counterexample would, for instance, be a $2$-local Hamiltonian with additional $1$-local on-site interactions that vary; as only the latter will have their locality increased by one, the overall Hamiltonian is still $2$-local.}

While it is true that it seems to defeat the purpose of perturbation gadgets to first break down high-locality interactions to two-body, only then to increase them back to three-local, we argue that our construction improves the picture in two aspects.
\begin{enumerate}
\item Our scaling is independent of the locality of the original construction, and thus superior to e.g.\ stopping perturbation theory of a 10-local Hamiltonian once the interactions are 3-local.
\item We introduce a \emph{relative} overall error only. This is particularly useful for hardness constructions, where e.g.\ a small promise gap of $1/\poly n$ has to maintained. For us, a relative error of say $1/10$ would thus suffice.
\end{enumerate}

\begin{table}[t]
\hspace{-2.1cm}
\small
\begin{tabular}{l | llll}
	\toprule
	                                                       & technique              & \makecell{locality $k$\\order $l$} & $\Delta=\Omega(\cdot)$ & extra terms per interaction      \\ \midrule
	\citeauthor{Piddock2015} \cite{Piddock2015,Bravyi2014} & S-W                    & $l\le4$                & $\epsilon^{-l}\|\op V\|^{l(l+1)}$ &  \\
	\citeauthor{Cao2015} \cite{Cao2015}                    & F-D                    & $k=2,3$                & $\BigO(\epsilon)$                 & $\Omega(\epsilon^{-2},\| \op V\|^{2})$-sized cliques$^\dagger$ \\
	\citeauthor{Kempe2006} \cite{Kempe2006}                & F-D                    & $k=3$                  & $\epsilon^{-3}$                   &  3 ancillas\\
	\citeauthor{Bravyi2011} \cite{Bravyi2011}              & S-W                    & $l\in\field N$         & $\epsilon^{-(l+1)}\|\op V\|^{l+1}/l^2$                     &  \\
	\citeauthor{Jordan2008} \cite{Jordan2008}              & Bloch                  & $k\in\field N$         & $^\ddagger$        &  $k$-sized cliques \\
	\citeauthor{Oliveira2008} \cite{Oliveira2008}          & F-D                    & $k\mapsto \lceil k/2\rceil+1$    & $\epsilon^{-2} (\|\op V\| + r)^6$                         & 1 ancilla$^\S$ \\
 \bottomrule
\end{tabular} 
\caption{Examples for perturbation gadgets using various expansion techniques, with required gap scaling, interaction graph modifications, and coupling scaling $\Delta$ in the parameters: approximation error $\epsilon$, operator norm of the target Hamiltonian $\| \op V \|$.\\
$^\dagger$One per 2-body interaction. The paper contains a direct proposal for three-body interactions; for higher-order terms, the authors also propose taking another gadget to break $k$-body to $2$-body, and then reduce the weight with their method.\\
$^\ddagger$The authors show series convergence for $\Delta>\| \op V\|/k$; no analytical error analysis is given.\\
$^\S$For the mediator gadget: $r$ is $\propto\max\{\|\op A\|,\|\op B\| \}$ for the $k$-local interaction term $\op A\otimes\op B$.
}
\end{table}

\newcommand{\targ}{_\mathrm{t}}
\newcommand{\hsim}{_\mathrm{sim}}
The notion of perturbation gadgets is tightly-linked to the idea of simulation of quantum systems.
The theory is well-developed, and we only summarize the central points here; we focus on the simpler definition in \cite{Bravyi2014}, but refer the reader to \cite{Cubitt2017} for an in-depth discussion.
Formally, the ability to simulate (the static properties of) one quantum system with another means that one can reproduce either the eigenvalues, the eigenvectors---or both---of some target Hamiltonian $\op H\targ$ within some invariant subspace $\mathcal L\subset\H\hsim$ (e.g.\ the low-energy subspace) of a simulator Hamiltonian $\op H\hsim$.

Since the Hilbert spaces on which $\op H\targ$ and $\op H\hsim$ are defined---denoted $\H\targ$ and $\H\hsim$---are usually not identical, we need to allow for an \emph{encoding} map $\mathcal E:\H\targ\mapsto\H\hsim$;
then $\H\hsim$ together with $\mathcal E$ \emph{simulate} $\op H\targ$ with error tuple $(\epsilon,\eta)$ if there exists an isometry $\tilde{\mathcal E}:\H\targ\mapsto\H\hsim$ such the image of $\tilde{\mathcal E}$ is $\mathcal L$, and further $\| \op H\targ-\tilde{\mathcal E}^\dagger\op H\hsim\tilde{\mathcal E} \|\le\epsilon$ and $\| \mathcal E - \tilde{\mathcal E} \|\le\eta$.
Roughly speaking, the first two conditions imply that the eigenvalues of $\op H\targ$ are reproduced up to error $\epsilon$; the latter implies closeness of the eigenvectors up to error $\eta$ (see \cite[def.\ 1, lem.\ 1\&2]{Bravyi2014}).
The reason for this distinction is that while the exact mapping $\tilde{\mathcal E}$ might be very complicated and does not tell us anything about the eigenvectors, we can approximate it via an encoding; since the two maps are close in operator norm we can also reach closeness of the eigenvectors with the effective simulated Hamiltonian.

Since our goal is to reproduce the entire target Hamiltonian within a low-energy space of a simulator Hamiltonian, and since we will employ a well-established series expansion, we will generally disregard the explicit distinction between $\epsilon$ and $\eta$; the self-expansion theorems in \cref{sec:self-energy} capture the two notions of approximation that suffice for our purposes.

\section{Preliminaries}
A Hamiltonian is a hermitian operator $\op H$ on a finite dimensional Hilbert space $\H$.
We say $\op H$ is $n$-body if $\H=(\field C^d)^{\otimes n}$ for some $n,d\in\field N$.
The Hamiltonian $\op H$ is $k$-local if $\op H=\sum_{i=1}^n \op h_i$, and such that $\op h_i$ are Hermitian matrices that each act non-trivially only on $k$ of the subsystems of $\H$.
More precisely, we demand that $\op h_i = \op q_{i,S_i} \otimes \1_{S_i^c}$, where $\op q_{i,S}$ is a Hermitian operator on a subset $S_i \subset \{1,\ldots,n\}$ of size $|S_i|\le k$, and $\1_{S_i^c}$ the identity operation on the complement of $S_i$.
We also call the $\op h_i$ local coupling or interaction terms, and if $\op h_i$ is part of a $k$-local Hamiltonian $\op H$, then $\op h_i$ is---in itself---an at most $k$-body interaction.
Indeed, as mentioned in the abstract, fundamentally physical systems are believed to be interacting via two-body interactions, which means that the Hamiltonian describing such systems is two-local.

If there is a topological structure associated to the Hilbert space $\H$---e.g.\ if each of the $d$-dimensional spaces is associated to the vertices of a graph---then we speak of $\op H$ being \emph{geometrically local} if the local interaction terms $\op h_i$ act in a local fashion with respect to this topology, which usually means that the $k$ vertices that $\op h_i$ acts on have to be connected.
For instance, if the $\op h_i$ are interaction terms between neighbouring $d$-dimensional spins on a grid of side length $L\times L$ (each spin with Hilbert space $\field C^d$, which we also call a d-dimensional qudit), then $\op H$ is a 2-local, $L^2$-body, nearest-neighbour Hamiltonian on a square lattice.

If the topology permits and is e.g.\ like a hyperlattice, we can speak of translational invariance, which means that for all the local terms $\op q_{i,S_i} = \op q_{S_i}$, and $\op H=\sum_{i=1}^n$ is such that the interactions on the underlying graph are invariant under translations---modulo boundary effects; for translationally invariant systems we generally assume open boundary conditions.

The interaction degree of a Hamiltonian is then the maximum number of local terms $\op h_i$ acting non-trivially on any site; it coincides with the degree of the graph describing the interaction topology of $\op H$.
A Hamiltonian with fixed interaction degree then has an interaction degree $\le D$ for some $D\in\field N$, which we keep implicit.
Similarly, we will often leave the locality unspecified when speaking of local Hamiltonians, which simply implies that that the Hamiltonian is $k$-local for some constant $k$.

\subsection{Feynman-Dyson Series}\label{sec:f-d}
Because a lot of our construction hinges on employing a well-known series expansion---the Feynman-Dyson series---and to introduce the notation used throughout the rest of the paper, we will spend some time explaining how to approximate low energy spectra of a sum of a Hamiltonian $\op H$ and a perturbation $\op V$.
We follow the excellent and more thorough introductions within \cite{Kempe2006,Piddock2015}.

Assume we are given a Hamiltonian $\tilde{\op H}:=\op H+\op V$, where $\op H$ has a spectral gap $\Delta$ above its ground space $\gs(\op H)$.
We further assume that $\|\op V\|<\Delta/2$.

\paragraph{Notation.}
Denote the eigenvalues and eigenvectors of $\op H$ ($\tilde{\op H}$) with $\lambda_i$ and $\ket{\psi_i}$ ($\tilde\lambda_i$ and $\ket*{\tilde\psi_i}$), such that $\lmin(\op H)=:\lambda_0$ is the ground state of $\op H$.
Let $\lambda^*:=\lmin(\op H) + \Delta/2$ midway within the spectral gap of $\op H$, and let $\Pi_-$ be the projector onto $\gs(\op H)$---and $\Pi_+$ onto its orthogonal complement, respectively.
We define the \emph{resolvent} of $\op H$ via
\begin{equation}\label{eq:resolvent}
    \op G(z):=(z\1-\op H)^{-1}=\sum_{i}(z-\lambda_i)^{-1}\ketbra{\psi_i},
\end{equation}
and analogously $\tilde{\op G}(z)$ for $\tilde{\op H}$; we note that both resolvents have first order poles at $z=\lambda_i$ or $z=\tilde\lambda_i$, respectively.
The \emph{self-energy} of $\op H$ is then given by
\begin{equation}\label{eq:self-energy}
    \Sigma_-(z):=z\1_--\left[\tilde{\op G}^{-1}(z)\right]_-,
\end{equation}
where the subscripts on an operator $\op A$ are defined via the restriction to the support of the projections $\Pi_\pm$, e.g.\ $\op A_-:=\op A|_{\gs(\op H)}$ (such that $\1_-$ denotes the identity on $\gs(\op A)$), and analogously $\op A_+$ is the restriction to the complement of $\gs(\op H)$.
We will also use the mixed subscripts, best defined in a representation of the Hilbert space $\gs(\op H) \oplus \gs(\op H)^\perp$, where the operator $\op A$ block-decomposes as
\[
\op A = \begin{pmatrix}
\op A_+ & \op A_{+-} \\ \op A_{-+} & \op A_-
\end{pmatrix}.
\]
This also means that the order of operations in \cref{eq:self-energy}---restriction to the low-energy subspace and operator inversion---is irrelelevant for all $z\not\in \{ \lambda_i \}$, i.e.\ where $\tilde{\op G}(z)$ is invertible; for simplicity of notation we thus drop the brackets where appropriate, and identify e.g.\ $\tilde{\op G}^{-1}_-(z) := \left[ \tilde{\op G}^{-1}(z) \right]_-$ (which is thus nothing but $z\1_- - \tilde{\op H}_-$).

If we solve \cref{eq:self-energy} via $\Sigma_-(z) = z\1_- - \left[(z\1_- -\tilde{\op H}_-)^{-1}\right]^{-1} = \tilde{\op H}_-$,
we see that the self-energy $\Sigma_-(z)$ is nothing but the low-energy part of the Hamiltonian $\tilde{\op H}$---where it is important to note that ``low-energy'' in this context means with respect to the spectrum of the unperturbed Hamiltonian $\op H$, \emph{not} $\tilde{\op H}$.
This is not useful per se, though; we do not know how to calculate the effective low-energy Hamiltonian of $\tilde{\op H}$.
On the other hand, we can use a series expansion to approximate it, starting from $\Sigma_-(z)$.
Since $\op G_{+-}^{-1}(z)=\op G_{-+}^{-1}(z)=0$ by construction, note
\begin{align*}
    \tilde{\op G}(z)
    &=(z\1 - \tilde{\op H})^{-1} = (z\1 - \op H - \op V)^{-1}=(\op G^{-1}(z)-\op V)^{-1} \\
    &=\begin{pmatrix}
    \op G_+^{-1}(z) - \op V_+ & -\op V_{+-} \\
    -\op V_{-+} & \op G_-^{-1}(z)-\op V_-
    \end{pmatrix}^{-1}
    =:\begin{pmatrix}
    \op A & \op B \\ \op C & \op D
    \end{pmatrix}^{-1}.    
\end{align*}
The lower-right block of $\tilde{\op G}(z)$ is then given by the Schur complement
\[
    \tilde{\op G}_-^{-1}(z)=\op D-\op C\op A^{-1}\op B  =
    \op G^{-1}_-(z)-\op V_- -\op V_{-+}(\op G_+^{-1}(z)-\op V_+)^{-1}\op V_{+-}.
\]
Dropping the argument $z$ in $\op G_+=\op G_+(z)$ for brevity, we further have
\begin{align*}
    (\op G_+^{-1}-\op V_+)^{-1}
    &=(\op G_+^{-1}(\1_+ -\op G_+\op V_+))^{-1}=(\1_+ -\op G_+\op V_+)^{-1}\op G_+\\
    &=\op G_+ + \op G_+\op V_+\op G_+ + \op G_+\op V_+\op G_+\op V_+\op G_+ + \ldots
\end{align*}
as a geometric series expansion, which converges if $\| \op G_+\op V_+\|<1$.
Under this assumption, we can conclude
\begin{equation}\label{eq:Z-expansion}
    \Sigma_-(z)=\op H_- + \op V_- + \op V_{-+}\op G_+\op V_{+-} + \op V_{-+}\op G_+\op V_+\op G_+\op V_{+-} + \ldots.
\end{equation}

\subsection{Self-Energy Expansion Theorems}\label{sec:self-energy}
There is two major variants of approximations that can result from this self-expansion using the Feynman-Dyson series.
Representative of the literature we quote the following two variants.

\newcommand{\eff}{_\mathrm{eff}}
\begin{theorem}[\citeauthor{Cao2017a} \cite{Cao2017a}]\label{th:pert-1}
Let $\tilde{\op H}=\op H+\op V$ as above, and assume $\|\op V\|\le\Delta/2$.
Let $\epsilon>0$.
If there exists a Hamiltonian $\op H\eff$ with spectrum $\{\lambda_1,\ldots,\lambda_k\}$ contained in an interval $[a,b]$, $a<b<\Delta/2-\epsilon$,
and for all $z\in[a-\epsilon,b+\epsilon]$ it holds that $\|\Sigma_-(z)-\op H\eff\|\le\epsilon$,
then each $\lambda_i$ is $\epsilon$-close to the $i$\textsuperscript{th} eigenvalue of $\tilde{\op H}_-$.
\end{theorem}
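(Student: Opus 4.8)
The plan is to reduce the claim to the classical self-consistency characterisation of the low-lying spectrum of $\tilde{\op H}$ through the self-energy, and then to carry the hypothesis $\|\Sigma_-(z)-\op H\eff\|\le\epsilon$ over to the eigenvalues via Weyl's inequality. Write $\lambda^*:=\lmin(\op H)+\Delta/2$ for the gap midpoint and $k:=\dim\gs(\op H)$, and read the conclusion as a statement about the $k$ eigenvalues $\tilde\lambda_1\le\cdots\le\tilde\lambda_k$ of $\tilde{\op H}$ lying below $\lambda^*$. From the Schur-complement expansion of $\tilde{\op G}_-^{-1}(z)$ in \cref{sec:f-d} together with $\Sigma_-(z)=z\1_--\tilde{\op G}_-^{-1}(z)$ and $\op G_\pm^{-1}(z)=z\1_\pm-\op H_\pm$ one reads off the closed form
\[
  \Sigma_-(z)=\tilde{\op H}_-+\op V_{-+}\bigl(z\1_+-\op H_+-\op V_+\bigr)^{-1}\op V_{+-},
\]
valid for every real $z<\lambda^*$: there $\op H_++\op V_+\succeq(\lmin(\op H)+\Delta-\|\op V\|)\1_+\succeq\lambda^*\1_+\succ z\1_+$, so the inverse exists and, equivalently, the Feynman--Dyson series \cref{eq:Z-expansion} converges. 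Let $\mu_1(z)\le\cdots\le\mu_k(z)$ be the sorted eigenvalues of the Hermitian operator $\Sigma_-(z)$ and set $f_j(z):=\mu_j(z)-z$.

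I would first record two structural facts. \emph{(i) Monotonicity.} Differentiating the closed form gives $\tfrac{\dd}{\dd z}\Sigma_-(z)=-\op V_{-+}(z\1_+-\op H_+-\op V_+)^{-2}\op V_{+-}\preceq0$, so each $\mu_j$ is non-increasing; hence each $f_j$ is continuous and \emph{strictly} decreasing on $(-\infty,\lambda^*)$ and has at most one zero there. \emph{(ii) Self-consistency with matching index.} Since $\|\op V\|\le\Delta/2<\Delta-\|\op V\|$, Weyl's inequality shows $\tilde{\op H}$ has exactly $k$ eigenvalues below $\lambda^*$, namely $\tilde\lambda_1\le\cdots\le\tilde\lambda_k$. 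For $z<\lambda^*$ not an eigenvalue of $\tilde{\op H}$, the operator $z\1_--\Sigma_-(z)$ is exactly the Schur complement of the negative-definite (hence invertible) block $z\1_+-\tilde{\op H}_+$ inside $z\1-\tilde{\op H}$, so Haynsworth's inertia-additivity formula yields
\[
  \#\{\,i:\tilde\lambda_i<z\,\}=\#\{\,j:\mu_j(z)<z\,\}=\#\{\,j:f_j(z)<0\,\}.
\]
As $z$ sweeps $(-\infty,\lambda^*)$ the two sides are non-decreasing step functions that agree pointwise and start at $0$; comparing where they jump, and using $f_1\le\cdots\le f_k$ together with strict monotonicity, forces $f_j(\tilde\lambda_j)=0$, i.e.\ $\mu_j(\tilde\lambda_j)=\tilde\lambda_j$, for every $j=1,\ldots,k$ (coincident $\tilde\lambda_j$'s being handled by the same count applied to the whole degenerate block).

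It then remains to transfer the bound. List the eigenvalues of $\op H\eff$ in increasing order as $\lambda_1\le\cdots\le\lambda_k\in[a,b]$. The interval $[a-\epsilon,b+\epsilon]$ lies in $(-\infty,\lambda^*)$ by the hypothesis $b<\Delta/2-\epsilon$, and on it $\|\Sigma_-(z)-\op H\eff\|\le\epsilon$ with Weyl's inequality gives $|\mu_j(z)-\lambda_j|\le\epsilon$. Evaluating $f_j$ at the endpoints, $f_j(a-\epsilon)\ge(\lambda_j-\epsilon)-(a-\epsilon)\ge0$ and $f_j(b+\epsilon)\le(\lambda_j+\epsilon)-(b+\epsilon)\le0$, so by continuity and strict monotonicity $f_j$ has its unique zero inside $[a-\epsilon,b+\epsilon]$; by (ii) that zero is $\tilde\lambda_j$, hence $\tilde\lambda_j\in[a-\epsilon,b+\epsilon]$. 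Therefore $|\tilde\lambda_j-\lambda_j|=|\mu_j(\tilde\lambda_j)-\lambda_j|\le\epsilon$, which is the assertion.

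The real work lies in fact (ii) — the Kempe--Kitaev--Regev/Bravyi-type self-consistency lemma — and specifically in the bookkeeping: one must argue carefully that the inertia count pins the crossing of the $j$\textsuperscript{th} eigenvalue branch $\mu_j$ with the diagonal to precisely the $j$\textsuperscript{th} low-lying eigenvalue $\tilde\lambda_j$, rather than to a neighbouring branch, including the degenerate case and the boundary behaviour as $z\to-\infty$ and as $z\uparrow\lambda^*$. Everything else — the closed form of $\Sigma_-$ and its monotonicity, the inertia identity, and the two applications of Weyl's inequality — is routine.
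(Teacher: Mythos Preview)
The paper does not prove \cref{th:pert-1}; it is quoted as a known result from the literature (Cao--Nagaj, building on Kempe--Kitaev--Regev), so there is nothing to compare against here. Your argument is the standard one underlying those references and is correct: the closed form for $\Sigma_-$, the monotonicity $\tfrac{\dd}{\dd z}\Sigma_-(z)\preceq0$, the Haynsworth inertia identity giving $\#\{i:\tilde\lambda_i<z\}=\#\{j:\mu_j(z)<z\}$, and the final Weyl step all go through as you wrote them.

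Two minor points worth tightening. First, the paper's notation $\tilde{\op H}_-$ literally means $\Pi_-\tilde{\op H}\Pi_-$ with $\Pi_-$ the projector onto $\gs(\op H)$; you correctly read the \emph{intended} meaning as the $k$ lowest eigenvalues of $\tilde{\op H}$, but it is worth flagging that the paper's statement is slightly sloppy on this. Second, your Weyl count that exactly $k$ eigenvalues of $\tilde{\op H}$ lie strictly below $\lambda^*$ uses $\|\op V\|<\Delta/2$ strictly, whereas the hypothesis only gives $\le$; at equality one can have eigenvalues touching $\lambda^*$ from either side, so a one-line limiting or perturbation argument (or simply assuming the strict inequality, as most references do) is needed to close this edge case.
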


Note that in general we will have a dependence $\epsilon=\epsilon(\Delta)$; however, if we only request that the error be small, but not shrinking with the system size, we can keep the ratio of the terms $\op H$ and $\op V$ fixed.
The following variant allows one to make a statement not only about the eigenenergies, but also about the eigenvectors.
\begin{theorem}[\citeauthor{Oliveira2008} \cite{Oliveira2008}]\label{th:pert-2}
Let the setup be as in \cref{th:pert-1}, and denote with $\lambda_\mp$ the ground- and first excited energy of $\op H$, respectively.
Let $z_0=(b+a)/2$, $w\eff=(b-a)/2$, and $r$ be the radius of a disc $D$ centered around $z_0$ encompassing the point $b+\epsilon$.
If for all $z\in D$ we have $\|\Sigma_-(z) - \op H\eff\|\le\epsilon$, then
\[
    \| \tilde{\op H}_- - \op H\eff \| \le \frac{3(\|\op H\eff\|+\epsilon)\|\op V\|}{\lambda_+ - \|\op H\eff\|-\epsilon}
    + \frac{r(r+z_0)\epsilon}{(r-w\eff)(r-w\eff-\epsilon)}.
\]
\end{theorem}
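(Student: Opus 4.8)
The plan is to represent the restriction of $\tilde{\op H}$ to its low-energy eigenspace as a Cauchy integral of the resolvent $\tilde{\op G}$, to compress that integral down to $\gs(\op H)$ where it becomes an integral of $(z\1_- - \Sigma_-(z))^{-1}$, and then to swap $\Sigma_-(z)$ for $\op H\eff$ using the hypothesis. Let $\Pi_D$ be the spectral projector of $\tilde{\op H}$ onto its eigenvalues inside $D$; by \cref{th:pert-1} together with $\|\op V\|\le\Delta/2$ these are precisely the low-lying ones $\tilde\lambda_j$, no eigenvalue of $\tilde{\op H}$ sits on $\partial D$, and $D$ lies below $\lambda_+$ (this last point is already needed for $\Sigma_-$ to be defined on $D$, hence implicit in the hypothesis that $\|\Sigma_-(z)-\op H\eff\|\le\epsilon$ for all $z\in D$), so the series \cref{eq:Z-expansion} converges on $D$. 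Following the convention of \cref{th:pert-1} I write $\tilde{\op H}_-$ for the compression $\Pi_D\tilde{\op H}\Pi_D$ transported to $\gs(\op H)$ along the near-identity isometry obtained from the polar part of $\Pi_-\Pi_D$, which is well defined since $\Pi_-$ and $\Pi_D$ are at distance $\BigO(\|\op V\|/\Delta)<1$. Riesz calculus gives $\Pi_D\tilde{\op H}\Pi_D=\tfrac{1}{2\pi i}\oint_{\partial D}z\,\tilde{\op G}(z)\,\dd z$, and the Schur-complement computation behind \cref{eq:Z-expansion} supplies the bridge
\[
    \Pi_-\,\tilde{\op G}(z)\,\Pi_- = \bigl(z\1_- - \Sigma_-(z)\bigr)^{-1},\qquad z\in\partial D .
\]

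Compressing the integral with $\Pi_-$ turns it into $\Pi_-\Pi_D\tilde{\op H}\Pi_D\Pi_- = \tfrac{1}{2\pi i}\oint_{\partial D}z\,(z\1_- - \Sigma_-(z))^{-1}\,\dd z$, which I compare with $\tfrac{1}{2\pi i}\oint_{\partial D}z\,(z\1_- - \op H\eff)^{-1}\,\dd z = \op H\eff$, the latter equality holding because $\spec(\op H\eff)\subset[a,b]\subset D$. The two integrands differ by $A^{-1}-B^{-1}=A^{-1}(B-A)B^{-1}$ with $A=z\1_--\Sigma_-(z)$, $B=z\1_--\op H\eff$ and $\|B-A\|\le\epsilon$; on $\partial D$ one has $\dist(z,\spec(\op H\eff))\ge r-w\eff$, whence $\|B^{-1}\|\le(r-w\eff)^{-1}$ and, by a Neumann estimate, $\|A^{-1}\|\le(r-w\eff-\epsilon)^{-1}$. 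Bounding $|z|\le z_0+r$ on $\partial D$ and the arclength by $2\pi r$ then yields
\[
    \bigl\|\Pi_-\Pi_D\tilde{\op H}\Pi_D\Pi_- - \op H\eff\bigr\| \le \frac{r(r+z_0)\,\epsilon}{(r-w\eff)(r-w\eff-\epsilon)},
\]
which is the second summand of the claimed bound.

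It remains to trade $\Pi_-\Pi_D\tilde{\op H}\Pi_D\Pi_-$ for $\tilde{\op H}_-$, that is, to pass from the bare projector $\Pi_-$ to the transport isometry. Writing $X:=\Pi_D\tilde{\op H}\Pi_D$ one has the identity $X-\Pi_- X\Pi_- = (\Pi_D-\Pi_-)X\Pi_D + \Pi_- X(\Pi_D-\Pi_-)$, so $\|X-\Pi_- X\Pi_-\|\le 2\,\|\Pi_D-\Pi_-\|\,\|X\|$. By \cref{th:pert-1} the eigenvalues of $X$ satisfy $|\tilde\lambda_j|\le\|\op H\eff\|+\epsilon$, hence $\|X\|\le\|\op H\eff\|+\epsilon$; and projecting the eigenvector equation $(\tilde{\op H}-\tilde\lambda_j)\ket{\tilde\psi_j}=0$ onto $\gs(\op H)^\perp$ gives $\Pi_+\ket{\tilde\psi_j}=(\tilde\lambda_j\1_+-\op H_+)^{-1}\Pi_+\op V\ket{\tilde\psi_j}$, so that $\|\Pi_+\ket{\tilde\psi_j}\|\le\|\op V\|/(\lambda_+-\tilde\lambda_j)\le\|\op V\|/(\lambda_+-\|\op H\eff\|-\epsilon)$; this in turn feeds a Davis--Kahan-type bound $\|\Pi_D-\Pi_-\|=\BigO\bigl(\|\op V\|/(\lambda_+-\|\op H\eff\|-\epsilon)\bigr)$. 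Collecting these estimates together with the $\BigO(\|\op V\|/\Delta)$ cost of replacing $\Pi_-$ by the polar isometry, and tracking the constants carefully, produces $\|\tilde{\op H}_- - \Pi_-\Pi_D\tilde{\op H}\Pi_D\Pi_-\|\le 3(\|\op H\eff\|+\epsilon)\|\op V\|/(\lambda_+-\|\op H\eff\|-\epsilon)$; adding this to the previous display and using the triangle inequality gives the theorem.

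The step I expect to be the real obstacle is exactly this last one: the high-energy eigenvalues of $\tilde{\op H}$ entering $\1-\Pi_D$ are unbounded, so only the smallness of the overlap $\|\Pi_+\ket{\tilde\psi_j}\|$ of the \emph{low}-energy eigenvectors with $\gs(\op H)^\perp$ keeps the estimate finite, and getting the constant down to exactly $3$ rather than some larger absolute constant requires being careful about how the cross terms combine and about the choice of transport isometry. By contrast, the contour-integral part is routine once one has checked that $\partial D$ avoids both $\spec(\tilde{\op H})$ and the poles of $\op G_+$, which the geometry of $D$---it lies below $\lambda_+$, as is in any case required for $\Sigma_-$ to be defined on $D$---guarantees.
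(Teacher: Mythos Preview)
The paper does not prove \cref{th:pert-2}; it is quoted verbatim from \cite{Oliveira2008} and used as a black box in \cref{sec:proof}. There is therefore no in-paper proof to compare against.

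Your outline is the standard one and matches the original argument in \cite{Oliveira2008}: the contour-integral representation of $\Pi_D\tilde{\op H}\Pi_D$, the Schur-complement identity $\Pi_-\tilde{\op G}(z)\Pi_-=(z\1_--\Sigma_-(z))^{-1}$, and the resolvent comparison on $\partial D$ are exactly how the second summand arises, and your bookkeeping of $|z|\le r+z_0$, $\|B^{-1}\|\le(r-w\eff)^{-1}$, $\|A^{-1}\|\le(r-w\eff-\epsilon)^{-1}$ is correct. The first summand likewise comes from controlling the discrepancy between $\Pi_-$ and the true low-energy projector $\Pi_D$ via the eigenvector-leakage estimate $\|\Pi_+\ket{\tilde\psi_j}\|\le\|\op V\|/(\lambda_+-\tilde\lambda_j)$, which you derive correctly. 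Your own caveat is accurate: the only place where the argument is not yet pinned down is the passage from the $\BigO(\cdot)$ Davis--Kahan bound to the explicit constant $3$, which in \cite{Oliveira2008} is obtained by writing out the polar decomposition of $\Pi_-\Pi_D$ explicitly and bounding the three resulting cross terms separately rather than appealing to a generic subspace-perturbation lemma. If you want the exact constant you should follow that route; if an absolute constant suffices (as it does for the application in this paper, where only the asymptotic order is used), what you have written is already enough.
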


In particular, while \cref{th:pert-1} allows us to make a statement about the eigenenergies without requiring $\Delta/\|\op V\|\rightarrow\infty$---which manifests in a constant approximation error for the eigenvectors of $\op H\eff$---\emph{with} said condition and \cref{th:pert-2} we can also approximate the full spectrum of $\op H\eff$ to arbitrary precision.

\subsection{A Bound State Hamiltonian}\label{sec:H-bound}
We will need a variant of a random walk Hamiltonian, used ubiquitously in \QMA-hardness constructions in the context of Feynman's History State construction.
In particular, what we aim to achieve is to create a Hamiltonian on a multipartite Hilbert space, with a constant spectral gap above a unique ground state, and such that the latter has most of its weight localized around a particular site.
Like this, we can ``condition'' an interaction on the ground state away from its localization site.
The intuition is taken from particle physics:
interactions are commonly coupled to an exchange gauge particle;
this coupling is weak when conditioned on a field away from where the gauge particle mostly lives---e.g.\ a photon, whose field drops off away from an electron, influences how strong an electron-electron scattering is depending on how far apart the two electrons are.

\newcommand{\Hb}{\op H_\textnormal{b}}
Let us make this precise.
Let $b>0$.
For an integer $T\ge 2$, let $\Hb$ be a Hamiltonian on $\field C^T$ defined via
\begin{equation}\label{eq:H-bound}
    \Hb :=  - b\ketbra{1} + \sum_{t=1}^{T-1}(\ket t-\ket{t+1})(\bra t-\bra{t+1}),
\end{equation}
where the $\ket t$ label a fixed orthonormal basis.
The second term in \cref{eq:H-bound} is a path graph Laplacian, whereas the first term assigns a bonus term of strength $b$ to the state $\ket1$.

\begin{lemma}\label{lem:H-bound-gs-1}
    For $b>0$, $\Hb$ as defined in \cref{eq:H-bound} has a single ground state with eigenvalue $\lmin<-b^2/(b+1)$.
    All other eigenvalues are positive.
\end{lemma}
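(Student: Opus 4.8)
The plan is to decompose $\Hb=\op L-b\ketbra 1$, where $\op L:=\sum_{t=1}^{T-1}(\ket t-\ket{t+1})(\bra t-\bra{t+1})$ is the path‑graph Laplacian on $T$ vertices. I will use two standard facts about $\op L$: it is positive semidefinite, and (since the path on $T$ vertices is connected) its kernel is one‑dimensional, spanned by the uniform vector $\ket u:=T^{-1/2}\sum_{t=1}^T\ket t$. The first observation is that $\Hb$ is in fact \emph{positive definite} on the codimension‑one subspace $\ket 1^\perp$: for any nonzero $\ket v\perp\ket 1$ the bonus term vanishes, so $\bra v\Hb\ket v=\bra v\op L\ket v$, and this is strictly positive because $\ket v$ cannot be a multiple of $\ket u$ --- a multiple of $\ket u$ orthogonal to $\ket 1$ must be zero, as $\braket{u}{1}=T^{-1/2}\neq 0$. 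Equivalently, $\ker\op L\not\subseteq\ket 1^\perp$, which is exactly the point of attaching the bonus to an endpoint rather than to an interior site.

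From positive definiteness on a codimension‑one subspace, the min--max (Courant--Fischer) principle shows that $\Hb$ has at most one non‑positive eigenvalue: ordering the eigenvalues $\lambda_1\le\dots\le\lambda_T$, the choice $S=\ket 1^\perp$ in $\lambda_2(\Hb)=\max_{\dim S=T-1}\min_{0\ne\ket v\in S}\bra v\Hb\ket v/\braket{v}{v}$ yields $\lambda_2>0$, so \emph{all but the smallest} eigenvalue are strictly positive. It then remains to bound $\lambda_1<-b^2/(b+1)$, which simultaneously certifies $\lambda_1<0\le\lambda_2$ --- so the ground state is unique and separated from the rest --- and gives the claimed inequality. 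For this I will use the variational principle with the explicit exponentially‑decaying trial vector $\ket\psi:=\sum_{t=1}^T x^{t-1}\ket t$ at decay rate $x:=1/(1+b)\in(0,1)$, the rate that is variationally optimal in the limit $T\to\infty$. Both $\braket{\psi}{\psi}$ and $\bra\psi\Hb\ket\psi$ are then finite geometric sums, and after substituting $x=1/(1+b)$ the target inequality $\bra\psi\Hb\ket\psi/\braket{\psi}{\psi}<-b^2/(b+1)$ collapses to $-x^{-2}<1+b$, which is trivially true. (A two‑site ansatz $\alpha\ket 1+\beta\ket 2$ also works: diagonalising the induced $2\times 2$ block and checking $\tfrac12(2-b-\sqrt{4+b^2})<-b^2/(b+1)$ reduces, after clearing denominators and squaring, to $4b>0$.)

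I do not expect a genuine obstacle. The two points needing care are: (i) applying the min--max bound in the correct direction, which hinges on $\ket u\notin\ket 1^\perp$ --- were the bonus placed on an interior vertex this would fail and one would only get semidefiniteness, hence a possibly degenerate or zero‑energy ground state; and (ii) keeping the geometric‑series bookkeeping exact, so that the strict inequality $\lambda_1<-b^2/(b+1)$ holds for every finite $T\ge 2$ and not merely asymptotically (the $T\to\infty$ value of the Rayleigh quotient is exactly $-b^2/(b+1)$, and it is the finite‑$T$ truncation of the tail that makes the inequality strict).
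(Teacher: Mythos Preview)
Your proposal is correct and follows essentially the same two-step strategy as the paper: first bound the number of non-positive eigenvalues via the observation that $\Hb$ coincides with the positive semidefinite Laplacian on $\ket 1^\perp$, then use the exponentially decaying trial state $\sum_t (1+b)^{-t}\ket t$ to push $\lambda_1$ strictly below $-b^2/(b+1)$. The paper phrases the first step as a dimension-counting contradiction (two negative-eigenvalue eigenvectors would force a nonzero vector in $\ket 1^\perp$ with negative Laplacian energy), whereas you invoke Courant--Fischer directly; your version is in fact slightly sharper, since by noting that the uniform vector $\ket u$ is \emph{not} in $\ket 1^\perp$ you get $\lambda_2>0$ strictly, which the paper's contradiction argument does not quite deliver on its own (it only rules out two \emph{negative} eigenvalues).

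One small correction to your aside in point~(i): the claim that placing the bonus on an interior vertex would break the argument is not right. The uniform kernel vector $\ket u$ has $\braket{u}{t}=T^{-1/2}\neq 0$ for \emph{every} site $t$, so $\ket u\notin\ket t^\perp$ regardless of whether $t$ is an endpoint or interior, and strict positive-definiteness on $\ket t^\perp$ holds in all cases. This does not affect your proof, only the parenthetical remark.
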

\begin{proof}
    Uniqueness of a single negative eigenvalue is a standard argument: assume this is not the case. Then there exist at least two orthogonal eigenvectors $\ket u,\ket v$ with negative eigenvalues, and any $\ket x\in\spn\{\ket u,\ket v\}$ satisfies $\bra x\Hb \ket x<0$.  Since $\dim\ker\ketbra 1=T-1$,
    there exists a nonzero $\ket x\in\spn\{\ket u,\ket v\}$ such that $\ketbra 1\ket x=0$. Therefore $0>\bra x\Hb\ket x=\bra x(\Hb+b\ketbra1)\ket x$, contradiction, since $\Hb + b\ketbra1$ is a path graph Laplacian, which is positive semi-definite.
    
    We make an ansatz for the ground state. Let
    \begin{equation}\label{eq:H-bound-gs}
        \ket{\Psi}:=A\sum_{t=1}^T(b+1)^{-t}\ket t
        \quad\text{where }
        A^2=\frac{b(2+b)}{1-(b+1)^{-2T}}
        \text{\ \ for normalization,}
    \end{equation}
    for which we note $A\in(0,b+1)\ \forall b>0, T\ge2$. Then
    \begin{align}
        \Hb\ket\Psi &= A\sum_{t=1}^T\ket t \times \begin{cases}
        	-(b+1)^{-2} + (b+1)^{-1} - \frac{b}{b+1} & t=1   \\
        	-(b+1)^{-t-1} + 2(b+1)^{-t} - (b+1)^{-t+1}   & 1<t<T  \\
        	-(b+1)^{-T+1} + (b+1)^{-T}                     & t=T
        \end{cases} \nonumber\\
        &= -A\sum_{t=1}^{T-1}\frac{b^2}{(b+1)^{t+1}}\ket t - \frac{Ab}{(b+1)^T}\ket T \nonumber\\
        &= -\frac{b^2}{b+1}\ket\Psi - \frac{Ab}{(b+1)^{T+1}}\ket T.
        \label{eq:Hb-Psi}
    \end{align}
    Thus
    \[
        \bra\Psi \Hb \ket\Psi
        =-\frac{b^2}{b+1} - \frac{A^2b}{(b+1)^{2T+1}} < -\frac{b^2}{b+1}.\qedhere
    \]
\end{proof}
\begin{lemma}\label{lem:H-bound-gs-2}
    We pick $b\ge1$.
    $\Hb$ then has ground state $\ket{\Psi_0}=\ket\Psi + \epsilon\ket{\xi}$, where $\ket\Psi$ is from \cref{eq:H-bound-gs}, $\ket{\xi}$ is normalized, and $\epsilon=\BigO(b\sqrt T/(b+1)^{T})$ where the $\BigO$ limit is taken with respect to $T\longrightarrow\infty$.
\end{lemma}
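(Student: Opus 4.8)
The plan is to treat the explicit ansatz $\ket\Psi$ from \eqref{eq:H-bound-gs} as an approximate eigenvector of $\Hb$ and to convert the smallness of its residual into closeness to the true ground state $\ket{\Psi_0}$ by a spectral‑gap argument. First I would read off from the computation \eqref{eq:Hb-Psi} in the proof of \cref{lem:H-bound-gs-1} that the (already normalized) state $\ket\Psi$ satisfies $(\Hb-\mu)\ket\Psi=-\tfrac{Ab}{(b+1)^{T+1}}\ket T$ with $\mu:=-b^2/(b+1)$, so its residual is $\delta:=\|(\Hb-\mu)\ket\Psi\|=Ab/(b+1)^{T+1}$. A crude bound on the normalization constant — using $2+b\le 3b$ and $1-(b+1)^{-2T}\ge 15/16$, both valid for $b\ge1$ and $T\ge2$ — gives $A<2b$, hence $\delta<2b^2/(b+1)^{T+1}=\BigO\!\big(b\sqrt T/(b+1)^T\big)$.

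Next I would nail down the gap. \cref{lem:H-bound-gs-1} states that $\Hb$ has a unique ground state $\ket{\Psi_0}$ with $\lmin<\mu<0$ and all other eigenvalues strictly positive. The key observation is that the separation we need is between the target energy $\mu$ and the \emph{excited} spectrum, and this is $\ge -\mu=b^2/(b+1)\ge 1/2$, irrespective of how small the first excited eigenvalue itself might be. Fixing the phase of $\ket{\Psi_0}$, decompose $\ket\Psi=\alpha\ket{\Psi_0}+\beta\ket w$ with $\ket w\perp\ket{\Psi_0}$ normalized and $\alpha,\beta\ge0$, $\alpha^2+\beta^2=1$. Since $\Hb$ leaves $\spn\{\ket{\Psi_0}\}$ and its orthogonal complement invariant, the vectors $\alpha(\lmin-\mu)\ket{\Psi_0}$ and $(\Hb-\mu)\beta\ket w$ are orthogonal, so $\delta^2=\alpha^2(\lmin-\mu)^2+\beta^2\|(\Hb-\mu)\ket w\|^2\ge \beta^2\big(b^2/(b+1)\big)^2$, which yields $\beta\le\tfrac{b+1}{b^2}\,\delta<2/(b+1)^T$.

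To conclude, define $\epsilon\ket\xi:=\ket{\Psi_0}-\ket\Psi$, so that $\epsilon=\|\ket{\Psi_0}-\ket\Psi\|$ and $\ket\xi$ is automatically a unit vector (if $\epsilon=0$ the statement is trivial for any unit $\ket\xi$). With the phase chosen above, $\epsilon^2=2-2\alpha=2-2\sqrt{1-\beta^2}\le 2\beta^2$, hence $\epsilon\le\sqrt2\,\beta<2\sqrt2/(b+1)^T=\BigO\!\big(b\sqrt T/(b+1)^T\big)$ as $T\to\infty$, which is the claim.

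\textbf{Expected main obstacle.} I anticipate the argument to be essentially routine; the only places deserving care are getting the direction of the gap inequality right — recognising that the relevant quantity is $\dist(\mu,\text{excited spectrum})$, which is bounded below by $b^2/(b+1)$ because $\mu\le-1/2$, rather than the (possibly tiny) gap $\lambda_1-\lmin$ — and keeping track of the fact that $\ket\Psi$ is already normalized so that no spurious normalization factors appear. It is also worth noting that this route actually produces the slightly stronger bound $\epsilon=\BigO(1/(b+1)^T)$, which comfortably implies the stated one.
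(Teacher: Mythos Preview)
Your argument is correct. It is essentially the standard ``approximate eigenvector plus spectral gap'' bound, and it goes through exactly as you outline: the residual $(\Hb-\mu)\ket\Psi$ has norm $\delta=Ab/(b+1)^{T+1}$, the excited spectrum sits at distance at least $|\mu|=b^2/(b+1)\ge 1/2$ from $\mu$, and Pythagoras on the decomposition $\ket\Psi=\alpha\ket{\Psi_0}+\beta\ket w$ gives $\beta\le(b+1)\delta/b^2$. The conversion from $\beta$ to $\epsilon=\|\ket{\Psi_0}-\ket\Psi\|$ is fine. Your remark that this actually yields the sharper $\epsilon=\BigO(1/(b+1)^T)$ is also correct.

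The paper does something slightly different and, in fact, looser. It expands $\ket\Psi=\sum_i\alpha_i\ket{\Psi_i}$ in the full eigenbasis and bounds each excited coefficient $\alpha_i$ \emph{individually} via $\alpha_i(\lambda_i-\mu)=-\tfrac{Ab}{(b+1)^{T+1}}\braket{\Psi_i}{T}$, then uses $|\braket{\Psi_i}{T}|\le 1$ to get $|\alpha_i|\le 4b/(b+1)^T$ for every $i\ge 1$. Summing the squares over the $T-1$ excited states produces the extra factor of $T$, whence the $\sqrt T$ in the stated bound. Your route avoids this loss by applying Pythagoras to the residual globally rather than bounding term by term; conceptually the two proofs rest on the same identity $(\Hb-\mu)\ket\Psi\propto\ket T$, but your packaging is tighter and a bit cleaner. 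The paper's weaker bound is already enough for every downstream use (\cref{cor:H-bound-gs} onwards), so neither version has a practical advantage there, but your observation that the $\sqrt T$ is superfluous is a genuine, if minor, improvement.
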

\begin{proof}
    By absorbing complex phases, choose the eigenvectors $\{\ket{\Psi_i} \}_{i=0}^{T-1}$ of $\Hb$---with ground state $\ket{\Psi_0}$---such that we can represent the ansatz state $\ket{\Psi}=\sum_{i=0}^{T-1} \alpha_i\ket{\Psi_i}$ with $\alpha_i\ge0$ for all $i$.
    By \cref{lem:H-bound-gs-1}, the lowest eigenvalue $\lambda_0 = \lmin(\Hb)\in(-\infty,-b^2/(b+1)]$; all other eigenvalues of $\Hb$ satisfy $\lambda_i \in [0,\infty)$.
    Therefore, for any $s\in(0,b^2/b+1)$,
    \begin{align*}
        \alpha_i 
        &= \braket{\Psi_i}{\Psi}
        = \frac{1}{\lambda_i+s}\bra{\Psi_i}\Hb+s\1\ket{\Psi} \\
        &= -\frac{1}{\lambda_i+s}\left[\left( \frac{b^2}{b+1} - s \right)\braket{\Psi_i}{\Psi} + \frac{Ab}{(b+1)^{T+1}}\braket{\Psi_i}{T}\right],
    \end{align*}
    where in the first line we used the fact that the $\ket{\Psi_i}$ are an orthonormal set of vectors, and in the second line we used the expression of $\Hb\ket{\Psi}$ from \cref{eq:Hb-Psi}.
    Since $b\ge1$, we can choose $s=1/4$. We further have $A\le b+1$. For $i>0$, we know that $\lambda_i\ge0$, and we conclude
    \[
        \alpha_i = \left(
            1+\frac{1}{\lambda_i+\frac14}\left(\frac{b^2}{b+1}-\frac14\right)\right)^{-1}
            \frac{1}{\lambda_i+\frac14}\frac{Ab}{(b+1)^{T+1}} |\braket{\Psi_i}{T}|
        \le \frac{4b}{(b+1)^{T}}.
    \]
    Then
    \[
        | \braket{\Psi_0}{\Psi} |^2 = \alpha_0^2 = 1 - \sum_{i=1}^{T-1}\alpha_i^2 = 1 + \frac{4(T-1)b^2}{(b+1)^{2T}} = \BigO\left(\frac{T b^2}{(b+1)^{2T}}\right)
    \]
    for large $T$, and the claim follows.
\end{proof}

This allows us to approximate to very high precision the amplitudes of the ground- and higher excited states; of particular interest will be the amplitudes for the basis states $\ket{T'}$ for $T' < T$;
the reason for this is that the approximation error in \cref{lem:H-bound-gs-2} (i.e.\ the precision to which we know the ground state at all) is of the same order of magnitude as the smallest amplidude in the ground state, $| \braket{\Psi_0}{T} |$.
However, since we want to be able to accurately fine-tune a specific amplitude of $\ket{\Psi_0}$, we need the corresponding error of that entry to be much smaller.
In order to formalize this notion, we will assume the path graph underlying the graph Laplacian in the definition of $\Hb$ in \cref{eq:H-bound} has a multiple of the original length $T$; we call this multiple $M\in\field N$, $M>1$ throughout the paper, and the target amplitude we wish to estimate and tune remains $\braket{\Psi_0}{T}$.
This is captured in the following corollary.
\begin{corollary}\label{cor:H-bound-gs}
    Let $M\in\field N$, $M>1$, and $b\ge1$. Let $\ket{\Psi_0}$ be the ground state of $\Hb$ on a chain of length $MT$.
    Then
    \[
        |\braket{\Psi_0}{T}|^2=\frac{b (b+2)}{(b+1)^{2 T}} + \BigO\left(\frac{1}{(b+1)^{MT}}\right),
    \]
    where the $\BigO$ limit is taken with respect to $T\longrightarrow\infty$.
\end{corollary}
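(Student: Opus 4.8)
The plan is to substitute the chain length $MT$ into \cref{lem:H-bound-gs-2} and then compute $|\braket{\Psi_0}{T}|^2$ by splitting off the contribution of the ansatz state $\ket\Psi$ from the exponentially small correction $\epsilon\ket\xi$, checking that the correction enters only at order $\BigO((b+1)^{-MT})$.

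First I apply \cref{lem:H-bound-gs-2} with its chain length set equal to $MT$; this is legitimate because $M>1$ is a fixed integer, so $MT\longrightarrow\infty$ as $T\longrightarrow\infty$, and $b\ge1$ as required. It yields the ground state of $\Hb$ on the length-$MT$ chain as $\ket{\Psi_0}=\ket\Psi+\epsilon\ket\xi$, where now $\ket\Psi=A\sum_{t=1}^{MT}(b+1)^{-t}\ket t$ with $A^2=b(b+2)/(1-(b+1)^{-2MT})$, $\ket\xi$ normalized, and $\epsilon=\BigO(b\sqrt{MT}\,(b+1)^{-MT})$, the $\BigO$ limit taken as $T\longrightarrow\infty$. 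From the explicit form of $\ket\Psi$ I read off $\braket{\Psi}{T}=A(b+1)^{-T}$, and since $1/(1-(b+1)^{-2MT})=1+\BigO((b+1)^{-2MT})$ this gives
\[
    |\braket{\Psi}{T}|^2 = A^2(b+1)^{-2T} = \frac{b(b+2)}{(b+1)^{2T}} + \BigO\!\left((b+1)^{-2T-2MT}\right).
\]

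Next I estimate the effect of the correction. Using $|\braket{\xi}{T}|\le1$ by normalization,
\[
    \big|\,|\braket{\Psi_0}{T}|^2 - |\braket{\Psi}{T}|^2\,\big| \le 2|\braket{\Psi}{T}|\,\epsilon + \epsilon^2 = \BigO\!\left(b^{3/2}\sqrt{MT}\,(b+1)^{-T-MT}\right) + \BigO\!\left(b^2MT\,(b+1)^{-2MT}\right).
\]
Since $b$ and $M$ are fixed while $T\longrightarrow\infty$, in each of these two remainders the extra exponential factor---$(b+1)^{-T}$ in the first, $(b+1)^{-MT}$ in the second---dominates the polynomial-in-$T$ prefactor, so both are $\BigO((b+1)^{-MT})$; a fortiori so is the $\BigO((b+1)^{-2T-2MT})$ term above. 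Adding up, $|\braket{\Psi_0}{T}|^2 = b(b+2)(b+1)^{-2T} + \BigO((b+1)^{-MT})$, which is the claim.

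I do not expect a genuine obstacle; the only care needed is in the bookkeeping of the $\BigO$ conventions---$b$ and $M$ held fixed, $T\longrightarrow\infty$---so that polynomial-in-$T$ prefactors get absorbed, plus a sanity check that the substitution $T\mapsto MT$ into \cref{lem:H-bound-gs-2} is valid. This is also where the content of the statement sits: on the length-$MT$ chain both the overall perturbation scale $\epsilon$ and the \emph{smallest} ground-state amplitude are of order $(b+1)^{-MT}$, whereas the amplitude we single out---at site $T$---is the exponentially larger $(b+1)^{-T}$, so the error there is negligible relative to the entry itself. That separation is exactly what will later allow $\braket{\Psi_0}{T}$ to be fine-tuned to high relative precision.
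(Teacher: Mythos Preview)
Your proof is correct and follows essentially the same route as the paper: apply \cref{lem:H-bound-gs-2} on the chain of length $MT$, expand $|\braket{\Psi}{T}|^2$ via the explicit normalization constant $A$, and then absorb the cross term $2\epsilon|\braket{\Psi}{T}|$ and the $\epsilon^2$ term into $\BigO((b+1)^{-MT})$. If anything, your two-sided bound $\bigl|\,|\braket{\Psi_0}{T}|^2-|\braket{\Psi}{T}|^2\,\bigr|\le 2|\braket{\Psi}{T}|\epsilon+\epsilon^2$ is slightly cleaner than the paper's one-sided inequality.
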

\begin{proof}
By \cref{lem:H-bound-gs-2},
    \[
        |\braket{\Psi_0}{T}|^2 = | \braket{\Psi}{T} + \epsilon\braket{\xi}{T} |^2 \le | \braket{\Psi}{T} |^2 + 2\epsilon| \braket{\Psi}{T} | + \epsilon^2.
    \]
    First note that by \cref{eq:H-bound-gs}, $\braket{\Psi}{T} = A / (b+1)^T$, where $A$ is the normalization constant defined on a path of length $MT$ (\emph{not} T), such that
    \begin{align*}
        | \braket{\Psi}{T} |^2 
        &= \frac{b(2+b)}{1-(b+1)^{-2MT}} \times \frac{1}{(b+1)^{2T}} = \frac{b(2+b)}{(b+1)^{2T} - (b+1)^{-2MT+2T}} \\
        &= \frac{b(2+b)}{(b+1)^{2T}} + \BigO\left( \frac{1}{(b+1)^{4T}}\times (b+1)^{-2MT+2T}\right) \\
        &= \frac{b(2+b)}{(b+1)^{2T}} + \BigO\left( \frac{1}{(b+1)^{2MT+2T}}\right)
    \end{align*}
    for the $\BigO$-limit taken with respect to $T\longrightarrow\infty$.
    Using the expansion $\sqrt{a+x} = \sqrt{a} + \BigO(x/\sqrt a)$ for $0<x<a$ and the small $x$ limit, we therefore have
    \[
        | \braket{\Psi}{T} | = \frac{\sqrt{b(2+b)}}{(b+1)^T} + \BigO\left( \frac{1}{(b+1)^{MT}} \right).
    \]
    By \cref{lem:H-bound-gs-2} we further have $\epsilon = \BigO( b\sqrt{MT}/(b+1)^{MT} )$ and thus
    \[
        \epsilon| \braket{\Psi}{T} | = \BigO\left( \frac{\sqrt{MT}}{(b+1)^{MT}} \right) \times \BigO\left( \frac{1}{(b+1)^T} \right) = \BigO\left(\frac{1}{(b+1)^{MT}} \right),
    \]
    as $\sqrt{MT} / (b+1)^T \longrightarrow 0$ for $T\longrightarrow\infty$.
    A similar argument bounds $\epsilon^2$; the claim follows.
\end{proof}

Note that e.g.\ choosing $M=4$ suffices such that $| \braket{\Psi_0}{T} |^2$ in \cref{cor:H-bound-gs} equals $b(b+2)/(b+1)^{2T}$ up to a relative factor of $\BigO(1/(b+1)^{2T})$, as intended; it is clear that a tighter error bound can be achieved by increasing $M$ further.
Furthermore, the overlap with a site $T'<T$ is larger; it is therefore possible to expand \cref{cor:H-bound-gs} to obtain the following claim.
\begin{corollary}\label{cor:H-bound-gs-ex}
Let $M\in\field N$, $M>1$, and $b\ge 1$.
On a chain of length $MT$ and for any $T'\le T$, the ground state overlap
\[
|\braket{\Psi_0}{T'}|^2 = \frac{b (b+2)}{(b+1)^{2 T'}} + \BigO\left( \frac{1}{(b+1)^{MT}} \right) 
\]
in the $\BigO$-limit $T\longrightarrow\infty$.
\end{corollary}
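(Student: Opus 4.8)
The plan is to replay the proof of \cref{cor:H-bound-gs} almost verbatim, the key observation being that in that argument the index $T$ — as opposed to the chain length $MT$ — entered only through the amplitude $\braket{\Psi}{T}=A/(b+1)^T$ of the exponential ansatz \cref{eq:H-bound-gs}, and that on a chain of length $MT$ the normalization $A^2=b(2+b)/(1-(b+1)^{-2MT})$ does not depend on which amplitude we look at. Hence $\braket{\Psi}{T'}=A/(b+1)^{T'}$ for every $T'\le MT$, and substituting $T'$ for $T$ throughout is the whole content of the corollary.

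Concretely I would first record
\[
  |\braket{\Psi}{T'}|^2=\frac{b(2+b)}{1-(b+1)^{-2MT}}\cdot\frac{1}{(b+1)^{2T'}}=\frac{b(2+b)}{(b+1)^{2T'}}+\BigO\!\left(\frac{1}{(b+1)^{2MT+2T'}}\right)=\frac{b(2+b)}{(b+1)^{2T'}}+\BigO\!\left(\frac{1}{(b+1)^{MT}}\right),
\]
the last step since $2MT+2T'\ge MT$, together with the $T'$-uniform bound $|\braket{\Psi}{T'}|\le\sqrt{b(2+b)}\,(b+1)^{-1}(1-(b+1)^{-2})^{-1/2}=\BigO(1)$ (valid as $T'\ge1$, $MT\ge1$). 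Then I would invoke \cref{lem:H-bound-gs-2} on the length-$MT$ chain (legitimate since $b\ge1$), giving $\ket{\Psi_0}=\ket\Psi+\epsilon\ket\xi$ with $\ket\xi$ normalized and $\epsilon=\BigO(b\sqrt{MT}/(b+1)^{MT})$, and expand $|\braket{\Psi_0}{T'}|^2=|\braket{\Psi}{T'}+\epsilon\braket{\xi}{T'}|^2$, so that it differs from $|\braket{\Psi}{T'}|^2$ by at most $2\epsilon|\braket{\Psi}{T'}|+\epsilon^2$. By the uniform bound these cross terms are $\BigO(\sqrt{MT}/(b+1)^{MT})$ and $\BigO(MT/(b+1)^{2MT})$, absorbed into $\BigO((b+1)^{-MT})$ exactly as the analogous terms in the proof of \cref{cor:H-bound-gs}; combining with the displayed expansion gives the claim, and $T'=T$ recovers \cref{cor:H-bound-gs}.

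There is no real obstacle here — the computation is routine — but the one point deserving care is \emph{uniformity in $T'$}: the implied constants must remain bounded as $T'$ ranges over $\{1,\dots,T\}$, which is exactly why $A$ (depending only on $MT$) is kept outside the per-$T'$ estimates and why the $T'$-independent bound $|\braket{\Psi}{T'}|=\BigO(1)$, rather than the sharper $\BigO((b+1)^{-T'})$, is what one feeds into the cross-term estimate. Absorbing the sub-exponential prefactor $\sqrt{MT}$ into $\BigO((b+1)^{-MT})$ is the same mild abuse already made in \cref{lem:H-bound-gs-2} and \cref{cor:H-bound-gs}; as a consistency check, this error term is genuinely subdominant to the leading term $b(b+2)/(b+1)^{2T'}$ for all $T'\le T$ once $M>2$ (borderline at $M=2$, $T'=T$), so the estimate is informative across the full range claimed.
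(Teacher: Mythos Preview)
Your proof is correct and matches the paper's approach exactly: the paper gives no explicit proof of this corollary, only remarking that one can ``expand \cref{cor:H-bound-gs}'' since the overlap with a site $T'<T$ is larger---i.e., replay that proof with $T'$ in place of $T$, which is precisely what you do. One minor correction to your commentary: in \cref{cor:H-bound-gs} the $\sqrt{MT}$ prefactor is \emph{not} absorbed by abuse---the extra factor $(b+1)^{-T}$ from $|\braket{\Psi}{T}|$ genuinely kills it---so the informality you flag for small fixed $T'$ is new to this extension (the sharper bound $|\braket{\Psi}{T'}|=\BigO((b+1)^{-T'})$ would recover rigour whenever $T'\gtrsim\log T$), though harmless at the paper's level of precision.
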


In the same fashion as in \cref{cor:H-bound-gs-ex}, we can now immediately deduce the overlap of a state $\ket{T'}$ with the rest of the spectrum of $\Hb$.
\begin{corollary}\label{cor:H-bound-T-overlap}
    Let $M\in\field N$, $M>1$, and $b\ge 1$.
    We consider a chain of length $MT$, and let the eigenstates $\ket{\Psi_i}$ of $\Hb$ be as in \cref{lem:H-bound-gs-2}.
    Then for all $T'\le T$ and in the limit $T\longrightarrow\infty$, we have
    \[
        \sum_{i=1}^{MT-1}|\braket{\Psi_i}{T'}|^2 = 1+\BigO\left(\frac{b(b+2)}{(b+1)^{2T'}} \right)
    \]
\end{corollary}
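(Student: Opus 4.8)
The plan is to derive this directly from completeness together with \cref{cor:H-bound-gs-ex}, so that essentially no new estimate is needed. Since $\Hb$ acts on $\field C^{MT}$ and its eigenvectors $\{\ket{\Psi_i}\}_{i=0}^{MT-1}$ form an orthonormal basis of that space, Parseval's identity gives $\sum_{i=0}^{MT-1}|\braket{\Psi_i}{T'}|^2 = \braket{T'}{T'} = 1$ for every $T'\le T$. Peeling off the ground-state contribution,
\[
    \sum_{i=1}^{MT-1}|\braket{\Psi_i}{T'}|^2 = 1 - |\braket{\Psi_0}{T'}|^2,
\]
so the whole task reduces to inserting the value of the single overlap $|\braket{\Psi_0}{T'}|^2$, which \cref{cor:H-bound-gs-ex} supplies as $b(b+2)/(b+1)^{2T'} + \BigO(1/(b+1)^{MT})$ in the limit $T\longrightarrow\infty$.

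Substituting this and using $-x=\BigO(x)$, I would obtain $\sum_{i=1}^{MT-1}|\braket{\Psi_i}{T'}|^2 = 1 + \BigO(b(b+2)/(b+1)^{2T'}) + \BigO(1/(b+1)^{MT})$, and then absorb the last term into the first. Since $M\in\field N$ with $M>1$ means $M\ge2$, and $T'\le T$, we have $2T'\le 2T\le MT$, hence $(b+1)^{-MT}\le(b+1)^{-2T'}\le b(b+2)(b+1)^{-2T'}$ for $b\ge1$; therefore $\BigO(1/(b+1)^{MT})$ is subsumed by $\BigO(b(b+2)/(b+1)^{2T'})$, and the claimed identity follows.

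There is essentially no obstacle here beyond bookkeeping. The only point that warrants a line of care is verifying that the $\BigO(1/(b+1)^{MT})$ emerging from \cref{cor:H-bound-gs-ex} is dominated, uniformly over all $T'\le T$, by the advertised error $\BigO(b(b+2)/(b+1)^{2T'})$, and that every implicit constant is independent of $T$ (with $b$ and $M$ treated as fixed, consistent with the $\BigO$-convention used throughout \cref{sec:H-bound}). It is perhaps also worth remarking that although the statement is phrased with a ``$+$'' and a positive argument, it in fact encodes the sharper asymptotic $1 - b(b+2)/(b+1)^{2T'} + \BigO((b+1)^{-MT})$: the weight that leaks out of the ground state onto $\ket{T'}$ is carried almost entirely by the excited states, up to an exponentially small correction.
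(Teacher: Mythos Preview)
Your proposal is correct and matches the paper's approach: the paper gives no explicit proof for this corollary, only remarking that it follows ``in the same fashion as in \cref{cor:H-bound-gs-ex}'', and your Parseval-plus-subtraction argument is precisely the intended one-line derivation. Your additional care in verifying that the $\BigO((b+1)^{-MT})$ term is absorbed by the stated bound (using $M\ge2$ and $T'\le T$) is a welcome clarification the paper leaves implicit.
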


As we have seen, there is an exponential falloff of the ground state of $\Hb$ away from its bonus term, and the magnitude of overlap $|\braket{\Psi_0}{T}|$ is tightly-controlled by \cref{cor:H-bound-gs,cor:H-bound-gs-ex,cor:H-bound-T-overlap}.
Since $T$ is discrete and we want $b$ to be taken from a fixed interval, an obvious question that arieses is which values $r:=|\braket{\Psi_0}{T'}|^2 \in \field R$ we can construct, by choosing $T$, $T'$, $M$ and $b$ appropriately.
This is a straightforward calculation; yet since we will be interested of the scaling of the parameters $T$, $M$ and $b$ with respect to $r$ we state the result here explicitly.
\begin{lemma}\label{lem:any-coupling-strength}
Let  $r\in(0,1/100)$.
Then there exist an $M\in\field N$, $M>3$, an integer $T\in[\ln(3/r)/\ln 4, \ln(15/r)/\ln 16]$ and a real number $b\in[1,3]$ such that, if $\ket{\Psi_0}$ denotes the ground state of $\Hb$ describing a chain of length $MT$, we have $|\braket{\Psi_0}{T}|^2=r$.
\end{lemma}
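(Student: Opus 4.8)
The plan is to reduce the statement to the intermediate value theorem applied to $b\mapsto\abs{\braket{\Psi_0}{T}}^2$, with \cref{cor:H-bound-gs} controlling the leading behaviour and the free parameter $M$ used to make the error term negligible. Concretely, I would first find an integer $T$ for which the \emph{leading} term $g(b,T):=b(b+2)/(b+1)^{2T}$ already sweeps through $r$ as $b$ runs over $[1,3]$, and then switch on $M$ to ensure the true overlap still passes through $r$ exactly.

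For the leading term: fix an integer $T\ge2$ and substitute $u=b+1\in[2,4]$, so that $g=u^{2-2T}-u^{-2T}$; a one-line derivative computation shows $g$ is strictly decreasing in $u$ (hence in $b$) on $[2,4]$ whenever $T\ge2$, with endpoint values $g(1,T)=3/4^{T}$ and $g(3,T)=15/16^{T}$. Thus $g(b,T)$ decreases monotonically through the interval $[15/16^T,\,3/4^T]$. Now $15/16^T<r<3/4^T$ is equivalent to $T\in\big(\ln(15/r)/\ln 16,\ \ln(3/r)/\ln 4\big)$, an open interval of length $\ln\!\big(3/(5r)\big)/(2\ln 4)$, which exceeds $1$ precisely because $r<1/100$ (it is then at least $\ln 60/(2\ln 4)>1$). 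An open interval of length greater than $1$ contains an integer, so I fix such a $T$; one checks $T\ge3$, so the monotonicity above applies and $r$ lies strictly between $g(3,T)$ and $g(1,T)$. Set $\delta:=\min\{g(1,T)-r,\ r-g(3,T)\}>0$.

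Next I bring in the error term. Let $f(b):=\abs{\braket{\Psi_0}{T}}^2$ for the ground state $\ket{\Psi_0}$ of $\Hb$ on a chain of length $MT$. Since $\Hb$ depends linearly on $b$ and, by \cref{lem:H-bound-gs-1}, has a non-degenerate ground state for every $b>0$, $f$ depends continuously (indeed real-analytically) on $b$ on $[1,3]$; note $f$ is well defined despite the phase ambiguity of $\ket{\Psi_0}$. By the quantitative content of \cref{lem:H-bound-gs-2,cor:H-bound-gs} — retracing those proofs with chain length $L=MT$ and keeping constants — one gets $f(b)=g(b,T)+e(b)$ with a bound $\sup_{b\in[1,3]}\abs{e(b)}\le C\sqrt{MT}\,2^{-(M+1)T}$ for an absolute constant $C$ (using $b\ge1$, so $b+1\ge2$); for fixed $T$ this tends to $0$ as $M\to\infty$. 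Choose an integer $M>3$ large enough that this supremum is below $\delta$. Then $f(1)>g(1,T)-\delta\ge r$ and $f(3)<g(3,T)+\delta\le r$, so by the intermediate value theorem there is $b\in(1,3)$ with $f(b)=r$. The triple $(M,T,b)$ then has all the required properties: $b\in[1,3]$, $M\in\field N$ with $M>3$, and $T$ lies between $\ln(15/r)/\ln 16$ and $\ln(3/r)/\ln 4$.

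The one genuinely non-routine step is the $M$-explicit, $b$-uniform error bound on $e(b)$: \cref{cor:H-bound-gs} is phrased only as a $T\to\infty$ asymptotic, whereas here I need a bound that is small for a \emph{fixed} $T$ once $M$ is large, with constants independent of $b\in[1,3]$. This amounts to rerunning the estimates in \cref{lem:H-bound-gs-2} and \cref{cor:H-bound-gs} with $L=MT$ and checking that every constant appearing there ($\alpha_i\le 4b/(b+1)^L$, $A\le b+1$, etc.) is bounded uniformly for $b\ge1$; the dominant error contribution is $\BigO(\sqrt{MT}\,(b+1)^{-(M+1)T})$, which is what yields the displayed bound. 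The continuity of $f$ in $b$ should be stated explicitly but is standard, and the monotonicity computation and the interval-length estimate are elementary.
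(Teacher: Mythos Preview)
Your argument is correct and follows essentially the same route as the paper: identify the leading term $b(b+2)/(b+1)^{2T}$, show the interval of admissible $T$ has length $\ln(3/(5r))/\ln 16>1$ so it contains an integer, then use the intermediate value theorem together with a choice of large $M$ to absorb the error term. Your version is in fact tidier than the paper's---you fix $T$ first and vary $b$ (proving monotonicity of $g(\cdot,T)$ explicitly), and you correctly flag and address the one subtlety the paper glosses over, namely that the error estimate from \cref{cor:H-bound-gs} must be made uniform in $b\in[1,3]$ for \emph{fixed} $T$ as $M\to\infty$, which indeed follows by rerunning the bounds of \cref{lem:H-bound-gs-2} with $b+1\ge 2$.
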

\begin{proof}
By \cref{cor:H-bound-gs}, a short calculation yields
\[
    |\braket{\Psi_0}{T}|^2 = r
    \quad\Longleftrightarrow\quad
    T = \frac{\ln(b(b+2)/r)}{2\ln(b+1)} + \epsilon
    \quad\text{for some\ }
    \epsilon = \BigO\left( \frac{1}{(b+1)^{(M-2)T}} \right).
\]
What remains to be shown is that we chan choose $M$ large enough such that for any $r\in(0,1/100)$, there exists a $b\in[1,3]$ such that the above equation is satisfied, even under the restriction that $T$ can only assume an \emph{integer} value.

To prove this, we note that both enumerator and denominator in the expression for $T$ increase monotonically with $b$; their extreme points are thus reached at the endpoints of the interval $b\in[1,3]$.
For the enumerator they are $\ln(3/r)$ and $\ln(15/r)$, for the denominator $2\ln{2}$ and $2\ln{4}$.
We note that the achievable difference $T|_{b=1} - T|_{b=3} = \ln(3/5r)/\ln 16 > 5/4\ \forall r\in(0,1/100)$.
The claim of the lemma then follows from the intermediate value theorem and choosing $M$ large enough such that $\epsilon<1/10$.
\end{proof}
We emphasize that in \cref{lem:any-coupling-strength} we can pick $b, M$ and $T$ such that $| \braket{\Psi_0}{T} |^2=r$ \emph{exactly}, without any remaining error term.
By \cref{cor:H-bound-gs-ex}, we can alternatively demand that $T$ be fixed, and choose to tune the overlap $| \braket{\Psi_0}{T'} |^2$ for some $T'<T$.
Interestingly, if we have multiple copies of the spin Hamiltonian $\op H_{b_i}$, we can achieve the same feat, even under the condition that $M$ and $T$ is identical for all of them.
More precisely, for a range of target overlaps $r_i \in (0,1/100)$, we wish to find states $T_i \le T$ and biases $b_i \in [1,3]$,
such that $r_i = |\braket*{\Psi_{0,i}}{T_i}|^2$ (where $\ket*{\Psi_{0,i}}$ denotes the ground state of $\op H_{b_i}$).
\begin{corollary}\label{cor:any-coupling-strength}
Take a family $\{ r_i \}_{i\in I}$ for a finite index set $I$, such that $r_i\in(0,1/100)\ \forall i$.
Then there exist $M,T\in\field N$, $M>3$ and $T\in[\ln(3/\bar r)/\ln 4, \ln(15/\bar r)/\ln 16]$ where $\bar r:=\min_{i\in I} \{ r_i\}$, and a family of Hamiltonians $\{ \op H_{b_i} \}_{i\in I}$, each on chain length $MT$ and such that for all $i$ there exists a bias $b_i\in[1,3]$ and state $T_i\le T$ such that $|\braket*{\Psi_{0,i}}{T_i}|^2=r_i$, where $\ket*{\Psi_{0,i}}$ is the ground state of $\op H_{b_i}$.
\end{corollary}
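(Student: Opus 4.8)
The plan is to build on two facts from above: \cref{lem:any-coupling-strength}, which identifies the integers $T$ that can realise a single target overlap $r$ (those in an interval of width $>5/4$), and---this is what makes a \emph{common} $T$ work---the observation that the error $\BigO(1/(b+1)^{MT})$ in \cref{cor:H-bound-gs-ex} depends only on the total chain length $MT$, not on $T'$. So once a chain length is committed to, one chain can carry every overlap $r_i$ at once, each tuned by its own pair $(T_i,b_i)$; and (exactly as in the proof of \cref{lem:any-coupling-strength}, where $\epsilon<1/10$ is arranged by enlarging $M$) the error can be driven below any prescribed constant by taking $M$ large.

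First I would fix a small $\delta\in(0,1/8)$ and let $T$ be the largest integer strictly below $\ln(3/\bar r)/\ln 4-\delta$. Since $\bar r<1/100$ the window $[\ln(15/\bar r)/\ln16,\ \ln(3/\bar r)/\ln4]$ has width $\ln(3/(5\bar r))/\ln16>5/4$ (by \cref{lem:any-coupling-strength}), and a one-line check puts $T$ inside it, so $T$ lies in the range claimed. Then, for each $i\in I$, I would take $T_i$ to be the largest integer in $J(r_i):=(\ln(15/r_i)/\ln16+\delta,\ \ln(3/r_i)/\ln4-\delta)$, an interval of width $>5/4-2\delta>1$, so such an integer exists; and because $r_i\ge\bar r$ the right endpoint of $J(r_i)$ is $\le\ln(3/\bar r)/\ln4-\delta$, which forces $T_i\le T$. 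Hence \cref{cor:H-bound-gs-ex} applies with $T'=T_i$ on a single chain of length $MT$.

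The third step is one intermediate-value argument run in parallel over $i\in I$. Writing $\phi_{T'}(b):=b(b+2)/(b+1)^{2T'}$, so $\phi_{T'}(1)=3\cdot4^{-T'}$ and $\phi_{T'}(3)=15\cdot16^{-T'}$, the $\delta$-margin by which $T_i$ sits inside $J(r_i)$ yields $\phi_{T_i}(1)>r_i\,4^{\delta}>r_i$ and $\phi_{T_i}(3)<r_i\,16^{-\delta}<r_i$; let $c>0$ be the least of the finitely many gaps $\phi_{T_i}(1)-r_i$ and $r_i-\phi_{T_i}(3)$, so $c\ge\bar r(4^\delta-1)$. By \cref{cor:H-bound-gs-ex}, on the chain of length $MT$ one has $|\braket{\Psi_0}{T_i}|^2=\phi_{T_i}(b)+\BigO(1/(b+1)^{MT})$ uniformly in $i$ and in $b\in[1,3]$, and I would pick $M>3$ large enough that this error is $<c/2$. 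For each $i$ the map $b\mapsto f_i(b):=|\braket{\Psi_0(b)}{T_i}|^2$ is continuous on $[1,3]$---the ground state of $\op H_b$ is simple for $b\ge1$ by \cref{lem:H-bound-gs-1}, hence depends analytically on $b$---and the above bounds give $f_i(1)>r_i+c/2>r_i$ and $f_i(3)<r_i-c/2<r_i$; the intermediate value theorem then supplies a bias $b_i\in(1,3)$ with $f_i(b_i)=r_i$, i.e.\ $|\braket*{\Psi_{0,i}}{T_i}|^2=r_i$ for the ground state $\ket*{\Psi_{0,i}}$ of $\op H_{b_i}$ on the chain of length $MT$, as required.

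The step I expect to be the real obstacle---the only part genuinely beyond \cref{lem:any-coupling-strength}---is the coordination in the second paragraph: a single integer $T$ must at once be admissible for $\bar r$ (so the stated range for $T$ holds) and dominate every $T_i$. It works because $\bar r=\min_i r_i$ owns the highest admissible window, and shrinking $r_i$ only moves its window for $T_i$ \emph{upward}, so choosing $T$ near the top of $\bar r$'s window and stepping down for the larger $r_i$ always stays inside $J(r_i)$ while remaining $\le T$; the $T'$-independence of the \cref{cor:H-bound-gs-ex} error then lets the one chain length $MT$ serve all interactions simultaneously, instead of one chain per interaction.
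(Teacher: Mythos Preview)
Your proof is correct and follows essentially the same approach the paper intends: the paper's own proof is a one-line ``Follows analogous to \cref{lem:any-coupling-strength}, using \cref{cor:H-bound-gs-ex} instead of \cref{cor:H-bound-gs}'', and what you have written is precisely a careful elaboration of that sentence. Your introduction of the $\delta$-margin and the explicit coordination step (choosing $T$ near the top of $\bar r$'s admissible window so that every $T_i\le T$) makes rigorous the details the paper leaves implicit, and your use of the $T'$-independence of the error in \cref{cor:H-bound-gs-ex} to justify a single common chain length $MT$ is exactly the point of invoking that corollary in place of \cref{cor:H-bound-gs}.
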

\begin{proof}
Follows analogous to \cref{lem:any-coupling-strength}, using \cref{cor:H-bound-gs-ex} instead of \cref{cor:H-bound-gs}.
\end{proof}

For now, this $\Hb$ as defined in \cref{eq:H-bound} acts on a single qudit of dimension $T$; but by the following remark we can ensure the interactions are all defined on a constant local dimension.
\newcommand{\Tlegal}{S_\mathrm{good}}
\begin{remark}\label{rem:clock-times}
Let $\H:=(\field C^d)^s$ be a spin chain of length $s$ and local dimension $d$.
Then the following exists:
Basis states $\{ \ket i \}$ of $\H$ such that $\{ \ket i \} =: \Tlegal \dot{\cup} \Tlegal^c$, where $T:=|\Tlegal|$; define $\Hb'$ on the basis states $\ket t\in\Tlegal$ as in \cref{eq:H-bound}.
Then
\begin{enumerate}
\item $\Hb'$ has only translationally-invariant nearest-neighbour interactions.
\item There exists a $2$-body interaction term $\op p$, such that $\op H := \Hb + \sum_{i=1}^{s-1} \op p_{i,i+1}$---where $\op p_{i,i+1}$ acts on the neighbouring spins $(i,i+1)$ only---such that $\op H$ is block-diagonal with respect to the partition $\Tlegal \cup \Tlegal^c$.
$\op H|_{\spn(\Tlegal )} \cong \Hb$ (unitary equivalence), where $\Hb$ is defined in \cref{eq:H-bound}, but on Hilbert space $\field C^T$.
The other block of $\op H$ satisfies $\op H|_{\spn( \Tlegal^c)}\ge 0$.
\item Either $T=(d-1)\times(s-1)$, or $T=B^{s-3}$ for $B=\lfloor(d-5)/2\rfloor$.
\end{enumerate}
\end{remark}
\begin{proof}
While the proof of this remark is non-trivial---it forms the foundation of Kitaev's seminal proof of QMA-hardness of approximating ground states of local Hamiltonians, see \cite{Kitaev2002} where a $5$-local variant is proven---it has been refined and repeated many times throughout literature (\cite{Kempe2006,Aharonov2009,Oliveira2008,Gottesman2009,Bausch2016,Bausch2017,Nagaj2008,Nagaj2012,Caha2017,Bausch2016a}, amongst others), so we will omit it.
The specific scaling of $T$ with respect to the local dimension $d$ and chain length $s$ can be found for $d=3$ in \cite[Sec.~8.3.4]{Bausch2016}, and $B=6$ in \cite[Sec.~8.3.3]{Bausch2016}; the general $d$ and $B$ cases are immediate consequences, see \cite[Rem.~12]{Bausch2018b}.
\end{proof}
In particular, \cref{rem:clock-times} shows that we can construct translationally-invariant version of the bound state Hamiltonian $\Hb$ from \cref{sec:H-bound}, which has local nearest-neighbour coupling terms, the same single negative-energy ground state $\ket{\Psi_0}$ with weights constrained as e.g.\ in \cref{lem:any-coupling-strength}, and a spectral gap of $\ge 1/2$.

\section{Main Result}
To make rigorous what we mean by one Hamiltonian to approximate another in its low energy subspace, we phrase the following definition.
\begin{definition}\label{def:approximate}
Let $\op H_0$ be a local Hamiltonian on a Hilbert space $\H=(\field C^2)^{\otimes n}$ such that each local term has operator norm bounded by $r(n)$.
We say that $\op H'$ on $\H\otimes\H_2$ approximates $\op H_0$---to error $\epsilon$---in its low-energy subspace if the following conditions hold.
\begin{enumerate}
\item $\op H'$ has a band gap, i.e.\ its spectrum $\sigma(\op H')\subset(-\infty,a) \cup (b,\infty)$ with $a<b$ independent of $n$.
\item Let $\Pi_-$ be the projector onto the lower part of the spectrum, i.e.\ on $\sigma(\op H') \cap (-\infty,a)$.
Then there exists a state $\ket{\psi_0}\in\H_2$ such that
\[
    r(n) \Pi_-\op H'\Pi_- = \op H_0\otimes \ketbra{\psi_0} + \BigO(\epsilon),
\]
where Landau $\BigO(\epsilon)$ term is measured with respect to the operator norm.
\end{enumerate}
\end{definition}

\newcommand{\cl}{_\textnormal{clock}}
\newcommand{\tile}{_\textnormal{tile}}
\begin{theorem}\label{th:main}
Let $\{ \op H_0(n) \}_{n\in\field N}$ be a fixed interaction degree $k-$local family of Hamiltonians, where $\op H_0(n)=\sum_{i=1}^N \op h_i$ is defined on a multipartite Hilbert space $\H=(\field C^d)^{\otimes n}$,
 and where all $N=\poly n$ interactions have norm $\|\op h_i\| = r_i$, where $r_i=r_i(n)$ with $|r_i(n)/r_j(n)|\le r(n)\ \forall i,j$.
Let $\delta>0$.
Then there exists a family of fixed interaction degree $k+1$-local Hamiltonians $\{ \op H'(n) \}_{n\in\field N}$, where $\op H'=\sum_{i=1}^{N'}\op q_i$ on $\H':=\H \otimes \H_2$, $N'=\poly n$, $\H_2=(\field C^q)^{\otimes\poly n}$,
where $1\le\|\op q_i\|\le N^{2+\delta}$, and such that $\op H'(n)$ approximates $\op H_0(n)$ in its low-energy subspace, in the sense of \cref{def:approximate}, with relative error $\BigO(N^{-\delta})$.
The local dimension of the ancilliary system satisfies
\begin{enumerate}
\item $q=3$ if $r=\BigO(\exp(\poly n))$, or otherwise
\item $q=9$ if $r=\BigO(\exp(\exp(\poly n)))$.
\end{enumerate}
\end{theorem}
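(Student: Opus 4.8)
The plan is to build $\op H'$ as a sum of perturbative gadgets, one per local term $\op h_i$ of $\op H_0$, where the "strong interaction" in each gadget is supplied not by a coupling of scaling norm but by a bound-state clock Hamiltonian $\Hb$ from \cref{sec:H-bound} whose ground state amplitude is tuned via \cref{lem:any-coupling-strength} / \cref{cor:any-coupling-strength}. Concretely, for each $i$ I would introduce an ancilla spin chain carrying a translationally-invariant clock $\op H_{b_i}\cl$ (via \cref{rem:clock-times}, giving local dimension $q=3$), whose unique negative-energy ground state $\ket{\Psi_{0,i}}$ has overlap $|\braket*{\Psi_{0,i}}{T}|^2 = r_i/(N^{2+\delta}\max_j r_j)$ — note this lies in $(0,1/100)$ once we rescale, which is exactly the regime \cref{lem:any-coupling-strength} covers. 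The physical term $\op h_i$ is then coupled to the system with bare strength $\Theta(N^{2+\delta})$ but \emph{conditioned} on the clock register being in state $\ket T$: the effective strength felt in the low-energy space is the bare strength times the amplitude $|\braket*{\Psi_{0,i}}{T}|^2$, reproducing $\op h_i$ with the correct relative weight $r_i/\max_j r_j$. The extra locality $+1$ comes precisely from this conditioning spin; the doubly-exponential norm ratio in case (2) is absorbed by allowing $T=B^{s-3}$ scaling in \cref{rem:clock-times}, which needs $q=9$.

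The second ingredient is a \emph{tiling Hamiltonian} $\op H\tile$ on the collection of ancilla registers, whose role is to force all the clocks into their joint ground-state sector and to make cross-terms between different gadgets vanish at the relevant orders of the Feynman–Dyson expansion of \cref{sec:f-d}. I would set $\op H' := (1/\text{(scale)})\big(\Delta(\op H\tile + \sum_i \op H_{b_i}\cl) + \op V\big)$ where $\op V$ collects the conditioned $\op h_i$ terms plus bookkeeping; $\Delta$ is a \emph{single} energy scale. The key structural claim to verify is that in the block-diagonal decomposition induced by $\op H\tile$, the only surviving sector with negative energy is the one where every clock sits in $\ket{\Psi_{0,i}}$, and that within this sector the self-energy $\Sigma_-(z)$ — expanded via \cref{eq:Z-expansion} — has its leading non-trivial contribution equal to $\sum_i |\braket*{\Psi_{0,i}}{T}|^2 \op h_i \otimes (\text{projector onto joint clock ground state}) = \op H_0 \otimes \ketbra{\psi_0}$ up to the stated error, with all higher-order and cross terms suppressed by the tiling structure and by the exponential falloff quantified in \cref{cor:H-bound-gs-ex,cor:H-bound-T-overlap}. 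Then \cref{th:pert-1} (and \cref{th:pert-2} for the eigenvector statement, since we keep $\Delta/\|\op V\|$ growing) upgrades the self-energy approximation to a spectral approximation, and repackaging in the form of \cref{def:approximate} finishes the argument. The bound $1\le\|\op q_i\|\le N^{2+\delta}$ is read off from the construction: clock and tiling terms are $\Theta(1)$ after normalization, the conditioned interaction terms carry the single scale $\Theta(N^{2+\delta})$.

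The choice of the scale $\Theta(N^{2+\delta})$ and the relative error $\BigO(N^{-\delta})$ is dictated by a budget argument: the Feynman–Dyson series converges and its truncation error is controlled only if $\|\op V\| < \Delta/2$, and with $N$ gadgets each contributing an $\BigO(1)$-normalized effective term we need $\Delta = \Omega(N)$ just for convergence; the extra $N^{1+\delta}$ buys enough suppression of the second-order ("self-energy proper") correction, which scales like $\|\op V\|^2/\Delta \sim N^2/\Delta = N^{-\delta}$ relative to the $\BigO(1)$ target — hence the relative error $\BigO(N^{-\delta})$. I would organize the proof as: (i) define the ancilla, clock, and tiling pieces precisely and invoke \cref{rem:clock-times} for the translational-invariance/local-dimension bookkeeping; (ii) tune the $b_i$ and $T$ via \cref{cor:any-coupling-strength} so the amplitudes hit the prescribed $r_i$ ratios \emph{exactly}; (iii) identify the negative-energy block and compute $\op H_-$, $\op V_\pm$, $\op V_{\pm\mp}$ in that block; (iv) expand $\Sigma_-(z)$ and show term-by-term that cross-gadget contributions die and the leading term is $\op H_0\otimes\ketbra{\psi_0}$; (v) bound the tail and apply \cref{th:pert-1,th:pert-2}. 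The main obstacle I anticipate is step (iv): showing that the tiling Hamiltonian genuinely kills \emph{all} cross terms at every order that could contribute above the target error — in a naive Feynman–Dyson expansion these cross terms proliferate and are exactly what makes multi-term gadget constructions scale badly, so the entire point of the tiling construction is to make the low-energy sector effectively a direct sum over independent single-gadget problems, and verifying this requires a careful look at which intermediate states in $\op G_+$ the perturbation $\op V$ can reach from the joint clock ground state.
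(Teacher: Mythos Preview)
Your architecture --- per-term clock ancillas with overlaps tuned via \cref{lem:any-coupling-strength}, a first-order Feynman--Dyson expansion yielding $\op V_- \propto \op H_0 \otimes \Pi_-$, and a tiling device to handle cross terms --- is the paper's. But two points are off, and the second is a genuine gap for the stated $k{+}1$ locality.

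First, you have the placement of the single scale $\Theta(N^{2+\delta})$ backwards. In the paper (\cref{eq:H-approx}) the scale $C=\Theta(N^{2+\delta})$ multiplies the \emph{clock} Hamiltonian $\op H\cl$, creating the gap; the perturbation $\op V=\sum_i \op h_i\otimes\ketbra{T_i}_i$ has unit-norm summands and overlap tuned to $p_{0,i}^2=r_i/(200\,r(n))$ with no $N$-dependence. Your version --- bare interactions of strength $N^{2+\delta}$ and overlap $r_i/(N^{2+\delta}\max_j r_j)$ --- gives $\|\op V\|\sim N^{3+\delta}$, destroying convergence (or forcing $\Delta\sim N^{3+\delta}$, after which the second-order correction is $\BigO(1)$ rather than small). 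Your own budget estimate $\|\op V\|^2/\Delta\sim N^2/\Delta$ already presupposes the paper's placement, not yours.

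Second, keeping the tiling Hamiltonian costs locality $+2$, not $+1$: each $\op h_i$ picks up the clock projector $\ketbra{T_i}_i$ \emph{and} the tile projector $\ketbra{1}_i$ (\cref{eq:h-coupled-tile}). The paper uses the tiling only as a scaffold to derive the clean diagonal expansion \cref{eq:self-energy-full}, and then in \cref{sec:proof} \emph{removes it} and bounds the resulting cross terms directly: at order $l$ there are at most $N^l$ extra products, each suppressed by $C^{-l}$, giving $\epsilon'=\BigO(N^{-2-\delta})$ and hence $\BigO(N^{-\delta})$ via \cref{th:pert-2}. So the step you flag as the main obstacle --- showing the tiling kills cross terms --- is in fact immediate once the tile projectors are attached (orthogonality in the tiling ground space of \cref{eq:tiling-gs}); the step that actually delivers $k{+}1$ locality --- dropping the tiling and controlling what survives --- is absent from your plan. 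A smaller omission: the paper first reduces WLOG to $\op h_i^2=\1$, so that the $l=1$ term in \cref{eq:self-energy-full} is a pure energy shift and can be discarded.
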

\noindent
We give a constructive proof of \cref{th:main}; we note that while a variant of \cref{th:main} may in principle also hold for an $r(n)$ that grows faster than doubly exponentially in $n$, our proof does not easily extend to that case.
The next few sections will be spent introducing the machinery necessary for the proof.
As a first step we will prove a slightly weaker variant, where we increase the locality of the interactions by $2$ instead of $1$.
This will save us some tedious algebra in due course, but we will lift the extra constraints and obtain \cref{th:main} in \cref{sec:proof}.

To further simplify notation, we will generally speak of a Hamiltonian $\op H_0$ instead of a family of Hamiltonians $\{ \op H_0(n) \}_{n\in\field N}$---which is the only type of family of Hamiltonians we will be considering here, as per \cref{th:main}; therefore the indexing variable $n$---i.e.\ the system size---will always be clear from the context.

Let for now $\H_2=\H\cl \otimes \H\tile$, where each Hilbert space will be used for one specific step in the construction.
Without loss of generality, we will also assume that the system does not decompose into mutually non-interacting subsets;
if this is the case, we can always regard each system separately.
We first list the two ingredients for our construction.

\subsection{Local Bound State Hamiltonians with Controlled Falloff}
Let $M>3$ be a fixed integer.
For every interaction $\op h_i$ in $\op H_0=\sum_{i=1}^N \op h_i$ as per \cref{th:main}, we add an ancilliary system $\field C^{T_i}$, where $T_i=\BigO(\poly N)$ will be specified later.
Then $\H\cl=\bigotimes_i\field C^{T_i}=:\bigotimes_i\H\cl^{(i)}$.
On each $\H\cl^{(i)}$, we define the Hamiltonian\footnote{The subscript ``clock'' stems from the standard terminology in Hamiltoinan complexity theory where the graph Laplacian part of \cref{eq:biased-clock-ham} denotes the transition terms of a so-called history state Hamiltonian.}
\begin{equation}\label{eq:biased-clock-ham}
    \op H\cl^{(i)}:= - (b_i+1)\ketbra{0} + \sum_{t=0}^{MT_i-1}(\ket t-\ket{t+1})(\bra t-\bra{t+1}),
\end{equation}
where $b_i\in[1,3]$ independent of $n$ to be specified later; this is precisely $\Hb$ from \cref{sec:H-bound}, where we emphasize the sum running form $t=0$ to $t=MT_i-1$.
As noted at the end of \cref{sec:H-bound}, $\op H\cl^{(i)}$ acts on a single qudit of dimension $MT_i$; by \cref{rem:clock-times} we can similarly define $\op H\cl^{(i)}$ to have $2$-local nearest neighbour interactions on a constant local dimension spin chain, and all of the following construction will go through unaltered.
We set $\op H\cl := \sum_{i=1}^N \op H\cl^{(i)}$.

In addition, we raise each local interaction $\op h_i$ in $\op H_0$ to couple to the $T_i$\textsuperscript{th} basis state, i.e.\ we write
\begin{equation}\label{eq:h-coupled-clock}
    \op h_i':=\op h_i\otimes(\1\otimes\ldots\otimes\1\otimes\ketbra{T_i}\otimes\1\otimes\ldots\otimes\1)=:\op h_i\otimes\ketbra{T_i}_i.
\end{equation}

We remark that $\ketbra{T_i}_i$ can be made into an at most $2$-local projector on a spin chain in a similar fashion as $\Hb$; how exactly this is done will depend on the construction used to turn $\Hb$ into a local interaction operator, and we refer the reader to \cref{rem:clock-times} and the references mentioned in the proof for more details on how this can be achieved.

The reason for choosing $\op H\cl^{(i)}$ to run to $t=MT_i-1$, and then couple $\op h_i$ to the $T_i$\textsuperscript{th} basis state is that, as per \cref{lem:any-coupling-strength}, we can very precisely control the weight $\braket{T}{\Psi_0}$ of the ground state $\ket{\Psi_0}$ of $\Hb$ if it is defined over a path graph Laplacian of length $MT$ for $M>3$.
In turn, this control will allow us to tune the effective coupling strength for the $\op h_i$ by chosing $b_i$ and $T_i$ appropriately.

\subsection{Unique Coupling Tiling}
We will use $\H\tile$ to introduce an extra coupling term to the $\op h_i'$ that will force products of two distinct terms---i.e.\ $\op h_i'\op h_j'$ for $i\neq j$---to vanish.
In principle this is straightforward; if $\H\tile$ was, say, $\field C^N$, we could introduce an orthogonal projector for each interaction via $\op h_i'\otimes\ketbra{i}$.
Then clearly $(\op h_i\otimes\ketbra{i})(\op h_j\otimes\ketbra{j})=0\ \forall i\neq j$.
The issue with this solution is that we introduced a single $N$-dimensional spin with a high interaction degree, which we want to avoid.

To circumvent this problem, we introduce an extra qutrit per interaction, i.e.\ as before $\H\tile^{(i)}:=\field C^3$.
We furthermore add one extra qutrit on the left and right side with indices $i=0$ and $i=N+1$, and set $\H\tile:=\bigotimes_{i=0}^{N+1}\H\tile^{(i)}$.
On this space, we introduce a diagonal tiling Hamiltonian \`a la
\begin{equation}\label{eq:tiling-ham}
    \op H\tile := 2\sum_{i=1}^{N-1}\big[\ketbra{21} + \ketbra{20} + \ketbra{10} + \ketbra{11}\big]_{i,i+1} - \sum_{i=0}^{N-1}\ketbra{012}_{i,i+1,i+2}.
\end{equation}
It is easy to check that all eigenvectors of $\op H\tile$ are product states of the basis $\{\ket 0, \ket 1, \ket 2\}$ (i.e.\ ternary strings), with an $N$-fold degenerate ground space
\begin{equation}\label{eq:tiling-gs}
    \mathcal L_0(\op H\tile) = \spn\{ \ket{0122\cdots2222}, \ket{0012\cdots2222}, \ldots, \ket{0000\cdots0012} \}.
\end{equation}
Observe that the states are such that there is precisely one, respectively, where a $\ket 1$ is at position $i$ for all $1\le i\le N$, and that the ground space energy is precisely $-1$, with a spectral gap of $1$.

We couple the $\op h_i'$ to $\H\tile$ with interaction terms of the form
\begin{equation}\label{eq:h-coupled-tile}
    \op h_i'':=\op h_i'\otimes(\1\otimes\ldots\otimes\1\otimes\ketbra{1}\otimes\1\otimes\ldots\otimes\1)=:\op h_i'\otimes\ketbra{1}_i,
\end{equation}
so that the overall Hamiltonian then reads
\begin{equation}\label{eq:H-approx}
    \op H' := \1\otimes\1\otimes\op H\tile + C\1\otimes\op H\cl\otimes\1 + \sum_{i=1}^N \op h_i\otimes\ketbra{T_i}_i\otimes\ketbra{1}_i,
\end{equation}
where we introduced a constant $C$ to be able to satisfy the preconditions for the Feynman-Dyson expansion: since $\op H\cl$ has a constant gap---see \cref{lem:H-bound-gs-1}---we will have to pick $C=\Omega(N)$; we will parametrize this dependence as $C=\Theta(N^{2+\delta})$, where $\delta\ge0$ is a parameter to be chosen in due course.

\subsection{Restriction to Good Signatures}\label{sec:restriction}
The first term $\1\otimes\1\otimes\op H\tile$ in \cref{eq:H-approx} commutes with all others, which means that $\op H'$ is block-diagonal with respect to the eigenstates of $\op H\tile$.
This implies that we can restrict our attention to the blocks representing the ground space of $\op H\tile$---all other blocks will have energy $\ge 1$.

\newcommand{\low}{_\mathrm{tile}}
We write $\cdot|\low$ for a restriction to the ground space $\mathcal L_0(\op H\tile)$ as defined in \cref{eq:tiling-gs}.
More specifically, we set $\op A|\low:=(\1\otimes\1\otimes\Pi\low)\op A(\1\otimes\1\otimes\Pi\low)$, where $\Pi\low$ is a projector onto $\mathcal L_0(\op H\tile)$, such that
\begin{equation}\label{eq:H-approx-simplified}
    \op H'|\low = C\1\otimes\op H\cl\otimes\Pi\low + \sum_{i=1}^N\op h_i\otimes\ketbra{T_i}_i \otimes(\ldots\ketbra 0\otimes\ketbra 1_i\otimes\ketbra 2\otimes\ldots).
\end{equation}
Observe that now products of distinct terms within the sum---those containing products $\op h_i\op h_j$ for $i\neq j$---are projected out; and further all terms from $\op H\tile$ vanished since we are within its ground space.

\subsection{Series Expansion}\label{sec:series-expansion}
As in \cite{Kempe2006}, we utilize a perturbative series expansion to estimate what the low-energy subspace of $\op H'$ looks like; for an introduction and the notation we use in the following see \cref{sec:f-d}.

By \cref{sec:restriction}, and since $\op H'$ is block-diagonal with respect to $\op H\tile$'s eigenstates, we can simplify the notation in the following analysis by only working within the subspace under the restriction $\cdot|\low$; all other eigenstates have energy $\ge 1$.
This means that we can write and partition \cref{eq:H-approx-simplified} as
\[
    \op H'|\low = \overbrace{C\sum_{i=1}^N\1\otimes\op H\cl^{(i)}}^{=:\op H} + \overbrace{\sum_{i=1}^N \op h_i\otimes\ketbra{T_i}_i}^{=:\op V},
\]
where we dropped the $\H\tile$ part of the Hilbert space; it can uniquely be reconstructed from \cref{eq:H-approx-simplified}.
We will denote the eigenstates of $\op H\cl=\sum_{i=1}^N \op H\cl^{(i)}$ with $\ket{\psi_j}$ for $j\in\{0,\ldots,\dim\H\cl-1\}$.
The ground space projector of $\op H\cl$ and its complement are then given by
\begin{alignat}{2}
    \Pi_- &= \1 \otimes \ketbra{\psi_0} =: \1 \otimes\left( \op P_{0,1}\otimes\ldots\otimes\op P_{0,N} \right) &&=: \1\otimes\op P_-   \label{eq:Pi-}\\
    \Pi_+ &= \1 \otimes \ketbra{\psi_0}^\perp &&=: \1\otimes\op P_+,   \label{eq:Pi+}
\end{alignat}
where $\op P_{0,i}$ is given by $\ketbra{\Psi_0}$ from \cref{lem:H-bound-gs-2}, for a $\op H\cl^{(i)}=\Hb$ on a chain of length $T_i$;
we further implicitly assume an energy shift to set the ground space energy of $\op H\cl$ to zero by introducing an energy shift for each individual clock Hamiltonian.

To keep the notation consistent, we will denote the eigenvectors of said $\op H\cl^{(i)}$ for a certain chain length $T_i$ with $\ket{\Psi_{j,i}}$, and the eigenvalues by $\mu_{j,i}$, for $j=0,\ldots,T_i-1$.
Then $\op P_{0,i}=\ketbra{\Psi_{0,i}}$ and $\op P_{0,i}^\perp = \sum_{j>0}\ketbra{\Psi_{j,i}}$.
We note that the $\op H\cl^{(i)}$---and hence of $\op H\cl$---are real symmetric matrices; we can therefore choose all its eigenvectors with real entries, which we will assume henceforth.

The complement projector $\op P_+=\sum_{j>0}\ketbra{\psi_j}$ is a bit more complicated to express in closed form; summing over all binary strings of length $N$ apart from the all zero string,
\begin{equation}\label{eq:P-plus}
    \op P_+=\sum_{s\neq0\cdots0}\op P_{0,1}^{(s_1)}\otimes\ldots\otimes\op P_{0,N}^{(s_N)}
    \quad\text{where}\quad
    \op P_{0,i}^{(s_i)}=\begin{cases}
    \op P_{0,i} & \text{if $s_i=0$} \\
    \op P_{0,i}^\perp & \text{otherwise.}
    \end{cases}
\end{equation}
We can re-express \cref{eq:P-plus} in terms of the eigensystems of the individual $\op H\cl^{(i)}$'s, as
\[
\op P_+ = \sum_{s\neq0\cdots0}\sum_{k_1=s_1}^{(T_1-1) s_1}\cdots \sum_{k_N=s_N}^{(T_N-1) s_N}
\ketbra{\Psi_{k_1,1}}\otimes\ldots\otimes\ketbra{\Psi_{k_N,N}},
\]
where $T_i$ is the number of eigenstates of $\op H\cl^{(i)}$, and the sums either just sum over a single term $k_i=0$ if $s_i=0$, or $k_i=0,\ldots,T_i-1$ if $s_i=1$.

The products of these projectors with some $\ketbra{T_j}_j$, $j\in\{1,\ldots,N\}$, are as follows.
\begin{align}
\op P_- &\ketbra{T_j}_j \op P_- = \bra T_j\op P_{0,j}\ket T_j \op P_- =: p_{0,j}^2 \op P_-, \label{eq:P-prods1}\\[0.5cm]
\op P_- &\ketbra{T_j}_j \op P_+ = \op P_{0,1}\otimes\ldots\otimes\op P_{0,j-1}\otimes \op P_{0,j}\ketbra{T_j}\op P_{0,j}^\perp \otimes \op P_{0,j+1}\otimes\ldots\otimes\op P_{0,N} \nonumber \\
&= p_{0,j} \op P_{0,1}\otimes\ldots\otimes\op P_{0,j-1}\otimes
\ket{\Psi_{0,j}}\sum_{i>0}\braket{T_j}{\Psi_{i,j}}\bra{\Psi_{i,j}}
\otimes \op P_{0,j+1}\otimes\ldots\otimes\op P_{0,N} \nonumber \\
&=: p_{0,j} \op P_{0,1}\otimes\ldots\otimes\op P_{0,j-1}\otimes
\ketbra{\Psi_{0,j}}{p_j}
\otimes \op P_{0,j+1}\otimes\ldots\otimes\op P_{0,N}, \label{eq:P-prods2}\\[0.5cm]
\op P_+ &\ketbra{T_j}_j \op P_+ \nonumber\\
&=\sum_{s,r\not\equiv 0}\op P_{0,1}^{(s_1)}\otimes\ldots\otimes\op P_{0,N}^{(s_N)}\left(
\1\otimes\ldots\otimes\ketbra{T_j}\otimes\ldots\otimes\1\right)
\op P_{0,1}^{(r_1)}\otimes\ldots\otimes\op P_{0,N}^{(r_N)} \nonumber\\
&= \sum_{s,r\not\equiv 0}\op P_{0,1}^{(s_1)}\delta_{s_1,r_1}\otimes\ldots\otimes\op P_{0,j}^{(s_j)}\ketbra{T_j}\op P_{0,j}^{(r_j)}\otimes\ldots\otimes\op P_{0,N}^{(s_N)}\delta_{s_N,r_n} \nonumber\\
&= \frac12\sum_{\text{all\ }s}\op P_{0,1}^{(s_1)}\otimes \ldots\otimes\Big( \nonumber\\
&\hspace{2cm}\op P_{0,j}\ketbra{T_j}\op P_{0,j}^\perp + \op P_{0,j}^\perp\ketbra{T_j}\op P_{0,j} + \op P_{0,j}^\perp\ketbra{T_j}\op P_{0,j}^\perp  \nonumber\\
&\hspace{1.5cm}\Big)\otimes\ldots\otimes\op P_{0,N}^{(s_N)} \nonumber \\
&= \frac12\sum_{\text{all\ }s}\op P_{0,1}^{(s_1)}\otimes\ldots\otimes\big( p_{0,j}\ketbra{\Psi_{0,j}}{p_j} + p_{0,j}\ketbra{p_j}{\Psi_{0,j}} + \ketbra{p_j}\big) \otimes\ldots\otimes\op P_{0,N}^{(s_N)}.\label{eq:P-prods3}
\end{align}
We emphasize that in the last two lines, we sum over all binary strings $s$, which is where the factor of $1/2$ stems from.
Again for consistency of notation, we set $p_{i,j}:=\braket{T_j}{\Psi_{i,j}}$.
Note that the $p_{i,j}$ are always real, since we chose our eigenbasis real.

We are interested in the low-energy space of $\tilde{\op H}$, for which we can calculate the expansion terms of $\Sigma_-(z)$ from \cref{eq:self-energy} using \cref{eq:Z-expansion,eq:P-prods1,eq:P-prods2,eq:P-prods3}.
We have
\begin{align}
    \op H_-
    &= \Pi_-\op H\Pi_- = 0,  \label{eq:Z-expansion-Hm}\\[0.5cm]
    \op V_-
    &= \Pi_-\op V\Pi_- = \sum_{i=1}^N
    p_{0,i}^2 \op h_i\otimes\op P_-, \label{eq:Z-expansion-Vm}\\[0.5cm]
    \op V_+
    &= \Pi_+\op V\Pi_+ \nonumber\\
    &= \frac{1}{2}\sum_{i=1}^N \op h_i\otimes
    \sum_{\text{all\ }s}\bigg( \op P_{0,1}^{(s_1)}\otimes\ldots\otimes\big[ p_{0,i}\ketbra*{\Psi_{0,i}}{p_i} \nonumber\\
    &\quad\quad +p_{0,i}\ketbra*{p_i}{\Psi_{0,i}} + \ketbra*{p_i}\big] \otimes\ldots\otimes\op P_{0,N}^{(s_N)} \bigg),  \label{eq:Z-expansion-Vp}\\[0.5cm]
    \op V_{-+}
    &=\Pi_-\op V\Pi_+ = \sum_{i=1}^Np_{0,i} \op h_i\otimes\big[
    \op P_{0,1}\otimes\ldots\otimes\ketbra*{\Psi_{0,i}}{p_i}\otimes\ldots\otimes\op P_{0,N}
    \big],  \label{eq:Z-expansion-Vmp}\\[0.5cm]
    \op G_+
    &= \Pi_+(z\1 - \op H)^{-1}\Pi_+ \nonumber\\
    &= \Pi_+\left(\1\otimes\sum_{i=1}^N (z-\lambda_i)^{-1}\ketbra{\psi_i}\right) \Pi_+ \nonumber\\
    &= \1\otimes\sum_{i>0}(z-\lambda_i)^{-1}\ketbra{\psi_i}. \label{eq:Z-expansion-Gp}
\end{align}
We note that the term $\op G_+$ is nothing but a weighted variant of the projector $\Pi_+$.
This is consistent with what we discussed in \cref{sec:f-d}: solving the self-energy $\Sigma_-(z) = \tilde{\op H}_-$ yields the low-energy part of $\tilde{\op H}$, a weighted variant of the projector $\Pi_-$.
\Cref{eq:Z-expansion-Hm,eq:Z-expansion-Vm,eq:Z-expansion-Vp,eq:Z-expansion-Vmp,eq:Z-expansion-Gp} allow us to calculate the series terms of $\Sigma_-(z)$; 
since we are still working within the ground space of $\op H\tile$ as per \cref{sec:restriction,eq:H-approx-simplified}, all cross-terms $\op h_i\op h_j$ for which $i\neq j$ are exactly zero.
Then
\begin{align*}
    \op V_{-+}\op G_+\op V_{+-} 
    & = \sum_{i=1}^N p_{0,i}^2 \op h_i^2\otimes
        \op P_{0,1}\otimes\ldots\otimes \bigg(\ketbra{\Psi_{0,i}}{p_i}   \nonumber\\
        &\quad\quad\quad \times \sum_{k>0}(z-\mu_{k,i})^{-1}\ketbra{\Psi_{k,i}}\ketbra{p_i}{\Psi_{0,i}}\bigg)\otimes\ldots\otimes\op P_{0,N}   \nonumber\\
    &= \sum_{i=1}^N p_{0,i}^2 \sum_{k>0}(z-\mu_{k,i})^{-1} \left|\braket{p_i}{\Psi_{k,i}}\right|^2 \op h_i^2 \otimes \Pi_-.
\end{align*}
Similarly
\begin{align*}
    \op V_{-+}\op G_+\op V_+\op G_+\op V_{+-}&=\sum_{i=1}^N\Bigg( p_{0,i}^2 \sum_{k>0}(z-\mu_{k,i})^{-1}  \sum_{l>0}(z-\mu_{l,i})^{-1}\Big[
     \overbrace{\braket{p_i}{\Psi_{k,i}}}^{=\braket{T_i}{\Psi_{k,i}}}  \nonumber\\
     &\quad\quad\quad\times \braket{\Psi_{k,i}}{T_i}\braket{T_i}{\Psi_{l,i}}\braket{\Psi_{l,i}}{p_i}
     \Big]\Bigg)\op h_i^3 \otimes \Pi_-\\
     &=:\sum_{i=1}^N p_{0,i}^2 \eta_i^2\op h_i^3 \otimes \Pi_-,
\end{align*}
and therefore inductively
\begin{equation}
\op V_{-+}(\op G_+\op V_+)^n\op G_+\op V_{+-} = \sum_{i=1}^N p_{0,i}^2 \eta_i^{n+1} \op h_i^{n+2} \otimes \Pi_-.
\end{equation}
The self-energy given in \cref{eq:self-energy} then reads
\begin{equation}\label{eq:self-energy-full}
\Sigma_-(z) = \sum_{i=1}^N p_{0,i}^2 \sum_{l\ge0}\eta_i^l(z)\op h_i^{l+1}\otimes\Pi_-.
\end{equation}

To finalize our proof, we will need to analyse the $z$-dependence of $\eta_i$; this is straightforward: since we shifted each individual clock Hamiltonian such that $\mu_{0,i}=0$ and with the scaling constant $C=\Omega(N^{2+\delta})$ in \cref{eq:H-approx-simplified}, we have $\mu_{k,i}>Cb_i^2/(b_i+1)\ge C/2\ \forall i>0$ by \cref{lem:H-bound-gs-1}.
For $C\ge 4$ and for all $|z|\le1$ we have $|z-\mu_{l,i}|\ge C/4\ \forall i, \forall l>0$---where the condition $C\ge 4$ simply translates into a condition on the system size $N$, which in turn depends on the proportionality constant in the Landau $C=\Omega(N^{2+\delta})$  that was free to choose in \cref{eq:H-approx}; fixing it to $C = 4 N^{2+\delta}$, for instance, yields the result for all $N\ge1$.
By \cref{cor:H-bound-T-overlap} we then get
\begin{equation}\label{eq:C-bound}
    |\eta_i| \le \sum_{l>0}|z-\mu_{l,i}|^{-1} \left|\braket{T_i}{\Psi_{l,i}}\right|^2 
    \le \frac{4}{C}\left[1+\BigO\left(\frac{b_i(b_i+2)}{(b_i+1)^{2T_i}}\right)\right]
    =\BigO\left(C^{-1}\right).
\end{equation}
Note that we arbitrarily chose the region of $z$ to have radius $1$; this has to do with our choice of $b_i\in[1,3]$, which itself is arbitrary; tuning the norm of some $\op H\eff$ will then have to be done by making $T_i$ larger, see \cref{lem:any-coupling-strength}.

\subsection{Proof of Main Result}\label{sec:proof}
\begin{proof}[\Cref{th:main}]
In order to proof \cref{th:main}, we start with a $k$-local Hamiltonian $\op H_0 = \sum_{i=1}^Nr_i\op h_i$ on a Hilbert space $(\field C^d)^{\otimes n}$, where each $\| \op h_i \| = 1$ and $| r_i(n) |\le r(n)\ \forall i$.
We assume without loss of generality that the $\op h_i$ square to identity,\footnote{%
A canonical basis for the Hermitian $d\times d$ matrices is given by the $d$ linearly inependent matrices $\{\op e_i \}_{i\in\{1,\ldots,d\}}$ such that $\op e_i$ has a single $1$ on the diagonal at the $i$\textsuperscript{th} location, as well as the $d(d-1)/2$ matrices $\{\op e_{i,j} \}_{1\le i<j\le d}$ and $\{ \op e'_{i,j} \}_{1\le i<j\le d}$, where $\op e_{i,j}$ has a matching pair of $1$s at location $(i,j)$ and $(j,i)$, and similarly $\op e'_{i,j}$ a $(\mathrm{i},-\mathrm{i})$-pair on corresponding off-diagonal locations $(i,j)$ and $(j,i)$, respectively.
We can define a new set of operators as follows:
\begin{align*}
\op f_i :=&\ \1_d - 2\op e_i \quad\text{for}\ i=1,\ldots,d \\
\op f_{i,j} :=&\ \1_d + \op e_{i,j}/\sqrt 2 -(1 + 1/\sqrt 2)\op e_i - (1 -1/\sqrt 2)\op e_j  \quad\text{for}\ 1\le i<j\le d\\
\op f'_{i,j} :=&\ \1_d + \op e'_{i,j}/\sqrt 2 -(1 + 1/\sqrt 2)\op e'_i - (1 - 1/\sqrt 2)\op e'_j \quad\text{for}\ 1\le i<j\le d.
\end{align*}
It is easy to verify that these operators are all hermitian, form a basis of the $d\times d$ hermitian matrices, and all square to $\1_d$.
} i.e. we demand $\op h_i^2 = \1$ for all $i\in\{1,\ldots,N\}$.
We set $r'_i(n) := r_i(n) / (200 r(n))$, each of which now satisfies $r_i \in (0,1/100)$.
With the local terms $\op h_i$, we define $\op V$ for a new $k+2$-local Hamiltonian $\op H'$ as in \cref{eq:H-approx}, where $\op H\cl=\sum_{i=1}^N \op H\cl^{(i)}$;
by \cref{lem:any-coupling-strength}, we know that, for all $i\in\{1, \ldots, N\}$, there exist parameters $b_i$, $M$, and $T_i$ for $\op H\cl^{(i)}$ such that 
\[
p_{0,i}^2  = \left|\braket{\Psi_{0,i}}{T_i}\right|^2 = \frac{r_i}{200 r(n)} \in (0, 1/100).
\]

Set $\op H\eff := \op V_- $.
By \cref{eq:Z-expansion-Vm}, we then have
\[
\op H\eff = \sum_{i=1}^N p_{0,i}^2 \op h_i \otimes \Pi_- = \left(\sum_{i=1}^N \left| \braket{\Psi_{0,i}}{T_i} \right|^2 \op h_i \right) \otimes \Pi_-  =  \frac{1}{200 r(n)} \op H_0 \otimes \Pi_-,
\]
with $\Pi_-$ defined in \cref{eq:Pi-}.
Furthermore, \cref{eq:self-energy-full,eq:C-bound} tell us that
\begin{align}
    \epsilon := \| \op H\eff - \Sigma_-(z) \|
    &\overset{*}{=} \left\| \sum_{i=1}^N p_{0,i}^2 \sum_{l\ge 2} \eta_i^l(z) \op h_i^{l+1} \otimes \Pi_- \right\|    \nonumber\\
    &\le \frac{N}{100} \sum_{l\ge2} \BigO\left( C^{-l} \right)    \nonumber\\
    &=\BigO(N) \sum_{l\ge2} N^{-(2+\delta)l} =  \frac{\BigO(N)}{N^{4+2\delta}-N^{2+\delta}} = \BigO\left(N^{-3-\delta} \right).   \label{eq:epsilon}
\end{align}
where in the first line $(*)$ we used the fact that the term of order $l=1$ in the second sum just introduces a constant energy shift---as by assumption $\op h_i^2=\1\ \forall i$.
The Landau $\BigO$ terms are with respect to the limit $N\longrightarrow\infty$.

Let us now remove the tiling Hamiltonian from \cref{sec:restriction} and reduce the extra locality introduced in \cref{eq:H-approx} by $1$; we call this Hamiltonian $\tilde{\op H}$.
More explicitly, we now lift the implicit assumption of working in the ground space of $\op H\tile$, within which all cross terms $\op h_i\op h_j$ vanish for $i\neq j$.
This means that at order $l$ in the above sum defining $\epsilon$, we will get at most $N^l$ additional cross-terms to take care of, all of which of unit norm within the sum in \cref{eq:self-energy-full}.
A short calculation yields the final error bound $\epsilon' = \BigO(N^{-2-\delta})$ for $\tilde{\op H}$.

Invoking \cref{th:pert-2}, we get
\begin{align*}
    \| \tilde{\op H}_- - \op H\eff \| &\le \BigO\left(
        \frac{N(\|\op H\eff\| + \epsilon')}{\lambda_+}
    \right) + \BigO(\epsilon') \\
    &=\BigO\left(
            \frac{N^{2}/r(n) + N^{-1-2\delta}}{N^{2+\delta}}
        \right) + \BigO(\epsilon') \\
    &=\BigO\left(N^{-\delta}\right),
\end{align*}
where we used $\| \op H\eff \| = \BigO(N/r(n))$, $\| \op V \| = \BigO(N)$, and $\lambda_+$ as the spectral gap of $\op H\cl$---which scales as $C$.

What is left to show now is that the local dimension of the ancilliary system necessary to specify $\op H\cl$ is as claimed for the two cases of scaling of $r(n)$---i.e.\ $q=3$ if $r=\BigO(\exp(\poly n))$, and $q=9$ for $r=\BigO(\exp(\exp(\poly n)))$.
This follows by \cref{rem:clock-times}, which concludes the proof.
\end{proof}

\section{Applications, Extensions and Corollaries}

\subsection{The Local Hamiltonian Problem}
Hamiltonian complexity theory has spawned a whole host of literature and research, from hardness proofs
\cite{Oliveira2008,Aharonov2009,Bausch2016,Bausch2017,Bravyi2006,Bravyi2006a,Schuch2011,Kempe2006,Hallgren2013,Gharibian2015},
efficient algorithms \cite{DeBeaudrap2016,Arad2015,Arad2013,Landau2013}, modified proposals on encoding computation into the ground state of a local Hamiltonian \cite{Breuckmann2013,Caha2017,Bausch2016a,Usher2017}, to suggestions on how to perform quantum computation with a Hamiltonian \cite{Wei2015,Nagaj2008,Nagaj2012}, or simulation and universality \cite{Cubitt2017,Piddock2015,Cubitt2013,Childs2010}, just to name a few.
In order to satisfy the task for physically realistic models---typically translational invariance and low local dimension---it is often necessary in these constructions to break down many-body terms into two-body terms.
The traditional method is to use perturbation gadgets, which, as discussed extensively, introduces energy scales that scale both in the required absolute error, as well as in the interaction range.

Can we apply our methods to improve upon one of the existing results?
In the following subsections we will pick a representative problem of each class and discuss the respective implications.

The \lham problem is the complexity-theoretic formalization of the question of approximating the ground state energy of a local Hamiltonian \cite{Kitaev2002}, which is a natural question that arises in physics.
It is the quantum analogue of classical boolean satisfiability problems such as \sat: while the latter asks for an assignment to boolean variables that render a logic statement true, \lham asks how well a quantum state can satisfy local constraints (given by the local interaction terms of some local Hamiltonian $\op H=\sum_{i=1}^N \op h_i$).
Kitaev proved that this problem is complete for the complexity class \QMA, by a construction first introduced by Feynman \cite{Feynman1986}.
Completeness for \QMA implies that on a quantum computer one can \emph{verify} a solution efficiently within poly-time and with success probability $\ge2/3$.
Just like \NP, \QMA makes no claims about \emph{obtaining} said solution in first place.

To be precise about all the parameters involved, we give the formal definition of \lham, as well as the complexity classes \QMA, \QMAEXP, and \BQEXPSPACE, for which we will prove hardness results of variants of the \lham problem in the following; for a brief but detailed reference of complexity-theoretic terminology, as well as the notion of Turing machines and quantum cirucuits, we refer the reader to \cite{Watrous2012}, in which the following definitions can also be found.
\begin{definition}[\lham]\label{def:lham}\leavevmode\\
\textbf{Input:} $k$-local Hamiltonian $\op H=\sum_{k=1}^N \op h_i$ on $(\field C^d)^{\otimes n}$, $N=\poly n$, $\| \op h_i \|=\poly n\ \forall i$. Two real numbers $\alpha,\beta$ with $\beta-\alpha\ge1/\poly n$.\\
\textbf{Promise:} The ground state energy of $\op H$ satisfies either $\lmin(\op H)\ge\beta$, or $\lmin(\op H)\le\alpha$.
\textbf{Output:} \yes iff $\lmin(\op H)\le\alpha$.
\end{definition}
Note that while \cref{def:lham} does not allow local terms to have exponentially large norm, it \emph{does} allow exponentially small norms; yet not more as the bit complexity of the matrix entries---which comprise the input to the \lham problem---have to be bounded by a polynomial.

\begin{definition}[Promise Problem]
Let $\Sigma$ be a finite set, called alphabet.
A promise problem is a set $A \subseteq \Sigma^*$---where the ${}^*$ denotes the Kleene star, i.e.\ strings of symbols of $\Sigma$ of length $\ge 0$---such that $A=A_\yes \dot\cup A_\no$, called \yes- and \no-instances, respectively.
\end{definition}

In the following, we will always assume that $\Sigma = \{0, 1\}$, and we identify $\ket x := \ket{x_0x_1\cdots x_{n-1}} \in (\field C^2)^{\otimes n}$ for some instance $x\in A$, $|x|=n$.
\begin{definition}[\BQP and \BQEXP]\label{def:bqp}
If there exists a polynomial-time terminating Turing machine which for all $n\in\field N$, on input $1^n$, writes out the description of a quantum circuit $Q_n$, we call the family $Q=\{ Q_n\}_{n\in\field N}$ polynomial-time generated, or polynomial-time uniform.
A promise problem $A$ is in $\BQP(a,b)$ for functions $a,b:\field N \longrightarrow [0, 1]$ if there exists a polynomial-time uniform quantum circuit family $Q$, such that $Q_n$ acts on an $n$ qubit input $\ket x$, $x\in A$ with $|x|=n$ and has a single measured output qubit measured either in state $\ket 0$ or $\ket 1$, where the latter signifies ``accept'', which we write $Q_n(\ket x)=1$.
The circuit family satisfies
\begin{enumerate}
\item $\Pr(Q_n(\ket x) = 1) \ge a(n)$ if $x\in A_\yes$, or otherwise
\item $\Pr(Q_n(\ket x) = 1) \le b(n)$ if $x\in A_\no$.
\end{enumerate}
By convention $\BQP = \BQP(2/3, 1/3)$.
\BQEXP is defined analogously, replacing polynomial time with exponential time (strictly speaking $\BigO(\exp(n^c))$-time, for any constant $c\ge 0$) throughout.
\end{definition}
\begin{definition}[\QMA and \QMAEXP]\label{def:qma}
A promise problem $A$ is in $\QMA_p(a,b)$ if for the same setup as in \cref{def:bqp}, $Q_n$ acts on an input of size $n+p(n)$ for some $p(n) = \poly n$ and a single output qubit, such that
\begin{enumerate}
\item $\forall x\in A_\yes\ \exists \ket{\psi} \in (\field C^2)^{\otimes p(n)} : \Pr(Q_n(\ket x, \ket \psi) = 1) \ge a(n)$, and
\item $\forall x\in A_\no\ \forall \ket{\psi} \in (\field C^2)^{\otimes p(n)} : \Pr(Q_n(\ket x, \ket \psi) = 1) \le b(n)$.
\end{enumerate}
We set $\QMA = \bigcup_{p(n) = \poly n} \QMA_p(2/3,1/3)$.
The circuit family $Q_n$ is also called \emph{verifier} (which itself is a \BQP circuit with an extra unconstrained input), and the quantum state $\ket\psi$ a witness for the instance; as in \cref{def:bqp}, we define \QMAEXP in a similar fashion, replacing the \BQP verifier with a \BQEXP one.
\end{definition}
We note that one can amplify the acceptance and rejection probabilities of $2/3$ and $1/3$ in \cref{def:bqp,def:qma} such that $\BQP=\BQP(1-2^{-q}, 2^{-q})$, for any $q(n) = \poly n$ in the input size $n$ \cite[Prop.~3]{Watrous2012}.
\StoqMA is defined as \QMA, but for a classical reversible boolean verifier circuit instead of a quantum circuit (i.e.\ a \P verifier), inputs restricted to $\ket{0}$ and $\ket+ = (\ket0 + \ket1)/\sqrt 2$ and a final measurement in the basis $\{\ket+,\ket-\}$---where $\ket- := (\ket0-\ket1)/\sqrt 2$.
If we remove randomness completely we end up with the complexity classes \P and \NP, of which \BQP and \QMA are the natural quantum analogues.

Instead of bounding the computational runtime, one can in a similar fashion bound the required space; yet instead of uniform families of quantum circuits a hybrid model of a classical Turing machine which can perform quantum operations on a separate tape of qubits is a more natural notion; the space requirements for such a quantum Turing machine is defined by how much classical and quantum tape the machine ingests during a computation; we again refer the reader to \cite[Sec.~VII.2]{Watrous2012} for an extended introduction.
\begin{definition}[\BQPSPACE and \BQEXPSPACE]\label{def:BQEXPSPACE}
A promise problem $A$ is in \BQPSPACE if there exists a quantum Turing machine with poly-bounded space requirement, accepting \yes instances with probability $\ge 2/3$, and \no instances with probability $\le 1/3$.
\BQEXPSPACE is defined analogously.
\end{definition}
What might come as a surprise is that, in contrast to the amplification statement for \BQP---which limits how close to $1/2$ acceptance and rejection probabilities may lie---\BQPSPACE$\!=$\PQPSPACE, defined with $>1/2$ and $\le 1/2$ acceptance and rejection probabilities.
Even more surprisingly, \BQPSPACE$\!=$\PSPACE \cite{Watrous2003}---i.e.\ classical computers (without access to randomness) are as powerful as quantum computers, given the only restriction is placed on how much space each machine is allowed to demand.

The \lham problem as defined in \cref{def:lham} is known to be \QMA-complete \cite{Kitaev2002}; and as mentioned, variants of this result have been proven which impose ever more restrictions onto the types of Hamiltonians for which the same result holds.
For instance, for a promise gap (i.e.\ the difference $\beta-\alpha$ in \cref{def:lham}) which closes as $\propto 1/\exp n$, the \lham problem is known to be \PSPACE-complete \cite{Fefferman2016}; this is shown by encoding a variant of a \QMA-hard problem with an acceptance and rejection probability $>1/2$ and $\le 1/2$, respectively, matching the probabilistic bounds in the definition of \BQPSPACE.
Another variant is for the case of translationally-invariant local Hamiltonians for which the \lham problem is \QMAEXP-complete \cite{Gottesman2009}: this is due to the fact that the specification of an instance has bit complexity $|n|$ in the system's size $n\in\field N$---since this is the only free variable in a translationally-invariant system that changes from instance to instance.
To still obtain a polynomially-closing promise gap we need to allow the verifier circuit to run for an exponential time (cf.\ \cite[Sec.~3.4]{Bausch2016}).

\begin{figure}
	\centering
\newcommand*\rowss{6}
\newcommand{\trilattice}[2]{%
    \fill[white,opacity=.7] (0,0,#1) -- (\rowss,0,#1) -- (\rowss/2,\rowss*.866,#1) -- cycle;
    \foreach \r in {0,1,...,\rowss} {
        \draw[opacity=#2] ($(0,0,#1)+\r*(0.5, 0.866,0)$) -- ($(\rowss,0,#1)+\r*(-0.5, 0.866, 0)$);
        \draw[opacity=#2] ($(0,0,#1)+\r*(1, 0, 0)$) -- ($(\rowss/2,\rowss*.866,#1)+\r*(0.5,-.866, 0)$);
        \draw[opacity=#2] ($(0,0,#1)+\r*(1, 0, 0)$) -- ($(0,0,#1)+\r*(0.5,.866, 0)$);
    };
}
\begin{tikzpicture}[x=1cm, y=1cm, z=.4cm,scale=.9]
	\clip (0,0) rectangle (14.5,6.5);
    \trilattice{0}{1}
    \draw[line width=5pt,blue,line cap=round] (\rowss-1.5,.866,0) -- (\rowss-.5,.866,0);
    \draw[line width=3pt,white,line cap=round,opacity=.3] (\rowss-1.5,.866,0) -- (\rowss-.5,.866,0);
    \draw[line width=5pt,blue,line cap=round] (1.5,.866,0) -- (2,2*.866,0);
    \draw[line width=3pt,white,line cap=round,opacity=.3] (1.5,.866,0) -- (2,2*.866,0);
    \begin{scope}[xshift=7cm]
        \foreach \x/\a in {3/.4,2/.6,1/.8} {
            \trilattice{\x}{\a}
        }
     \begin{scope}
	        \clip (0,0) -- (3,0) -- (3,.4) -- (2.7,.6) -- (3.1,1) -- (2.8,1.5) -- (3.4,1.9) -- (3.8,2.2) -- (3.4,2.6) -- (4.05,3.45) -- (4.5,3.7) -- (10,10) -- (0,10);
	        \trilattice{0}{1}
	        \draw[black,line width=3pt] (3,0) -- (3,.4) -- (2.7,.6) -- (3.1,1) -- (2.8,1.5) -- (3.4,1.9) -- (3.8,2.2) -- (3.4,2.6) -- (4.05,3.45);
        \end{scope}
        \draw[line width=5pt,red!50!blue,line cap=round] (\rowss-1.5,.866,1) -- (\rowss-.5,.866,1) -- (\rowss-.5,.866,2);
        \draw[line width=3pt,white,line cap=round,opacity=.3] (\rowss-1.5,.866,1) -- (\rowss-.5,.866,1) -- (\rowss-.5,.866,2);
        \draw[line width=5pt,red,line cap=round] (\rowss-1,2*.866,1) -- (\rowss-1,2*.866,2) -- (\rowss-1,2*.866,3);
        \draw[line width=3pt,white,line cap=round,opacity=.3] (\rowss-1,2*.866,1) -- (\rowss-1,2*.866,2) -- (\rowss-1,2*.866,3);
        \draw[line width=5pt,red,line cap=round] (\rowss-2.5,5*.866,0) -- (\rowss-2.5,5*.866,1) -- (\rowss-2.5,5*.866,2);
        \draw[line width=3pt,white,line cap=round,opacity=.3] (\rowss-2.5,5*.866,0) -- (\rowss-2.5,5*.866,1) -- (\rowss-2.5,5*.866,2);
    \end{scope}
\end{tikzpicture}
\caption{Triangular lattice, and stacked triangular lattice, used in \cref{th:piddock,th:cor-1}, respectively.
The blue line indicates a $2$-local interaction between spins in the same triangular lattice layer; the purple line a $3$-local interaction emerging from the extra coupling between two lattice layers.
The red $3$-local interaction represents the highest locality terms within $\op H\cl^{(i)}$.}
\label{fig:lattice}
\end{figure}

Returning from this digression, we now wish to analyse whether we can improve upon any of these best-known results in some aspect.
To this end, we will focus on a concrete example, namely \citeauthor{Piddock2015}'s proof that the \lham problem is \QMA-complete, even with antiferromagnetic interactions on a triangular lattice \cite{Piddock2015}.
\begin{theorem}[\citeauthor{Piddock2015} \mbox{\cite[Th.~4]{Piddock2015}}]\label{th:piddock}
Let $(V,E)$ be a triangular lattice of $|V|=n$ vertices, as shown in \cref{fig:lattice}.
Let $\alpha,\beta$ and $\gamma$ such that $\alpha+\beta,\beta+\gamma, \text{and\ }\gamma+\alpha\ge 0$, and not $\alpha=\beta=\gamma$.
Then there exists a family of real positive numbers $\{ r_e \}_{e\in E}$, $r_e = \poly |V|$, such that the \lham problem for the family of Hamiltonians (indexed by the lattice's size $n$)
\begin{equation}\label{eq:universal-h}
\op H:= \sum_{e \in E}\op h_e
\quad\text{where}\quad
\op h_e:=r_e(\alpha \sigma^x\sigma^x + \beta \sigma^y\sigma^y + \gamma \sigma^z\sigma^z),
\end{equation}
is \QMA-complete.
\end{theorem}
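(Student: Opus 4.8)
The plan is to prove the two directions separately. Containment of \lham for this family in \QMA is immediate and uses nothing beyond $2$-locality and the polynomial norm bound $r_e=\poly|V|$: Kitaev's verification procedure applies verbatim. The substance is \QMA-hardness, which I would establish by a chain of reductions starting from a Hamiltonian already known to be \QMA-complete and progressively forcing (i) the interaction type to be the fixed anisotropic form $\alpha\sigma^x\sigma^x+\beta\sigma^y\sigma^y+\gamma\sigma^z\sigma^z$, (ii) all coupling weights to be positive, and (iii) the interaction graph to be a triangular lattice, all while preserving the low-energy spectrum up to an additive error that closes slower than the promise gap. As the source problem I would take the \QMA-completeness of \lham for general $2$-local qubit Hamiltonians built from a ``universal'' two-qubit interaction together with single-qubit terms (e.g.\ the $XZ$-model of Biamonte--Love, or via the Cubitt--Montanaro classification any non-Ising, non-(isotropic)-Heisenberg two-qubit interaction). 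The hypothesis ``not $\alpha=\beta=\gamma$'' is precisely the non-Heisenberg condition placing $\op h_e$ in this universal class, and since the $r_e$ are nonzero it is a genuine two-body interaction, not a trivial one-body term.

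\textbf{Generating the interaction type and single-qubit fields.} Using the simulation framework recalled in \cref{sec:self-energy}, it suffices to exhibit a perturbative simulation of each needed gadget. I would realise effective single-qubit fields and effective Pauli couplings not present in $\op h_e$ via standard second- and third-order gadgets: a mediator qubit coupled with large weight $\Delta=\poly|V|$ to a constant number of logical qubits has a product ground state, and the Feynman--Dyson series of \cref{sec:f-d} produces effective terms on the logical qubits whose Pauli content is dictated by the mediator couplings. The place where the hypotheses $\alpha+\beta,\beta+\gamma,\gamma+\alpha\ge0$ enter is in choosing the \emph{signs} of these heavy mediator couplings: these inequalities guarantee that the unperturbed ``heavy'' Hamiltonian on the mediator block has the unique, correctly-gapped ground state the gadget requires, even though every $r_e$ must be positive. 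Composing a constant number of such gadgets reduces an arbitrary $2$-local target to one all of whose terms are positive multiples of $\op h_e$.

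\textbf{Enforcing the triangular geometry.} To embed onto $(V,E)$ I would apply the usual spatial-sparsification gadgets, each implemented again only with positive multiples of $\op h_e$ and the mediator trick above, routed through the degree-$6$ connectivity of the triangular lattice: edge-subdivision to turn long-range couplings into short paths, ``fork''/copy gadgets to reduce vertex degree, and crossing gadgets to eliminate non-planarities. Each such gadget costs $\poly|V|$ extra vertices and a heavy coupling $\Delta=\poly|V|$; after a constant number of composition rounds the vertex count and all weights $r_e$ remain polynomial, which is the source of the bound $r_e=\poly|V|$ in the statement.

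\textbf{Error bookkeeping and the main obstacle.} Finally I would track the additive spectral error along the chain of simulations using \cref{th:pert-2}: taking every $\Delta$ to be a sufficiently large fixed polynomial in $|V|$ makes each step's error $o(1/\poly|V|)$, so the composed reduction sends \yes/\no instances separated by a $1/\poly$ gap to instances of the triangular-lattice problem with a (smaller but still) $1/\poly$ gap, which is all \cref{def:lham} demands. I expect the main obstacle to be the tension between the positivity constraint on the $r_e$ and the gadget constructions: the cleanest interaction-generating gadgets in the literature use both signs freely, so one must either redesign them to use only antiferromagnetic heavy couplings (exploiting the $\alpha+\beta\ge0$-type inequalities) or pay for each sign flip with extra ancillas, and then re-verify convergence of the Feynman--Dyson expansion and the absence of spurious low-energy states. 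A secondary difficulty is keeping the cumulative error strictly below the promise gap while both the number of composed gadgets and their strengths are only polynomially bounded.
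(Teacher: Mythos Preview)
This theorem is not proved in the paper: it is quoted verbatim from \cite{Piddock2015} (as the attribution in the theorem header and the accompanying citation make clear), and the paper uses it only as a black box input to the proof of \cref{th:cor-1}. So there is no ``paper's own proof'' to compare your proposal against.

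That said, your outline is a fair high-level summary of how the Piddock--Montanaro result is actually obtained in \cite{Piddock2015}: one starts from a known \QMA-hard two-qubit model, uses the simulation/perturbation-gadget machinery to show that any fixed anisotropic $XYZ$ interaction (the ``not $\alpha=\beta=\gamma$'' hypothesis) can simulate it, uses the sign constraints $\alpha+\beta,\beta+\gamma,\gamma+\alpha\ge 0$ to cope with the antiferromagnetic/positivity restriction on the weights, and finally embeds everything into a triangular lattice with the usual subdivision/fork/crossing gadgets. One point worth flagging: the sign constraints are exploited not so much to guarantee a gapped mediator ground state as to manufacture the needed \emph{effective} sign flips and local fields from purely antiferromagnetic couplings; and the triangular (rather than, say, square) geometry is not incidental but is what makes the positive-weight gadgets go through. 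If you want to turn your sketch into a proof you should consult \cite{Piddock2015} directly, since the present paper does not reproduce any of that argument.
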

Our goal is to employ \cref{th:main} to remove the explicit variation in coupling strength in \cref{th:piddock} given by the $r_e=\poly n$ at every lattice edge for a triangular lattice on $n$ vertices, and prove a variant of the result with a scaling limited to $\propto n^{2+\delta}$, for an arbitrarily small $\delta>0$.

\begin{theorem}\label{th:cor-1}
    Let $\Lambda=(V,E)$ be a triangular lattice as shown in with $|V|=n$ vertices, as shown in \cref{fig:lattice}, and let $\delta'>0$.
    Then for $n'\in\field N$, stacks of the lattice are given by $\Lambda' = \Lambda\square\Lambda_2$, where $\Lambda_2$ is a path graph of length $n'$, and $\square$ denoting the Cartesian graph product.
    The \lham problem is \QMA-complete with interactions on a graph $\Lambda'$, even when restricted to the following type of interactions:
    \begin{enumerate}
    \item $3$-local interactions of the form $\op h\otimes\ketbra 0$, where $\op h$ is given in \cref{eq:universal-h} but such that $\| \op h \| = 1$; $\op h$ only acts within a lattice layer $\Lambda$, and $\ketbra 0$ is a one-local projector onto state $\ket0$ of an adjacent qubit in the next higher layer;
    \item $\op q$ are diagonal geometrically $3$-local terms from \cref{eq:biased-clock-ham}, acting on the vertical edges within $\Lambda'$, such that $\| \op q \| = \BigO(s^{\delta'}))$, where $s=nN'$ is the number of vertices in $\Lambda'$.
    \end{enumerate}
\end{theorem}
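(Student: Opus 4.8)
The plan is to feed Piddock's family from \cref{th:piddock} into the construction behind \cref{th:main} and then check that the resulting Hamiltonian can be laid out on the stacked lattice $\Lambda'=\Lambda\square\Lambda_2$ with interactions of exactly the two advertised kinds. Start from $\op H_0=\sum_{e\in E}r_e\op h$ on the triangular lattice $\Lambda$ with $|V|=n$, so $N:=|E|=\Theta(n)$, the interaction degree is $6$, and after normalising $\|\op h\|=1$ with $\op h$ the operator of \cref{eq:universal-h} rescaled. Since $r_e=\poly n$, the norm ratio is $r(n)=\max_e r_e/\min_e r_e=\poly n=\BigO(\exp(\poly n))$, which is case~$1$ of \cref{th:main} and fixes the ancilla local dimension to $q=3$. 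One mild point: $\op h$ does not square to $\1$; one either passes to the square-to-identity basis from the footnote in the proof of \cref{th:main} (which only rewrites $\op h$ as a constant-size sum of operators on the same pair of sites and changes neither the locality nor the interaction graph), or notes that the $\op h^2$ terms in \cref{eq:self-energy-full} carry a prefactor $\BigO(1/C)$ each and hence add only $\BigO(N/C)$ to the total error. Running the construction from the proof of \cref{th:main} with the scale $C$ kept free, one obtains a $3$-local $\op H'$ on $\H\otimes\H_2$, $\H_2=(\field C^3)^{\otimes\poly n}$, with a band gap, clock-term norms $\Theta(C)$, and---by \cref{eq:epsilon} together with \cref{th:pert-1}---with the low band of $\op H'$ reproducing the spectrum of $\op H\eff=\tfrac{1}{200 r(n)}\op H_0\otimes\Pi_-$ up to an additive error $\BigO(N^3/C^2)$.

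Now fix the parameters. Piddock's \lham instance has a promise gap $\ge 1/\poly n$, so after the rescaling by $1/(200 r(n))$ it is $\ge 1/p(n)$ for a fixed polynomial $p$; it therefore suffices to arrange $\BigO(N^3/C^2)\le 1/(3p(n))$, i.e.\ $C$ at least a fixed polynomial in $n$. The height $n'$ of $\Lambda_2$ is free and $s=nn'$, so I pick $n'=\poly n$ large enough that $C:=\Theta(s^{\delta'})=\Theta\big((nn')^{\delta'}\big)$ exceeds that polynomial while $s$---and hence the whole construction---stays of polynomial size; this is precisely the trade of the arbitrarily small exponent $\delta'$ against the amount of padding $n'$. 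With this choice the band-gap guarantee of \cref{def:approximate} places the global ground energy of $\op H'$ in its low band, within $1/(3p(n))$ of $\tfrac{1}{200 r(n)}\lmin(\op H_0)$, so the \yes/\no promise of Piddock's instance survives and the \lham problem for $\op H'$ is \QMA-hard; \QMA-membership is immediate since $\op H'$ is local with poly-bounded norms and poly bit-complexity. (If one also wants eigenvector closeness, one invokes \cref{th:pert-2} with the same $C$; for hardness \cref{th:pert-1} alone is enough.)

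It remains to lay $\op H'$---the tiling-free version, consisting of $C\sum_e\op H\cl^{(e)}$ and the $N$ couplings $\op h_e\otimes\ketbra{T_e}_e$---onto $\Lambda'$. The $n$ physical qubits of $\op H_0$ sit on one copy of $\Lambda$. Because $r(n)=\poly n$, \cref{lem:any-coupling-strength} gives a falloff length $T_e=\BigO(\log n)$ for the required overlap $p_{0,e}^2=r_e/(200 r(n))$, so by \cref{rem:clock-times} each $\op H\cl^{(e)}$ is realisable as a translationally-invariant, at most $3$-local spin chain on $\BigO(\log n)$ constant-dimensional spins. These chains are routed along the $\Lambda_2$-direction so that all clock terms act on vertical $3$-paths, which yields the second listed interaction type with $\|\op q\|=\Theta(C)=\Theta(s^{\delta'})$; using the bounded degree of $\Lambda$ and a suitable orientation of $E$ one packs the $\Theta(n)$ logarithmically short chains into the columns using $\BigO(\log n)\ll n'$ layers, the remaining layers being empty padding whose only role is to inflate $s$ as in the previous paragraph. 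After the spin-chain encoding, each $\op h_e\otimes\ketbra{T_e}_e$ collapses to a term $\op h\otimes\ketbra 0$ on the two endpoints of $e$ together with one adjacent ancilla spin one layer up---exactly the first listed interaction type.

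The perturbative content is inherited essentially verbatim from \cref{th:main}; the one genuinely delicate step is the geometric embedding---routing a (logarithmically short) clock chain for every edge while keeping each coupling $\op h_e\otimes\ketbra{T_e}_e$ geometrically $3$-local, given that $\Lambda$ has $\Theta(n)$ edges but each site of $\Lambda'$ offers only $\BigO(1)$ free vertical neighbours. This forces a careful choice of which endpoint of $e$ its chain attaches to, of how chains interleave within a column, and possibly of spreading the physical system over a bounded number of stacked layers (or of degree-reducing Piddock's host graph); it is the place where the specific product structure $\Lambda\square\Lambda_2$, rather than an arbitrary degree-bounded host, is actually exploited.
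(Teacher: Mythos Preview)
Your approach is essentially the same as the paper's: start from \cref{th:piddock}, feed it into the construction of \cref{th:main}, pad the ancillary system so that the single strong scale $C$ can be written as $\Theta(s^{\delta'})$, and then lay the clock chains out along the $\Lambda_2$-direction with the coupling $\op h_e\otimes\ketbra{T_e}$ landing on the physical layer. Two minor points of difference worth noting: (i) the paper makes the clock local on \emph{qubits} (to match the word ``qubit'' in the statement) by an explicit domain-wall encoding $\ket t=\ket{1^t0^{MT-t}}$ rather than invoking \cref{rem:clock-times} with $q=3$; under this encoding the hop terms of $\Hb$ are at most $3$-local and the marker projector becomes a single-site $\ketbra 0$, which is how the ``one-local projector'' in item~1 arises; (ii) the paper, like you, leaves the horizontal packing of the $\Theta(n)$ chains above $n$ vertices largely implicit---your explicit flag that this routing is the genuinely delicate step, and your suggestion to use a bounded-out-degree orientation of $E$ and stack chains within columns, is at least as careful as what the paper writes.
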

\begin{proof}
    Let $\H:=(\field C^2)^{\otimes \Lambda}$ and similarly $\op H'$ be the associated Hilbert space for qubits located at each of $\Lambda$ and $\Lambda'$'s vertices, respectively.
    We start with a \QMA-complete $2$-local Hamiltonian $\op H_0 = \sum_{e \in E} \op h_e$  on $\H$ given by \cite[Th.~4]{Piddock2015}; then by construction all interactions on the triangular edges $\op h_e$ satisfy \cref{eq:universal-h}, and such that
    \begin{equation}\label{eq:r}
        \max_{i,j\in E}\{\| \op h_i \| / \| \op h_j \|\} = \poly n
        \quad\text{and}\quad
        r(n) := \max_{i\in E} \{ \| \op h_i \| \} = \poly n.
    \end{equation}
    
    By \cref{th:main}, we thus know that there exists a $3$-local Hamiltonian $\op H'$ on $\H\otimes\H_2$ with the following properties:
    \begin{enumerate}
    \item $\H_2=(\field C^3)^{N'}$, $N' = \poly n$.
    \item $\op H'$ approximates $\op H_0$ within its low-energy subspace, according to definition \cref{def:approximate}, to relative precision $\BigO(N^{-\delta})$; this means
    \[
        \Pi_- \op H' \Pi_- = \op H_0 \ketbra{\Psi_0} + \BigO\left( \frac{r(n)}{N^{\delta}}\right),
    \]
    where $\Pi_-$ are projectors onto the lower part of the spectrum of $\op H'$, for some state $\ket{\Psi_0}$ defined on an ancilliary space $\H_2$, and $\epsilon = N^{-\delta}$.
    \item $\op H'=\sum_{i=1}^{N'} \op q_i$ is $2$-local, where $1 \le \| q_i \| \le n^{2+\delta}$.
    \end{enumerate}
    To determine $\delta$, we note that by \cref{def:lham} there is a promise gap $p(n) := \beta(n) - \alpha(n) = 1/\poly n$ associated to $\op H_0$.
    In order to retain QMA-hardness of $\op H'$, we need to choose $\delta = \delta(n)$ such that $r(n)/N^\delta < p(n)$; we will therefore increase $N'$ (i.e.\ the number of triangular lattice stacks) by an at most polynomial factor---uncoupled to the rest of the system---to ensure $\| \op q \| = \BigO(n N')$.
    
    What is left to show is that the $\op H\cl^{(i)}$ Hamiltonians can be chosen such that they feature $3$-local qubit interactions, instead of $2$-local qutrit ones.
    This is straightforward: since the maximum norm ratios we need to approximate are $r(n)=\poly n$, and the overlap in \cref{lem:any-coupling-strength} is exponentially small in $T$, it suffices to have $T = \BigO(\log \poly n)$.
    To construct $\Hb$ in \cref{eq:H-bound} with $3$-local interactions on $\H_2^{(i)} = (\field C^2)^{\otimes MT}$ (for some constant $M$ as explained at the end of \cref{sec:H-bound}), we can identify
    \[
        \ket t = \ket*{\underbrace{11\ldots1}_{t\ \text{times}}0\ldots 00}
    \]
    where $\{ \ket0, \ket1 \}$ are a basis for $\field C^2$; the identification implies that the terms $\ketbra{t}{t+1}$ in $\Hb$ are three-local at most, as is easily verified; similarly, the bonus term $\ketbra 1$ can be identified with a $1$-local term $\ketbra 0$ acting on the second qubit on $\H_2$.
    
    Since every $\op H\cl^{(i)}$ has an individual $T_i$---tuned to yield an amplitude $|\braket{\Psi_{0,i}}{T_i}|^2 \propto r_i$---we need to offset the $3$-local terms in $\op H\cl^{(i)}$ such that $\ketbra{T_i}$ aligns with the triangular layer $\Lambda$ on which $\op H_0$ is defined.
    As we are free to choose said layer---as $\op H'$ does not have to be translationally invariant in this construction---the claim of the theorem follows.
\end{proof}

We remark that instead of varying the offset of $\op H\cl^{(i)}$ for each interaction $\op h_i$ in $\op H_0$ individually, we can align them all uniformly with a fixed $T_i=T$ for all $i\in\{1,\ldots,n\}$.
To see this, note that by \cref{cor:H-bound-gs}, the coupling strength induced by $\Hb$ goes asymptotically like $\propto b(b+2)/(b+1)^{2T}$.
Any pair of biases $b$, $b'$ for fixed $T$ thus allows a ratio of
\begin{equation}\label{eq:vary-only-b}
    R(b,b')=\frac{b(b+2)}{b'(b'+2)} \left(\frac{b'+1}{b+1}\right)^{2T}.
\end{equation}
We have $R(1,1)=1$, and $R(b,1)$ scales exponentially in $T$, so the claim follows as in \cref{lem:any-coupling-strength}, where we note that the overall effective Hamiltonian will be rescaled by only a polynomial factor, keeping the conditions on the promise gap in \cref{def:lham} satisfied.

As a short digression for the familiar reader, we emphasize that this result is weaker than it seems: \QMA-hardness constructions, which are based on embedding a \QMA-verifier computation into the ground state of a local Hamiltonian, are commonly given with a promise gap that scales as $\propto 1/\tau^2$ in the runtime $\tau$ of this embedded computation (see \cite{Bausch2016a}; we further point out the connection to our bound state Hamiltonian in  \cref{sec:H-bound}).
For \QMA, the runtime is thus $\tau=\poly n$ for a system size $n$.
In order to lift the promise gap arbitrarily close to constant in the system size, it always suffices to add a polynomially-sized non-interacting ancilliary space of size $n'=\poly n$; if we express $\tau$ in $n'$, we can thus get a runtime scaling $\tau=n'^{1/a}$, for some arbitrarily large $a>0$, and the promise gap thus similarly follows $\Omega(n'^{-2/a})$.

In essence, this is an artefact of Karp-reductions allowing a polynomial overhead---which work either way, i.e.\ one can shrink the input to a problem by a polynomial, reducing the runtime of a \QMA-hard construction in whatever parameter one chose to express the input size with, while maintaining the complexity-theoretic implications.
However, while the promise gap can be made to close like the $\Omega(n'^{-2/a})$ for arbitrarily large but constant $a$, constant relative promise gap (relative in the system size) would imply a quantum analogue of the classical PCP theorem.\footnote{PCP stands for ``probabilistically checkable proof'', and the PCP theorem sais that any \NP-hard problem can be verified to arbitrary precision with only constant query complexity, independent of the problem size. As explained in the introduction, the local Hamiltonian problem is the quantum analogue of \sat; a constant relative promise gap would thus imply a similar argument about a constant number of constraint violations sufficing to verify a QMA-hard problem.}

Yet instead of the necessity feature multiple, potentially wildly varying coupling strengths, \cref{th:cor-1} shows that it suffices to have a \emph{single} additional energy scale $\propto n'^{1/a}$, instead of multiple ones; all other interactions are $\BigO(1)$, independent of the system size.

\Cref{th:cor-1,eq:vary-only-b} are interesting for another reason.
The reader might have noticed by now that our construction allows us to amplify a constant-range $b\in[1,3]$ to an energy scale that varies like $b^{f(n)}$, for $f$ being a polynomial or exponential in the system size $n$.
So what if we turn this problem around, and drastically limit the range for the biases $b$, say, to an interval $b\in(1,1+\chi)$, for $\chi$ very small?
We will address this question in the next section.

\subsection{Noise Amplification and Translational Invariance}
As outlined at the end of the previous section, we want to restrict the biases present in $\op H\cl^{(i)}$ to satisfy $b_i\in(1,1+\chi)$, for $\chi^{-1}\gg1$ and all $i\in\{1,\ldots,N\}$.
What range of effective coupling strengths for a target Hamiltonian $\op H_0 = \sum_{i=1}^N \op h_i$ can emerge from these subtly-varying one-local terms inside $\op H\cl^{(i)}$?
We collect this result in the next lemma.

\begin{table}
    \begin{tabular}{r C{3cm}|C{1.05cm}|C{1.05cm}|C{3cm}}
    	\toprule
    	   $T=\Theta(\cdot)$ & \multicolumn{2}{c|}{$n^a$} & \multicolumn{2}{c}{$a^n$} \\\midrule
    	$\chi=\Theta(\cdot)$ & $n^{-b}$ &       $b^{-n}$       & $n^{-b}$ &       $b^{-n}$       \\ \midrule
    	   $R(1+\chi,1)=\Omega(\cdot)$ & $\begin{cases}
    	   	2^{-T\chi} & \mathrm{if\ } a>b  \\
    	   	1          & \mathrm{otherwise}
    	   \end{cases} $ & $1$ & $2^{-T\chi}$      & 
    	   $\begin{cases}
    	       	   	2^{-T\chi} & \mathrm{if\ } a>b  \\
    	       	   	1          & \mathrm{otherwise}
    	       	   \end{cases}$ \\ \bottomrule
    \end{tabular}
    \caption{Overview over asymptotic scaling of the achievable effective coupling ratio $R(b,1)$ for $b=1+\chi$ defined in \cref{eq:vary-only-b}, as proven in \cref{lem:noise-amplification}.
    The first two rows show the asymptotic behaviour of $T$ and $\chi$ in the system size $n$---either power-law or exponential for $a,b\ge1$; the last row shows the resulting asymptotic scaling of $R(1+\chi,1)$.
    All Landau symbols are taken with respect to the limit $n\longrightarrow\infty$.}
    \label{tab:noise-amplification}
\end{table}

\begin{lemma}\label{lem:noise-amplification}
    Let the setup be as in \cref{cor:H-bound-gs-ex}, with $\Hb$ defined as in \cref{eq:H-bound}.
    Let $\chi:\field N \longrightarrow (1,\infty)$ and $T:\field N \longrightarrow\field N$.
    Denote with $R(b,1)$ the relative achievable scaling ratio for some bias $b\ge 1$ as defined in \cref{eq:vary-only-b}.
    Then the asymptotic ratio with respect to $n\longrightarrow\infty$ is given by $R(1+\chi,1)=\BigO(f(T,\chi))$ where
    \[
        f(T,\chi) = \begin{cases}
        2^{-T\chi} & \raisebox{-\normalbaselineskip+.5\jot}{\shortstack[l]{%
        $T=\Theta(n^a) \land \chi=\Theta(n^{-b}) \land a>b\ge1$ or\\
        $T=\Theta(a^n) \land \chi=\Theta(b^{-n}) \land a>b\ge1$ or\\
        $T=\Theta(a^n) \land \chi=\Theta(n^{-b})$,
        }} \\[.7cm]
        1 & \raisebox{-\normalbaselineskip+.5\jot}{\shortstack[l]{%
                $T=\Theta(n^a) \land \chi=\Theta(n^{-b}) \land b\ge a\ge1$ or\\
                $T=\Theta(a^n) \land \chi=\Theta(b^{-n}) \land b\ge a\ge1$ or\\
                $T=\Theta(n^a) \land \chi=\Theta(b^{-n})$.
                }}
        \end{cases}
    \]
\end{lemma}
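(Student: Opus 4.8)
The entire statement reduces to analysing the closed form \cref{eq:vary-only-b}. Specialising it to $b=1+\chi$ and $b'=1$ gives
\[
R(1+\chi,1)=\frac{(1+\chi)(3+\chi)}{3}\left(\frac{2}{2+\chi}\right)^{2T},
\]
whose prefactor equals $1+\BigO(\chi)$ and in any case lies in $[1,5]$ for $\chi\in[0,2]$, so that all of the asymptotics is carried by the factor $\left(2/(2+\chi)\right)^{2T}$. Working from \cref{eq:vary-only-b} is legitimate because, by \cref{cor:H-bound-gs}, the correction to the overlap is $\BigO((b+1)^{-MT})$, which for the fixed $M>3$ is of strictly smaller order than the leading term $\Theta((b+1)^{-2T})$, uniformly over $b\in(1,1+\chi)$.

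The single estimate I need is
\[
\left(\frac{2}{2+\chi}\right)^{2T}=\bigl(1+\tfrac\chi2\bigr)^{-2T}\le 2^{-T\chi}\qquad\text{for }\chi\in[0,2],
\]
which follows immediately from $\ln\bigl(1+\tfrac\chi2\bigr)\ge\tfrac{\ln2}{2}\,\chi$ on $[0,2]$ — itself just concavity of $\log$, the right-hand side being the chord of $\chi\mapsto\ln(1+\chi/2)$ joining its values at $\chi=0$ and $\chi=2$. Together with the bounded prefactor this yields $R(1+\chi,1)=\BigO(2^{-T\chi})$ in \emph{every} regime listed, since each of them forces $\chi\to 0$ and hence eventually $\chi\le2$. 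As the lemma asserts only a $\BigO$ bound, no matching lower bound is needed; were one wanted, $\ln(1+\chi/2)\le\chi/2$ would give $\left(2/(2+\chi)\right)^{2T}\ge e^{-T\chi}=\Omega(2^{-cT\chi})$ for a constant $c>0$.

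It then remains to evaluate $2^{-T\chi}$ — and hence recognise which displayed branch of $f(T,\chi)$ applies — by computing the product $T\chi$ for each of the four combinations of $T$- and $\chi$-growth that occur. For $T=\Theta(n^{a})$, $\chi=\Theta(n^{-b})$ one has $T\chi=\Theta(n^{a-b})$, which diverges iff $a>b$ (giving the genuinely decaying bound $2^{-T\chi}$) and is $\BigO(1)$ otherwise, i.e.\ $f=1$. For $T=\Theta(a^{n})$, $\chi=\Theta(b^{-n})$ one gets $T\chi=\Theta((a/b)^{n})$, again diverging iff $a>b$. For $T=\Theta(a^{n})$, $\chi=\Theta(n^{-b})$ the product $T\chi=\Theta(a^{n}/n^{b})$ always diverges, so the $2^{-T\chi}$ branch always applies; and for $T=\Theta(n^{a})$, $\chi=\Theta(b^{-n})$ one has $T\chi=\Theta(n^{a}/b^{n})\to 0$, so $2^{-T\chi}=\Theta(1)$ and $f=1$. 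Reading these off reproduces the claimed values.

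I expect the work here to be bookkeeping rather than any hard inequality: the one thing to keep explicit is that the clean estimate $\left(2/(2+\chi)\right)^{2T}\le 2^{-T\chi}$ requires $\chi\le2$, which is harmless only because every regime in the statement drives $\chi$ eventually below any fixed threshold, and the one subtlety to get right is the borderline subcases ($a=b$, or $b=1$ in an exponential regime) where $T\chi$ stays bounded without tending to $0$ — there the bound $\BigO(2^{-T\chi})$ is still valid but coincides with $\BigO(1)$, which is exactly what the case split records.
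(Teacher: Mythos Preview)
Your argument is correct and follows essentially the same route as the paper: write out $R(1+\chi,1)=\frac{(1+\chi)(3+\chi)}{3}\bigl(\frac{2}{2+\chi}\bigr)^{2T}$, observe the prefactor is harmless, and reduce everything to the behaviour of $T\chi$. You are in fact more explicit than the paper, which merely says ``an explicit calculation shows $R(1+\chi,1)=\BigO(2^{-n^{a-b}})=\BigO(2^{-T\chi})$'' for one case and leaves the rest to ``a similar fashion''; your concavity-of-$\log$ inequality $\ln(1+\chi/2)\ge\tfrac{\ln 2}{2}\chi$ on $[0,2]$ and the systematic case split supply exactly the missing details.
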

\begin{proof}
We first note
\[
    R(1+\chi,1) = \frac{b(b+2)}{3}\left(\frac{2}{(b+1)}\right)^{2T} = \frac{(1+\chi)(3+\chi)}{3} \left(\frac{2}{2+\chi} \right)^{2T}.
\]
If both $T$ and $\chi$ are power-laws, i.e.\ $T=\Theta(n^a)$, $\chi=\Theta(n^{-b})$ for $a>b>1$, then an explicit calculation shows $R(1+\chi,1) = \BigO(2^{-n^{a-b}}) = \BigO(2^{-T\chi})$.
The other cases follow in a similar fashion.
\end{proof}
An overview over the asymptotic scalings in \cref{lem:noise-amplification} can be found in \cref{tab:noise-amplification}.
One immediate corollary is the following.
\begin{corollary}\label{cor:amplification}
Take any \QMA or \QMAEXP-hard \lham problem $\op H_0 = \sum_{i=1}^N\op h_i$ (e.g.\ the construction used to prove \cref{th:cor-1}) on an $n$-partite Hilbert space $\H=(\field{C^d})^{\otimes n}$ with $N=\poly n$ local terms,
and such that $\max_{i,j\in \{1,\ldots,N\}}\{ \| \op h_i \| / \| \op h_j \| \} = \BigO(\exp(\poly n))$.
Let $\delta>0$.
Then for any $\chi=1/\poly n$,
there exists a \lham variant $\op H' = \sum_{i=1}^{N'} \op q_i$ on an $n'$-partite Hilbert space $\H' = (\field C^{d'})^{\otimes n'}$ with $n'=\poly n$, such that
\begin{enumerate}
\item each local term $\op q_i$ has norm $\| \op q_i \| \in \{1\} \cup [n'^{\delta},(1+\chi)n'^{\delta}]$,
\item the variant has a promise gap $1/\poly n'$,
\item it is \QMA (\QMAEXP) hard if $\op H_0$ is \QMA (\QMAEXP) hard, and
\item if the original variant was $2$-local, then $d'=\max\{ 3, d\}$; otherwise $d'=d$.
\end{enumerate}
\end{corollary}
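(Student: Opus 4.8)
The plan is to re-run the gadget behind \cref{th:main} with the clock registers in a uniformly-aligned, narrow-bias form: a common chain length $MT$ and a common coupling site (as in the remark after \cref{th:cor-1}), with every bias $b_i$ confined to the window $(1,1+\chi)$ rather than to $[1,3]$. The first step is to observe that the error guarantee of \cref{th:main} is untouched by this restriction. Inspecting \cref{sec:series-expansion,sec:proof}, the estimates \cref{eq:C-bound} and \cref{eq:epsilon} and the invocation of \cref{th:pert-2} use only $b_i\ge 1$, via \cref{lem:H-bound-gs-1,lem:H-bound-gs-2,cor:H-bound-T-overlap}; the bound $b_i\le 3$ entered solely in \cref{lem:any-coupling-strength,cor:any-coupling-strength} to realise the overlaps $p_{0,i}^2=\|\op h_i\|/(200\,r)$ \emph{exactly}. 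Hence the entire approximation---relative error $\BigO(N^{-\delta_0})$ for the internal parameter $\delta_0>0$ of \cref{th:main}---carries over (after the usual normalisation of the $\op h_i$ to unit norm squaring to identity), and all that is left is to reproduce the overlap-tuning step inside the crippled window.

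For that I would invoke \cref{lem:noise-amplification} (equivalently \cref{eq:vary-only-b}): at fixed $T$, sweeping $b$ over $(1,1+\chi)$ moves the induced coupling through a dynamic range $R(1+\chi,1)^{-1}=\Theta(2^{T\chi})$, and combining this continuous knob with the discrete choice $T_i'\le T$ from \cref{cor:H-bound-gs-ex}, which rescales the overlap by a bounded factor per unit step, an intermediate-value argument exactly as in \cref{lem:any-coupling-strength,cor:any-coupling-strength} realises every target overlap in $(0,1/100)$. Since the spread of the target overlaps for $\op H_0=\sum_i\op h_i$ is at most $r=\BigO(\exp(\poly n))$, it suffices that $2^{T\chi}=\Omega(r)$; with $\chi=1/\poly n$ this holds once $T=\poly n$ is chosen with $T\chi$ a sufficiently large power of $n$, so $T$ remains polynomial. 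Then by \cref{rem:clock-times} each clock $\op H\cl^{(i)}$ can be put on a constant-local-dimension, geometrically local chain of length $\poly n$ with $q=3$ (the $r=\BigO(\exp(\poly n))$ branch of \cref{th:main}); tensoring with $(\field C^d)^{\otimes n}$ gives $d'=\max\{3,d\}$ when $\op H_0$ is $2$-local and $d'=d$ otherwise, at the cost of at most one unit of locality.

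It remains to fix the term norms and the promise gap by rescaling and padding. The transition terms $(\ket t-\ket{t+1})(\bra t-\bra{t+1})$ of $\op H\cl$ have norm $2$, and---after multiplying any legality-enforcing penalty terms of the chain realisation by $2$, which only strengthens them---so do all of its terms except the single boundary bonus $-(b_i+1)\ketbra 0$ (cf.\ \cref{eq:biased-clock-ham}), whose norm $b_i+1$ already lies in $(2,2+\chi)$; scaling $\op H\cl$ by $C:=n'^\delta/2$ therefore places every clock term in $[n'^\delta,(1+\chi)n'^\delta]$, while the couplings $\op h_i\otimes\ketbra{T_i}_i$ retain norm $1$. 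Adjoining a non-interacting block of norm-$1$ sites---physically inert, and treated separately per the reduction-to-components remark in \cref{th:main}---lets us take $n'$ and $N'$ to be any polynomial in $n$, which simultaneously secures $C=\Omega(N^{2+\delta_0})$ for the Feynman-Dyson convergence and, exactly as in the proof of \cref{th:cor-1} and the discussion following it (where the parameter is chosen against the embedded problem's promise gap and the system padded to relabel the energy scale), leaves a promise gap $1/\poly n'$. The resulting map $\op H_0\mapsto\op H'$ is polynomial-time and changes the ground-state energy only by an affine rescaling plus an error below the gap, so \lham on $\op H'$ inherits \QMA- (respectively \QMAEXP-) hardness. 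The step I expect to be the crux is the second one---certifying that the narrow bias window $(1,1+\chi)$ with $\chi=1/\poly n$ still spans every required coupling ratio up to $\exp(\poly n)$---which is precisely the content of \cref{lem:noise-amplification}.
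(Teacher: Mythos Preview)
Your proposal is correct and follows essentially the same route as the paper's own proof: apply \cref{th:main} with the biases $b_i$ confined to $(1,1+\chi)$, invoke \cref{lem:noise-amplification} to certify that this narrow window still realises the required range of coupling ratios once $T=\poly n$ is chosen with $\deg T>\deg\chi^{-1}$, and then pad and rescale as in \cref{th:cor-1} to obtain the norm profile $\{1\}\cup[n'^\delta,(1+\chi)n'^\delta]$ and the $1/\poly n'$ promise gap. Your version is in fact more explicit than the paper's in two useful respects: you separate out which estimates in \cref{sec:series-expansion,sec:proof} depend on $b_i\ge1$ versus $b_i\le3$ (only the overlap-tuning lemma uses the upper bound), and you carry out the norm bookkeeping for the clock terms, showing that the transition terms and the bonus $-(b_i+1)\ketbra0$ land in the claimed window after scaling by $C=n'^\delta/2$.
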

\begin{proof}
We apply \cref{th:main}, but restrict the $b_i$ in $\op H\cl^{(i)}$ to lie within the interval $[1,1+\chi]$ for $\chi=1/\poly n$, for which by \cref{lem:noise-amplification} it suffices to choose $T=\poly n$ in such a way that the polynomial degrees of $T$ and $\chi^{-1}$ satisfy $\deg(T) > \deg(\chi^{-1})$.
By the same argument as in \cref{th:cor-1} we can further restrict the scaling constant $C$ present in \cref{eq:H-approx} to scale as $n'^\delta$, whereby the system is padded to size $n'=\poly n$.
Finally, by the definition of \cref{def:lham}, all $\| \op h_i \| = \poly n$; the resulting scaling of the simulated low-energy subspace $\Pi_-\op H'\Pi_-$ in \cref{def:approximate} is thus a polynomial, which means that the variant retains a $1/\poly n'$ promise gap.
The first three claims follow.
The last claim follows from \cref{th:main} in case $\op H_0$ was $2$-local; otherwise (which means the case $k$-local for $k>2$, as a $1$-local Hamiltonian cannot be \QMA or \QMAEXP-hard) a similar construction as in the proof of \cref{th:cor-1} for $\op H\cl^{(i)}$ can be used.
The last claim follows.
\end{proof}
\noindent
We emphasize that while the \lham problem with an exponentially small promise gap is already PSPACE-complete \cite{Fefferman2016}, the small promise gap in the reduction does not stem from an exponentially small penalty term, but because of the embedding of a PreciseQMA-hard computation. It is thus doubtful whether there is an analogue of \cref{cor:amplification} that holds for the PSPACE case.

We know there exist \QMA-hard \lham constructions with terms that all have non-varying $\BigO(1)$ weights in the system size, albeit few of them are translationally-invariant; and if they are, the local dimension is large, or the construction is contrived \cite{Gottesman2009,Bausch2016}.
\Cref{cor:amplification} is interesting for this precise reason: given a Hamiltonian with wildly-varying interaction strengths, there exists another Hamiltonian where each local term has almost zero variation in strength from site to site (apart from the two energy scales; but they apply uniformly throughout the system), and with the same hardness properties.
We thus conjecture that for any construction where translational invariance is hard to obtain, ``almost'' translational invariant models can be constructed from them, with compatible gap scaling.
This, of course, comes at the expense of changing the interaction set to allow for $\Hb$ from \cref{eq:H-bound} to be included, and modifying the interaction graph---if only by incrementing the spatial interaction topology by at most one dimension, as e.g.\ done in \cref{th:cor-1} from a two- to a three-dimensional many-body system.

As a final remark: in essence, one could achieve a similar effect as in \cref{cor:amplification} by writing a Hamiltonian $\sum_{i=1}^N \op h_i-\sum_{i=1}^N(1+\chi_i)\op h_i$.
This would be an unfair comparison though: if we expand such a Hamiltonian in a Pauli basis, there \emph{will} be small constants of $\BigO(\chi)\ll 1$; the large relative energy variations of order one are relevant for the complexity characteristics.
\Cref{cor:amplification}, on the other hand, only introduces a single, uniform energy scale, with negligible relative strength variations, even when expressed in the same Pauli basis.

\subsection{Hamiltonians with Hybrid Geared Asymptotics}
One curious feature of our construction is that it allows scaling the interaction strength of a coupling with a spatial dimension of the system at hand.
We phrase two theorems.

\begin{theorem}\label{th:cor-2}
Let $\delta>0$. There exists a translationally-invariant $2$-local Hamiltonian $\op H_{L,M}=\sum_i\op h_i$ on a square lattice of size $L\times M$ with local Hilbert space $\H$ and with open boundary conditions, for which we can define one-parameter families of Hamiltonians $S_L:=\{ \op H_{L,M(L)} \}$ and a polynomial $p(L)$, such that the following holds.
\begin{enumerate}
\item All 1- and 2-local terms either have norm 1, or norm $\Theta(L^{2+\delta})$.
\item The local spin dimension is $\le150$.
\item If $M(L)=\BigO(\log(\log(L)))$, the \lham problem for $S_L$ with promise gap $1/p(L)$ is \QMAEXP-complete.
\item If $M(L)=\Omega(\log(L))$, the \lham problem for $S_L$ is trivial for any $1/\poly$ promise gap.
\end{enumerate}
\end{theorem}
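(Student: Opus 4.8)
The plan is to instantiate \cref{th:main} on a translationally-invariant \QMAEXP-hard base Hamiltonian, but to route its single strong coupling through a vertical ``clock'' whose \emph{logical} length is exponential in the lattice width $M$; the one free scaling knob of \cref{th:main} then becomes a dial controlled entirely by the geared limit $M=M(L)$. Concretely, I would start from a known constant-local-dimension, translationally-invariant, nearest-neighbour \QMAEXP-hard $1$-dimensional Hamiltonian $\op H_0=\sum_{i=1}^{N}r_i\op h_i$ with $N=\Theta(L)$, a $1/\poly(L)$ promise gap and penalty weights $r_i=\poly(L)$ \cite{Gottesman2009,Bausch2016}, laid along the horizontal ($L$) direction. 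Along each vertical ($M$) column I attach the biased clock Hamiltonian $\op H\cl^{(i)}$ of \cref{eq:biased-clock-ham}, the tiling device \cref{eq:tiling-ham} and a single coupling of weight $C=\Theta(N^{2+\delta})=\Theta(L^{2+\delta})$, in exactly the configuration \cref{eq:H-approx} used to prove \cref{th:main}; I realise $\op H\cl^{(i)}$ with constant local dimension via the variant of \cref{rem:clock-times} in which a physical chain of length $s=\Theta(M)$ carries a logical clock of length $g(M):=B^{s-3}$ — exponential in $M$ — and couple $\op h_i$ to the last logical site $\ketbra{g(M)}$. Exact $2$-dimensional translational invariance is obtained by using a common $(M,s)$ and varying only the biases $b_i\in(1,1+\chi)$, as in \cref{eq:vary-only-b} and \cref{cor:any-coupling-strength}. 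Properties~1 and~2 are then immediate: every $1$- and $2$-local term has norm $1$ or $\Theta(L^{2+\delta})$ by construction, and the local dimension is the product of the (constant) computational-layer dimension with the (constant, bounded by the figures in \cref{rem:clock-times}) clock-plus-tile dimension, which one checks closes at $\le 150$.

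By the self-energy expansion of \cref{sec:series-expansion} together with \cref{th:pert-2}, after normalising by $\Theta(L^{2+\delta})$ the low-energy subspace of $\op H_{L,M}$ reproduces $\op H_0$ up to operator-norm error $\BigO(N^{-\delta})$, but now with effective weights $p_{0,i}^2=|\braket{\Psi_{0,i}}{g(M)}|^2\asymp(b_i+1)^{-2g(M)}$ by \cref{cor:H-bound-gs}; equivalently, the entire encoded computation appears in the spectrum of $\op H_{L,M}$ rescaled by a global factor $\exp(-\Theta(g(M)))$, with a band gap $\Theta(L^{2+\delta})$ above it.

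The two regimes now fall out of the size of $g(M)$. If $M(L)=\BigO(\log\log L)$ then $g(M)=B^{\BigO(\log\log L)}=\BigO(\log L)$, so the rescaling factor is $\ge L^{-\BigO(1)}$ and the $1/\poly(L)$ promise gap of $\op H_0$ survives as a $1/\poly(L)$ gap of the renormalised low-energy spectrum, provided $\delta$ and the polynomial degrees are fixed so that the \cref{th:pert-2} error, of order $\BigO(N^{-\delta}\poly(L))+\BigO(\poly(L)/L^{2+\delta})$, is of strictly lower order; calling $p(L)$ the resulting polynomial, \QMAEXP-hardness follows from the Kitaev-type reduction through $\op H_0$ (\cref{def:qma,def:lham}), and \QMAEXP-membership is routine, since the instance is the $\BigO(\log(LM))$-bit pair $(L,M(L))$, the witness has $\poly(LM)$ qubits, and a \BQEXP verifier can phase-estimate $\op H_{L,M}$ — whose couplings are all $\le\poly(LM)$ — to precision $1/p(L)$. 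If instead $M(L)=\Omega(\log L)$ then $g(M)=B^{\Omega(\log L)}=L^{\Omega(1)}$, so the rescaling factor $\exp(-\Theta(L^{\Omega(1)}))$ is smaller than $1/r(L)$ for every polynomial $r$; hence for any fixed promise-gap polynomial $p$ the computational block shifts $\lmin(\op H_{L,M})$ by at most $\poly(L)\exp(-\Theta(L^{\Omega(1)}))=\smallO(1/p(L))$ away from the ground energy of the decoupled Hamiltonian $\1\otimes\1\otimes\op H\tile+C\1\otimes\op H\cl\otimes\1$, which by \cref{eq:tiling-gs} and \cref{lem:H-bound-gs-1} is $-N$ plus the (shifted to zero) clock ground energies, a number computable in time $\poly(L,M)$; comparing it with $\alpha,\beta$ decides \lham deterministically in polynomial time, i.e.\ the problem is trivial.

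The main obstacle is the quantitative bookkeeping that keeps the two regimes genuinely separate: the doubly-exponential gap between $g(M)=\BigO(\log L)$ and $g(M)=L^{\Omega(1)}$ has to be converted into a single polynomial $p$ valid uniformly over the whole $\BigO(\log\log L)$-family on one side, and into a super-polynomially small computational contribution on the other. This forces one to keep $g(M)$ strictly $\BigO(\log L)$ — not merely sub-polynomial — in the hard regime, because the \cref{th:pert-2} error bound degrades as the effective weights $p_{0,i}^2$ shrink. A further subtlety, inherited from \cref{th:cor-1} and \cref{lem:noise-amplification}, is to certify that the bias-only variation $b_i\in(1,1+\chi)$ still realises all required effective weights now that the coupling site $g(M)$ itself depends on $L$; and, exactly as in the proof of \cref{th:cor-1}, one must check that the computational layer and the vertical clock-plus-tile block can be made mutually commuting (block-diagonal in the tiling ground space, \cref{sec:restriction}), with the offsets of $\ketbra{g(M)}$ aligned to the computational row, all while the combined local dimension stays $\le 150$.
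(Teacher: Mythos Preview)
Your plan routes the \emph{entire} base Hamiltonian $\op H_0$ through the machinery of \cref{th:main}, but this breaks the $2$-locality the theorem demands: coupling each $2$-local term $\op h_i$ to a clock projector $\ketbra{g(M)}$ sitting on a separate vertical site produces a $3$-local interaction---exactly the $+1$ locality overhead that \cref{th:main} carries. The paper does not invoke \cref{th:main} here at all. It couples \emph{only the $1$-local output-penalty term} of the \cite{Bausch2016} history-state Hamiltonian to the vertical bound-state clock (with a single fixed bias $b=2$); every other propagation and constraint term is left untouched and remains $2$-local. Because the history-state structure is never perturbed, no Feynman--Dyson series for the bulk of $\op H_0$, no tiling Hamiltonian, and no bias variation are required: the sole effect of the $M$-direction is to rescale that one penalty by $|\braket{\Psi_0}{T}|^2\propto 2^{-T}$ with $T\propto 2^M$, so the \yes/\no separation persists when $M=\BigO(\log\log L)$ and collapses when $M=\Omega(\log L)$.

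Two further inconsistencies in your write-up. The phrase ``varying only the biases $b_i\in(1,1+\chi)$'' is precisely what \emph{breaks} translational invariance rather than secures it; the TI base construction already has all weights equal, so no bias variation is needed, and invoking \cref{cor:any-coupling-strength} here contradicts the TI claim. And the bound $\le150$ in the paper arises from a single extra flag qubit ($2\times 75$); adding both a tiling qutrit and a separate clock register at every site, as your plan does, would push the local dimension well beyond that. Strip out the tiling and the bias variation and restrict the clock coupling to the $1$-local output penalty alone, and your argument reduces to the paper's.
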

\begin{proof}
We take the \QMAEXP-complete \lham variant from \cite{Bausch2016}, which it is a translationally-invariant Hamiltonian with $1$-local interactions $\op p$ and $2$-local nearest-neighbour interactions $\op w$, each of unit norm, acting on spins with local Hilbert space $\H$ of dimension $\dim\H\le 75$, and with open boundary conditions.
Starting from this spin chain of length $L$---i.e.\ with Hilbert space $\H^{\otimes L}$---we extend it to form a square lattice of spins of side length $L\times M$ (where $M\in\field N$ is specified later), with qudits of dimension $2\dim\H$ located at the lattice vertices; we identify this new local Hilbert space with $\field C^2\otimes\H$: the extra $\field C^2$ subspace allows us to encode an extra bit of information locally at each lattice vertex.

Following a construction by \cite{Gottesman2009}, we first define the following one- and two-local interaction terms acting on neighbouring spins in the $M$ direction of the lattice---which we call a row:
\[
	\op h_1 := -\ketbra{0} \otimes \1
	\quad\text{and}\quad
	\op h_2 := \left( \ketbra{01} + \ketbra{10} \right)\otimes \1.
\]
Within each row, it is straightforward to check that these coupling terms create a unique product ground state $\ket{r_0} := \ket 0 \otimes \ket 0^{\otimes {M-1}}$; the overall ground state so far is then $\ket{r_0}^{\otimes L}$.
In the ground state, there is thus precisely one column on the lattice where all spins have flag state $\ket 0$, and all other sites across the lattice are in state $\ket 1$; and it is clear that this ground state is unique, and has a spectral gap of $1$ to the next eigenstate above it.

We now take the local interactions of the \QMAEXP-complete \lham variant, $\op w$ and $\op p$, to only act non-trivially if there is a zero flag below, i.e.\ via $\ketbra{0}\otimes\op w$, and analogously for $\op p$.
Similarly, we define a translationally-invariant bound state Hamiltonian, i.e.\ by setting $\ketbra{1}\otimes\Hb$ for $b=2$; note that for any specific column index, all the latter terms commute, and that the dimension of $\H$ ($\dim\H \ge 42$ by \cite[Th.~60]{Bausch2016}) is more than enough to implement a binary counter using only $2$-local terms (see \cite{Bausch2016}), yielding $T=2^{M-3}$ by \cref{rem:clock-times}; this includes a locally-identifiable final clock state $T$ on which we wish to condition in due course.

Now, the on-site interaction $\op p$ contains a so-called output penalty term $\op p'$, which is used to inflict an energy penalty on invalid computation outcomes; this is what pushes the ground state energy of the history state Hamiltonian up by a $1/\poly L$ amount in case of an embedded \no-instance.
We couple this penalty term to $\ketbra T$ in the biased clock Hamiltonian's space as $\op p' := \op p \otimes \ketbra T$; this term is originally $1$-local, so we do not increase the overall Hamiltonian's locality.
All other terms will remain un-coupled.

Claim 1 and 2 then follow by construction.
The consequence of scaling the system in dimension $M$ is to reduce the effect of the output penalty; by \cref{lem:H-bound-gs-2} and for $b=2$, the magnitude of the scaling will be $\propto 1/2^T$.
Since $T\propto 2^{M}$, the suppression of the error term is doubly-exponential in $M$.
It is clear that if $M=\BigO(\log(\log(L))$, the penalty term is only polynomially-suppressed, and the problem remains \QMA hard---in particular, there exists a polynomial $p$ such that
with a promise gap closing as $1/p(n)$, where $n=L\times M$ is the system size, the \yes and \no-instances of the embedded computation lie above resp.\ below the corresponding thresholds.

On the other hand, for any polynomially-closing promise gap an energy difference of $\BigO(1/\exp n)$ will essentially be invisible; for $M=\Omega(\log L)$, all embedded computational instances---irrespective of their outputs---are thus jointly either \yes or \no instances in the \lham problem.
The claim follows.
\end{proof}

It is clear that variants of this effect are easily constructed, by varying the bound state Hamiltonian $\Hb$, or by changing which terms couple to its final state.
We emphasize that the threshold $M=\BigO(\log(\log(L)))$ is, again, an arbitrary choice, and we can e.g.\ set it at $M=\BigO(\log(L))$, implying that different directions of geared limits have distinct complexity-theoretic behaviour, as the following corollary shows.
\begin{corollary}
\Cref{th:cor-2} holds also for a choice of $1/\exp n$ promise gap in the system size $n=L\times M$: the distinction then is trivial vs.\ \BQEXPSPACE-complete, for geared limits $M(L)=\BigO(\log L)$ vs $M(L)=\Omega(L)$.
\end{corollary}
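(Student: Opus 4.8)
The plan is to re-run the proof of \cref{th:cor-2} almost verbatim, changing only which computation is embedded and re-balancing the scales against an exponentially small promise gap.

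First I would replace the \QMAEXP-complete history-state Hamiltonian sitting in the flag-$\ket 0$ column by the exponentially-precise circuit-to-Hamiltonian construction underlying the \PSPACE-completeness of \lham with an exponentially small gap \cite{Fefferman2016}: that construction faithfully tracks a verifier with acceptance/rejection probabilities $1/2\pm 2^{-p}$, $p=\poly$ in the runtime, so that---after the usual $\poly$-sized scaling of its propagation terms---its \yes/\no ground-energy separation is $\Theta(2^{-p})$ rather than $\Theta(1/\poly)$. Carried out inside the translationally-invariant framework of \cite{Bausch2016,Gottesman2009}, which embeds any $\poly(L)$-time computation on the length-$L$ column of the $L\times M$ lattice, this produces a translationally-invariant $2$-local Hamiltonian of exactly the shape built in \cref{th:cor-2}---flag terms $\op h_1,\op h_2$, the biased clock $\Hb$ with $b=2$, $T=2^{M-3}$, carrying the single strong scale $C=\Theta(L^{2+\delta})$ on the flag-$\ket 1$ sites (bounded local dimension by \cref{rem:clock-times}), and the new history-state Hamiltonian whose output penalty is coupled as $\op p'=\op p\otimes\ketbra{T}$---except that the embedded computation is now \BQEXPSPACE-powerful: by the Fefferman--Lin correspondence between exponentially-precise poly-time quantum computation and polynomial space, a $\poly(L)$-space verifier captures $\poly(L)$ space, which against an instance of bit-length $\Theta(\log L)$ is exponential space, just as \QMAEXP shifts \QMA (and analogously to \BQPSPACE$=$\PSPACE \cite{Watrous2003} one exponential level up). Exactly as in \cref{th:cor-2}, the strong scale $C$ keeps the relevant ground state a history state $\otimes\ket{\Psi_0}$ (all competing states cost $\Theta(C)$ or lie $\gg$ the penalty scale above it), with energy $\lambda_0+\langle\op p\rangle_{\text{opt.\ witness}}\,|\braket{\Psi_0}{T}|^2$ up to terms whose variation between \yes and \no instances is a further $1/\poly(L)$ factor below the already exponentially small penalty, so the \yes/\no separation survives.

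Next I would redo the geared-limit bookkeeping of \cref{th:cor-2} with $n=L\cdot M$. By \cref{lem:H-bound-gs-1,lem:H-bound-gs-2} the penalty is damped by $|\braket{\Psi_0}{T}|^2=\Theta(9^{-T})=\Theta(9^{-2^{M-3}})$, doubly exponentially in $M$, so the \yes/\no separation of $S_L$ is $\Theta(2^{-\poly(L)}\,9^{-2^{M-3}})$. If $M(L)=\BigO(\log L)$ then $2^{M-3}=\poly(L)$, giving a separation $2^{-\poly(L)}=2^{-\poly(n)}$; tuning $p$ and the constant in $M=\Theta(\log L)$ brings it to at least $1/\exp(n)$, so the reduction succeeds with the stated $1/\exp(n)$-closing promise gap and the problem is \BQEXPSPACE-hard, while membership is the exponential-space analogue of the \PSPACE-membership of exponentially-gapped \lham---estimating $\lmin$ of the $\poly(n)$-site lattice Hamiltonian to precision $1/\exp(n)$ needs only $\poly(n)$ space, i.e.\ space exponential in the instance bit-length. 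If instead $M(L)=\Omega(L)$ then $2^{M-3}=\exp(\Omega(L))$, so the separation is at most $9^{-\exp(\Omega(L))}$---doubly exponentially smaller than the promise gap $1/\exp(n)=1/\exp(\Omega(L^2))$. Since the penalty is positive semidefinite, the ground energy then lies within that negligible distance of the penalty-free value $\lambda_0=\lambda_0^{\mathrm{flag}}-\tfrac43 C$ (using $b=2$, with $\lmin(\Hb)=-\tfrac43$ up to a correction far below the needed precision by \cref{lem:H-bound-gs-1}); every promised instance is therefore decided in polynomial time by comparing the efficiently computable $\lambda_0$ with $\alpha,\beta$, i.e.\ the problem is trivial. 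As in \cref{th:cor-2}, the two regimes do not meet, and the window $\omega(\log L)=M(L)=o(L)$ is left open.

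The main obstacle is the first step: one must check that the delicate error control of \cite{Fefferman2016}---whereby a $\poly$-norm output penalty yields an \emph{exponentially}, not merely $1/\poly$, small \yes/\no gap---survives the translational invariance and constant local dimension of the \cite{Bausch2016,Gottesman2009} machinery, and that routing the penalty through $\ketbra{T}$ of a strongly-coupled $\Hb$ introduces no $1/\poly$-sized spurious contribution to that gap. The remaining scale arithmetic is routine.
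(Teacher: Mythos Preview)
Your proposal is correct and follows essentially the same route as the paper's own proof: swap the \QMAEXP-hard embedded computation for a PreciseQMA\textsubscript{EXP}/\BQEXPSPACE one via the Fefferman--Lin precise \lham construction, then rerun the scale-balancing of \cref{th:cor-2} against a $1/\exp n$ promise gap to obtain the new thresholds $M(L)=\BigO(\log L)$ versus $M(L)=\Omega(L)$. The paper's proof is considerably terser---it simply asserts the analogous PreciseQMA\textsubscript{EXP} construction and states the two threshold regimes---whereas you spell out the damping arithmetic and the complexity-class shift explicitly; the obstacle you flag in your final paragraph (that the exponentially-precise circuit-to-Hamiltonian mapping must be checked to survive the translationally-invariant, constant-local-dimension machinery of \cite{Bausch2016,Gottesman2009} without introducing a spurious $1/\poly$ contribution) is real and is precisely the point the paper glosses over with the phrase ``which can be constructed for a PreciseQMA\textsubscript{EXP} verifier, defined analogously''. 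Two minor remarks: the overlap estimate $|\braket{\Psi_0}{T}|^2=\Theta(9^{-T})$ is \cref{cor:H-bound-gs} rather than \cref{lem:H-bound-gs-1,lem:H-bound-gs-2}; and your claim that the penalty-free ground energy $\lambda_0$ is ``efficiently computable'' is somewhat stronger than the paper's ``easily decidable''---it is plausible given the frustration-free structure of the pieces, but deserves a sentence of justification.
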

\begin{proof}
	The local Hamiltonian problem with a $1/\exp n$ promise gap is \PSPACE-hard (see \cref{def:BQEXPSPACE} and the following discussion).
	In a similar fashion as for \QMA, for a translationally-invariant Hamiltonian with a single free parameter $n$---i.e.\ the system size---the input to the computational \lham problem has size $|n|=\lceil \log n \rceil$;
	this means that polynomial space in the input size $|n|$ only requires a logarithmically-sized subspace of the available spins in the many-body system.
	As a consequence, the \emph{natural} hardness for a $1/\exp n$ promise gap and the translationally-invariant \lham problem has to be \BQEXPSPACE (which can be constructed for a PreciseQMA\textsubscript{EXP} verifier, defined analogously as in \cite{Fefferman2016}).
	
	With a smaller promise gap, we need to make the extra spatial size $M(L)$ larger than before to not inflict any penalty on the computation output.
	More precisely, if $M(L)=\Omega(L)$, then the effective penalty strength is $\BigO(1/\exp(\exp n))$, much smaller than the promise gap---the problem becomes easily decidable.
	On the other hand, if $M(L) = \BigO(\log L)$, there exists a $1/\exp n$ promise gap for which the \lham variant is \BQEXPSPACE complete.
\end{proof}

\section{Conclusion}
With the construction presented in this work we show that one can significantly reduce the unphysically-large energy variations present in various models used in Hamiltonian complexity theory.
While it does not completely remove the necessity of strong interactions, it decouples the scaling from the range of the interactions and precision to be simulated; furthermore, the approximation error introduced is relative, meaning that any requirements on an error bound present in a target Hamiltonian remain intact.
This does not come for free: we need to add ancilliary qubits, potentially increase the locality and/or the local dimension of the system.
While one could certainly claim that it is arguable which model is more physical, in the end, our work draws the tradeoff between locality, local dimension, varying interaction strenght and the overall norm of an operator from a new angle.
We also emphasize that \cref{eq:H-bound} is stoquastic---i.e.\ it only has non-positive off-diagonal matrix entries.
Stoquastic \lham variants are naturally \StoqMA-complete (see discussion below the introduction of QMA in \cref{def:qma}); our construction is thus compatible with these constructions as well.

Another shift in perspective is given with regards to translational invariance.
While most if not all many-body systems in the real world have translationally-invariant interactions, many complexity-theoretic models do not.
Modfiying them to be translationally-invariant often requires an unphysically large local dimension.
With our construction, it is conceivable that non-translationally-invariant systems can be lifted to ``almost'' invariant models.
In this way, complex systems can be arbitrarily close to true translational invariance---this becomes particularly interesting if the latter are deemed easy to solve.

To draw the connection back to the abstract, creating an effective interaction strength by coupling to an extraneous Hilbert space governed by its own Hamiltonian ($\Hb$ here) is reminiscent of how e.g.\ leptons interact with each other via a bosonic exchange particle: a coupling $\bar q \gamma^\mu W_\mu q$ as e.g.\ part of the Standard Model Lagrangian, where $W_\mu$ is the boson field that determines the strength of attraction between the leptons $\bar q$ and $q$.
Similarly, the Hubbard-Stratonovich transformation, used to linearize field equations for many-body interaction terms, replaces direct particle-particle interactions into a system of equations describing independent particles coupled to a background field \cite{Stratonovich1957,Hubbard1959}.
Even though it is not meaningful to speak of the thermodynamic limit of our gadget Hamiltonian without the introduction of additional renormalization terms to cancel out emerging infinities, such methods are well-studied in the context of field theories \cite{Peskin1987,Blumenhagen,Zwiebach2004}.
It is thus reasonable to assume that similar effective perturbation gadgets as in our work can be created to introduce---and potentially explain---diverging energy scales present in many continuous theories.

There's a list of open problems we wish to address in future work.
\begin{enumerate}
\item In the commuting case, perturbation theory can be applied in parallel without the requirement of a coupling constant that scales with the system size; this also applies to our findings.
The \lham problem is not yet completely solved for the case of commuting terms, although there is progress \cite{Schuch2011,Bravyi2003,Aharonov2011}.
While our path clock in \cref{sec:H-bound} is not commuting, in \cite{Breuckmann2013} the authors present a commuting version, which could be similarly biased as ours to present a sufficient falloff.
Can one construct a commuting variant of perturbation gadgets, applicable to the commuting \lham problem?
\item What about using e.g.\ the Schrieffer-Wolff transform instead of a Feynman-Dyson series? In \cite[Sec.\ 5]{Cao2017a}, the authors have analyzed the tradeoff between the two constructions, and found the scaling to be favourable for the Schrieffer-Wolff expansion.
And can we combine our result with the numerical optimizations of the necessary scaling as in \cite{Cao2017a}?
\item Including the tiling Hamiltonian to cancel out cross terms renders \cref{eq:self-energy-full} particularly simple; indeed, if we replaced $\eta$ with $\eta^{-1}$, Catalan numbers emerge in the sum, with an emerging link to Motzkin walks \cite{Movassagh2014}. Can we learn something from this, and e.g.\ add \cref{eq:self-energy-full} up exactly?
\end{enumerate}

\section*{Acknowledgements}
We are especially grateful for helpful discussions with Stephen Piddock, Elizabeth Crosson, and Graeme Smith, and for the opportunity to present preliminary versions of this result at Caltech and CU Boulder.
We are further thankful for the useful comments obtained from the anonymous reviewers at QIP.
J.\,B.\ acknowledges support from the Draper's Research Fellowship at Pembroke College.

\printbibliography
\end{document}